\newcommand{\mb}[1]{\mathbf{#1}}
\newcommand{\bb}[1]{\mathbb{#1}}
\newcommand{\m}[1]{\mathcal{#1}}
\newcommand{\wf}{\textsf{\small WAIT-FREE}}
\newcommand{\tr}{\textsf{$f$-\small RESILIENT}}
\newcommand{\local}{\textsf{\small LOCAL}}
\newcommand{\hlocal}{\textsf{\small H-LOCAL}}
\newcommand{\wflocal}{\textsf{\small WF-LOCAL}}
\newcommand{\dyn}{\textsf{\small DYN}}
\newcommand{\hdyn}{\textsf{\small H-DYN}}
\newcommand{\lcl}{\textsf{\small LCL}}
\newcommand{\ld}{\textsf{\small LD}}
\newcommand{\name}{\operatorname{name}}
\newcommand{\lab}{\operatorname{label}}
\newcommand{\id}{\operatorname{id}}
\newcommand{\val}{\operatorname{val}}
\newcommand{\alg}{\mbox{{\sc alg}}}
\newcommand{\Cl}{\operatorname{Cl}}
\newcommand{\Sk}{\operatorname{Sk}}
\newcommand{\St}{\operatorname{St}}
\newcommand{\ora}[1]{\textcolor{black}{#1}}
\title{On Extending Brandt's Speedup Theorem from LOCAL to Round-Based Full-Information Models}
\titlerunning{On Extending Brandt's Speedup Theorem}
\author{Paul Bastide}{Ecole Normale Sup\'erieure de Rennes, France.}{paul.bastide@ens-rennes.fr}{}{}
\author{Pierre Fraigniaud}{Universit\'e de Paris and CNRS, France. Additional supports from ANR Projects DESCARTES and DUCAT.}{pierre.fraigniaud@irif.fr}{}{} 
\authorrunning{P. Bastide and P. Fraigniaud}
\keywords{Local Checkability; Distributed Complexity and Computability.} 
\begin{document}
\nolinenumbers
\maketitle

\begin{abstract}
Given any task~$\Pi$, Brandt's speedup theorem (PODC 2019) provides a mechanical way to design another task~$\Pi'$ on the same input-set as~$\Pi$ such that, for any $t\geq 1$, $\Pi$ is solvable in $t$~rounds if and only if $\Pi'$ is solvable in $t-1$ rounds. The theorem applies to the anonymous variant of the \local\/ model, in graphs with sufficiently large girth, and to locally checkable labeling (\lcl) tasks. In this paper, using combinatorial topology applied to distributed computing, we dissect the construction in Brandt's speedup theorem for expressing it in the broader framework of round-based models supporting full information protocols, which includes models as different as wait-free shared-memory computing with iterated immediate snapshots, and synchronous failure-free network computing. In particular, we provide general definitions for notions such as local checkability and local independence, in our broader framework. In this way, we are able to identify the hypotheses on the computing model, and on the tasks, that are sufficient for Brandt's speedup theorem to apply. More precisely, we identify which hypotheses are sufficient for the each direction of the if-and-only-if condition. Interestingly, these hypotheses are of different natures. Our general approach enables to extend Brandt's speedup theorem from \local\/ to directed networks, to hypergraphs, to dynamic networks, and even to graphs including short cyclic dependencies between processes (i.e., the large girth condition is, to some extend,  not necessary). The theorem can even be extended to shared-memory wait-free computing. In particular, we provide new impossibility proofs for consensus and perfect renaming in 2-process systems. 
\end{abstract}

\section{Introduction}


Given a complexity or computability result established for a distributed computing model~$\m{M}_1$, several questions can be raised. Does this result hold for another model~$\m{M}_2$? What makes this result true for~$\m{M}_1$ but not for~$\m{M}_2$, or what are the  features common to $\m{M}_1$ and $\m{M}_2$ that make the result true for both models?  For instance, if a result holds in the \local\/ model~\cite{Linial92,Peleg00}, is it because the model is synchronous? Is it because processes and communication links are failure-free? Is it because the network satisfies some property (e.g., large girth)? Is it because the problem satisfies some property (e.g., local checkability)?  

A typical example is Brandt's speedup theorem~\cite{Brandt19}.
This theorem essentially provides a mechanical way to construct a task~$\Pi'$ from any task~$\Pi$, on the same input set as~$\Pi$, such that, for every $t\geq 1$, $\Pi$ is solvable in $t$~rounds in \local\/ if and only if $\Pi'$ is solvable in $t-1$ rounds in \local. This theorem is an efficient tool for designing lower bounds. Indeed, starting from a task~$\Pi$, iterating the construction results in a series of tasks $\Pi^{(r)}, r\geq 1$, such that, for every $t\geq 1$, $\Pi$ is solvable in $t$~rounds if and only if $\Pi^{(r)}$ is solvable in $t-r$~rounds. In particular,  $\Pi^{(t)}$ is solvable in zero rounds, and demonstrating that $\Pi^{(t)}$ is actually not solvable in zero rounds establishes the lower bound $t+1$ for the round-complexity of~$\Pi$. 

Brandt's speedup theorem does not directly applies to \local, but to an anonymous variant of \local\/ on graphs with sufficiently large \emph{girth}. This is because the presence of identifiers assigned to the nodes prevents \emph{local-independence} to be satisfied, where the latter is a property that is essential for establishing the theorem. It is not trivial to formally express this property, but, roughly speaking, given the radius-$(t-1)$ views of two adjacent nodes~$v$ and $v'$ in some network~$G$, the presence of identifiers results in the fact that one cannot guarantee that two independent extensions of these two views into radius-$t$ views are compatible. Indeed, one extension may include a node~$w$ provided with the same identifier as a  node~$w'$ in the other extension, with $w\neq w'$, in contradiction with the fact that each identifier must be unique in the network. Local independence also imposes to consider graphs~$G$ with girth $g>2t-1$. Indeed, in graphs with girth $g\leq 2t-1$, two independent radius-$t$  extensions of the radius-$(t-1)$ views of $v$ and $v'$ may include a same node~$w$ provided with different identifiers, or with different inputs. This would result  into two non-compatible radius-$t$ extensions in the sense that there are no instances yielding the simultaneous presence of these two radius-$t$ views at two adjacent nodes.  

Also, Brandt's speedup theorem requires the tasks at hand to be \emph{locally checkable}. This property essentially says that, given an assignment of input-output values to the nodes, the correctness of the collection of output values with respect to the collection of input values can be established by merely inspecting the  values of each node and of its neighbors in the network.  In other words, a task is locally checkable if the correctness of an assignment of values to the nodes is defined as the conjunction of the local correctness of this assignment, where ``local'' refers to the closed neighborhood of each node\footnote{The notion of local checkability can be extended to neighborhood at distance~$k$ in a straightforward manner, for any fixed $k\geq 1$.}. Proper coloring and maximal independent set (MIS) are typical examples of locally checkable tasks in \local.

We can now rephrase our original questioning in the specific case of Brandt's speedup theorem: does this theorem holds in other models? For such a question to make sense, we restrict attention to models in which the notion of \emph{rounds} is defined, which naturally include synchronous models in networks with multiparty interactions, namely \emph{hypergraphs}, and synchronous models in networks that evolve with time, namely, \emph{dynamic networks}. Round-based models however include far more than just synchronous models in networks. For instance, asynchronous shared-memory computing with \emph{iterated immediate snapshots}, referred to as \wf\/ in the following, which is computationally equivalent to asynchronous read/write shared-memory computing with crash-prone processes, is round-based. The same holds for $t$-resilient computing, $0\leq t\leq n-1$, which is essentially the same as \wf, but where at most $t$~processes can crash. 

The \local\/ model has another feature. It supports \emph{full information} communication protocols. That is, whenever a process receives information from another process, one can assume that the latter has sent \emph{all} the data it acquired before the communication took place. This assumption enables the design of strong lower bounds, which hold even if the processes are not restricted in term of volume of communication. Also, the \local\/ model does not restrict the individual computational power of the processes. This assumption enables the design of unconditional lower bounds, which hold independently from  complexity or computability assumptions regarding the computing power of each individual process. All the models mentioned above support full-information protocols, and have unlimited individual computational power. 

So, making our questioning even more specific: Is there an analog of Brandt's speedup theorem for all  round-based models supporting full-information protocols with unlimited individual computational power? If not, what make the \local\/ model so special? If yes, for which models? Under which conditions? 

\subparagraph{Our Results.}

Using the framework provided by combinatorial topology applied to distributed computing, we give a general definition of speedup tasks for round-based models supporting full-information protocols (with unlimited individual computational power). Given a task $\Pi$ in the \local\/ model,  Brandt's speedup theorem constructs such a speedup task~$\Pi'=\Phi(\Pi)$. We then revisit Brandt's construction, that is, we dissect the nature of the operator~$\Phi$ transforming any task $\Pi$ into a task $\Pi'=\Phi(\Pi)$, for identifying the central assumptions allowing this construction to work in \local. They are two central assumptions: local checkability and local independence. We extend these two notions from the \local\/ model to round-based models supporting full-information protocols. We also extend Brandt's operator~$\Phi$ to all such models. We denote by~$\Phi^\star$ this extension. As a result, we are able to express a general speedup theorem, which roughly reads as follows. Let $\m{M}$ be a round-based model supporting full-information protocols, let $\Pi$ be a task, and let $t\geq 1$. The task $\Phi^\star(\Pi)$ satisfies the following:

\begin{enumerate}
    \item Assume that $\Pi$ satisfies $(t-1)$-independence with respect to~$\m{M}$. If $\Pi$ is solvable in at most $t$ rounds, then $\Phi^\star(\Pi)$ is solvable in at most $t-1$ rounds. 
    \item Assume that $\Pi$ is locally checkable in~$\m{M}$. If $\Phi^\star(\Pi)$ is solvable in at most $t-1$ rounds, then $\Pi$ is solvable in at most $t$ rounds. 
\end{enumerate}

Statement~1 guarantees that the task $\Phi^\star(\Pi)$ is at least ``1-round faster'' than the original task~$\Pi$. Statement~2 guarantees that $\Phi^\star(\Pi)$ is no more than ``1-round faster'', and in particular that $\Phi^\star(\Pi)$ is not solvable in zero rounds.  Observe that the sets of hypotheses required for each of the two statements are different, and actually they do not even intersect. This provides flexibility. For instance, given a task $\Pi$ satisfying local independence w.r.t. model~$\m{M}$, even if $\Pi$  is not locally checkable in~$\m{M}$, it may still be the case that, thanks to Statement~1, $\Phi^\star(\Pi)$ remains solvable in at least $f(t)$ rounds for some function~$f$, and that iterating $\Phi^\star$ results in a non-trivial lower bound. For instance, $f(t)=\log t$ is sufficient for deriving a lower bound $\Omega(\log^*n)$. Satisfying Statement~2 requires to limit the class of tasks under consideration to locally checkable tasks. For instance, proper-coloring and renaming are locally checkable in \local\/ and \wf, respectively, but spanning tree and consensus are not locally checkable in these respective models. 

Concretely, our general construction~$\Phi^\star$ allows us to directly extend Brandt's speedup theorem to various kinds of synchronous models in networks, including directed graphs, hypergraphs, dynamic networks, and even to graphs including short cyclic dependencies between processes (i.e., the large girth condition is, to some extend,  not necessary). Interestingly, our general construction also enables to extend Brandt's speedup theorem to asynchronous failure-prone computing models such as \wf. In particular, we provide a new impossibility proof for consensus and for perfect renaming in 2-process systems. 

\subparagraph{Related Work.} 


Roughly, the modern approach of distributed computing can be presented as the study of two large classes of computing models, one whose models are aiming at capturing issues related to \emph{time} (asynchrony, crashes, etc.)~\cite{AttiyaW04}, and another whose models are aiming at capturing issues related to \emph{space} (latency, congestion, etc.)~\cite{Peleg00}. The study of the former class puts emphasis on the study of system tasks such as leader election or consensus, while study of the latter class put emphasis on the study of graph problems such as coloring or matching. The applications of topology to the theory of distributed computing was introduced in~\cite{HerlihyS99,SaksZ00} for studying system tasks under asynchronous crash-prone computing models. This paper is inspired from~\cite{HerlihyS99}, which identified the properties of the topological deformations related to wait-free and $t$-resilient computing. Since then, topology has been extensively used in the context of asynchronous computing with crash-prone processes, for establishing lower bounds or impossibility results~\cite{AttiyaCHP19,CastanedaR10,FraigniaudRT20}, but also upper bounds~\cite{CastanedaR12}. It has also been extended to mobile computing~\cite{AlcantaraCFR19}, to dynamic environments~\cite{GodardP16}, and to Byzantine failures~\cite{MendesTH14}. Moreover, a topological description of concurrent programming has been developed~\cite{FajstrupGHMR16,GoubaultMT18}. It is however only recently that distributed network computing has been approached through the lens of combinatorial  topology~\cite{CastanedaFPRRT21,FraigniaudP20}, specifically applied to local computing. 

The \local\/ model is a synchronous failure-free model dedicated to capture local computing in networks, that is, the ability to solve problems by having each process inspecting solely the inputs present in its vicinity in the network (see~\cite{HirvonenS2020,Peleg00}). Among the earliest seminal work in the  \local\/ model are~\cite{Linial92} and~\cite{NaorS95}. The former established the celebrated lower bound $\Omega(\log^*n)$ rounds for 3-coloring the $n$-node cycle.  The latter introduced the class of \emph{locally checkable labeling} (\lcl) problems, that is, the class of problems defined on bounded-degree graphs, involving individual inputs and outputs of bounded size, and whose candidate solutions can be checked locally\footnote{The class of tasks that are locally checkable are sometimes referred to as ``the equivalent of NP''. This is however debatable, as the formal definitions of complexity classes such as NP typically involve \emph{proofs} provided by non-trustable oracles. In the context of distributed network computing,  the ``equivalents of NP'' may rather be the classes PLS, LCP, and NLD, respectively defined in~\cite{KormanKP10}, \cite{GoosS16}, and~\cite{FraigniaudKP13}.}. It was shown that it is undecidable whether a given \lcl\/ problem is solvable locally (i.e., in a constant number of rounds). It was also shown that if an \lcl\/ problem can be solved locally by a randomized algorithm, then it can be solved locally by a deterministic algorithm. This derandomization result initiated a vast literature on the power and limitation of randomized algorithms in their ability of solving problems locally (see, e.g., \cite{Balliu0OS20,ChangKP19,GhaffariKM17,RozhonG20} for recent contributions). The aforementioned reference~\cite{Linial92} introduced a lower bound technique bearing similarities with the topological approach, that connects a structural property of a graph capturing all possible configurations of the system at a given time~$t$ with the ability to solve a problem in $t$~rounds. For a quarter of a century, this was the only known non-trivial lower bound technique departing from using indistinguishably arguments, until the breakthrough~\cite{Brandt19} introducing the aforementioned speedup technique. This technique, designed for \lcl\/ problems in general, was successfully applied for deriving lower bounds on various problems such as sinkless orientation and 2-weak coloring~\cite{Brandt19}, as well as maximal matching and maximal independent set~\cite{Balliu0HORS19}. We refer to the recent paper~\cite{Suomela20} for more details on using the speedup technique to understand locality.  

The study of distributed algorithms for networks has recently been subject to generalizations from graphs to hypergraphs, for handling frameworks with multiparty interactions. In particular, the maximal independent set problem in hypergraphs was studied in~\cite{KuttenNP014}, and maximal matching in hypergraphs was studied in~\cite{FischerGK17}. Interestingly, that latter paper shows that solving problems on hypergraphs has also surprising implications on solving other problems efficiently on graphs. Various extensions of the maximal independent set problem were studied in hypergraphs in~\cite{KuhnZ2018}, specifically for linear hypergraphs (i.e., hypergraphs in which any two hyperedges overlap on at most one node). We refer to this latter paper for pointers on earlier contributions on the design of distributed algorithms for hypergraphs. 
 
\section{Summary of our Contributions}
\label{sec:summary}

This section is a technical  summary of our approach and main results. The  model considered in this section is not the most general one, and our general model will be introduced further in the paper. In particular, the model presented in this section does not capture hypergraphs. Nevertheless, it is sufficient for presenting our main ideas and techniques. 

\subsection{Distributed Computation} 

Combinatorial topology provides an elegant and unified way to describe distributed computing (see~\cite{Herlihy2013,HerlihyS99}). We refer to Section~\ref{sec:general-framework} for the details, but, roughly, a \emph{task} (e.g., consensus, vertex-coloring, renaming, maximal independent set, etc.) for an $n$-process system with processes $p_1,\dots,p_n$ can be defined as a triple $(\m{I},\m{O},\Delta)$, where $\m{I}$ and $\m{O}$ respectively denote the sets of all legal $k$-process input and output states, $1\leq k\leq n$, and $\Delta:\m{I}\to 2^{\m{O}}$ is a function that maps every input state $\sigma\in\m{I}$ to the set of  output states $\Delta(\sigma)$ that are legal w.r.t.~$\sigma$. See Fig.~\ref{fig:topoiff}.

\begin{figure}[tb]
\begin{center}
\begin{tikzpicture}
\small
\matrix (m) 
[matrix of math nodes,row sep=3em,column sep=4em,minimum width=2em] 
{
 \m{I} & \m{P}^{(t)}  \\
 & \m{O}     \\ 
}; 
\path[-stealth] 
(m-1-1) edge node [above] {$\Xi^t$} (m-1-2)
(m-1-1) edge node [above] {$\Delta$} (m-2-2)
(m-1-2)  edge node [right] {$\delta$} (m-2-2);
\end{tikzpicture}
\end{center}
\vspace*{-3ex}
\caption{\sl The topological approach of distributed computing}
\label{fig:topoiff}
\end{figure}
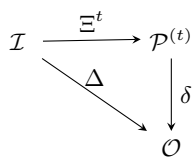

Formally, $\m{I}$ and $\m{O}$ are \emph{simplicial complexes}, and every set $\sigma$ in one of these two complexes is a \emph{simplex} (see Appendix~\ref{app:element-topo} for a brief introduction to combinatorial topology). A simplex $\sigma\in\m{I}$ (resp., $\sigma\in\m{O}$) is of the form $\sigma=\{(i,x_i):i\in I\}$, where (1)~$I\subseteq [n]$ is any non-empty set of process \emph{names}, (2)~for every $i\in I$, $x_i$ is an input (resp., output) value, and (3)~these values are mutually compatible whenever, for every $i\in I$, $x_i$ is assigned to~$p_i$. A 0-dimensional simplex $\{(i,x_i)\}$, for some $i\in[n]$, is called a \emph{vertex}.  It is assumed that $\Delta$ preserves names, that is, if $\tau\in\Delta(\sigma)$, then $\name(\tau)=\name(\sigma)=I$. We denote by $\val(\m{I})$ and $\val(\m{O})$ the set of input and output values, respectively.  That is, for $\sigma=\{(i,x_i):i\in I\}$, if $\sigma\in\m{I}$ (resp., $\sigma\in\m{O}$) then $x_i\in\val(\m{I})$ (resp., $x_i\in\val(\m{O})$) for every $i\in I$.

We consider any \emph{round-based} communication model~$\m{M}$ supporting \emph{full information} protocols, which include, e.g., wait-free computing with iterated immediate snapshots, referred to as \wf~\cite{AttiyaW04,Herlihy2013}, in the context of shared-memory computing, and \local\/ in the context of network computing~\cite{Peleg00}. A crucial feature shared by all these models is that, w.l.o.g., one can restrict attention to algorithms decomposed into two phases: one phase consisting of a certain number~$t$ of  communication rounds where, at each round, each process forwards all the information acquired during the previous rounds, and one phase of computation in which an output is computed based on all the information accumulated during the $t$ communication rounds performed during the first phase. So, in fact, designing a $t$-round algorithm boils down to designing an output function mapping views gathered within $t$ communication rounds to output values. 

Let $\m{P}^{(t)}$ denote the set of all possible $k$-process states of the system, $1\leq k\leq n$, after $t$~rounds, with $\m{P}^{(0)}=\m{I}$. Like $\m{I}$ and $\m{O}$, $\m{P}^{(t)}$ is a simplicial complex, for every $t\geq 0$. For $t>0$, the complex $\m{P}^{(t)}$ is the image of $\m{P}^{(t-1)}$ by a function $\Xi$, which is specific of the communication model~$\m{M}$, and which is mapping every state $\sigma\in \m{P}^{(t-1)}$ to the set $\Xi(\sigma)\subseteq \m{P}^{(t-1)}$ of states that may result from $\sigma$ after one round of communication. As for the input-output specification~$\Delta$, if $\tau\in\Xi(\sigma)$ then  $\name(\tau)=\name(\sigma)$. Note that, in particular, $\m{P}^{(t)}=\Xi^t(\m{I})$, as displayed on Fig.~\ref{fig:topoiff}. 
For instance, given $\sigma=\{(i,x_i):i\in I\}\in\m{P}^{(t-1)}$, we have: 
\begin{itemize}
\item In \wf, $x_i$ is the view of $p_i$ resulting from its $(t-1)$th snapshots. We have $\tau\in \Xi(\sigma)$ if $\tau=\{(i,\{x_j:j\in J_i\}):i\in I\}$ where, for every $i\in I$,  (1)~$J_i\subseteq I$, (2)~$J_i\subseteq J_j$ or $J_i \supseteq J_j$ for every $j\in I$, and (3)~for every $j\in I$, if $j\in J_i$ then $J_j\subseteq J_i$.  
\item In \local, $x_i$ is the labeled ball of radius $t-1$ centered at $p_i$ in the input graph~$G$. Assuming $I=[n]$, $\tau\in \Xi(\sigma)$ if $\tau=\{(i,\{x_j:j\in N_G[i]\}):i\in [n]\}$ where $N_G[i]$ denotes the closed neighborhood of node~$i$ in~the underlying network~$G$. 
\end{itemize}
More generally, we model a communication model $\m{M}$ as a simplicial complex whose simplices are of the form  $\varphi=\{(i,J_i):i\in I\}$ with $i\in J_i\subseteq [n]$ for every $i\in I$. Such a simplex corresponds to a possible communication round in which, for every $i\in I$, process~$i$ receives information from all processes~$j\in J_i$. See Figs.~\ref{fig:exampleIOP} and~\ref{fig:examplerenaming} for examples in \local\/ and \wf. A simplex $\varphi=\{(i,J_i):i\in I\}$ of~$\m{M}$ is said to be \emph{closed} if $\cup_{i\in I}J_i=I$. To every communication model $\m{M}$ corresponds a communication map~$\Xi$. Let $\sigma=\{(i,v_i):i\in I\} \in \m{P}^{(t)}$ for some $t\geq 0$, where $I\subseteq [n]$, and let us assume that there exists a closed simplex $\varphi=\{(i,J_i):i\in I\}$ in~$\m{M}$. We set 
$
\Xi(\sigma,\varphi)=\big \{\big (i,\{v_j:j\in J_i\}\big ):i\in I\big \}, 
$
and we define
$
\Xi(\sigma)=\{\Xi(\sigma,\varphi) : (\varphi \in \m{M})\wedge(\name(\varphi)=\name(\sigma))\wedge(\mbox{$\varphi$ is closed})\}.
$

\begin{figure}[tb]
\centering
\includegraphics[width=14cm]{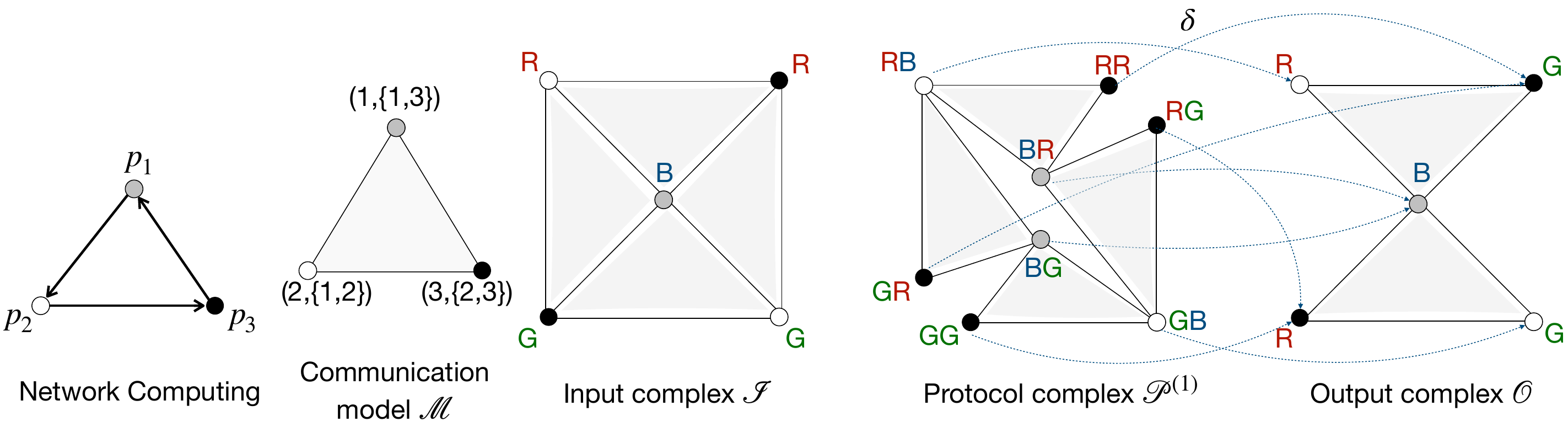}
\caption{\sl The communication model is the (synchronous failure-free) directed cycle $\vec{C}_3$. In the considered task, $p_1$ receives input $B$ (blue) as input, while $p_2$ and $p_3$ may receive input~R (red) or~G (green). The output complex specifies that the output values must be a proper 3-coloring of $\vec{C}_3$, with $p_1$ colored blue. After one round, the state of each process is a pair $XY$ of colors, where $X$ is its input color, and $Y$ is a color received from the in-neighbor in $\vec{C}_3$, forming the protocol complex $\m{P}^{(1)}$. As an input-output specification~$\Delta$, we consider the case where $p_2$ must output the same color as its input color. The map $\delta:\m{P}^{(1)}\to\m{O}$ depicted on the figure is simplicial and agrees with~$\Delta$. Note that $\delta$ is also name-independent. There are no simplicial maps from $\m{I}=\m{P}^{(0)}$ to~$\m{O}$ that agree with~$\Delta$, and therefore the task is not solvable in zero rounds (even if one discards name-independence).  
}
\label{fig:exampleIOP}
\end{figure}

A $t$-round algorithm is then a function $\delta:\m{P}^{(t)}\to \m{O}$ mapping every pair~$(i,x_i)\in \m{P}^{(t)}$ to some pair~$(i,y_i)=\delta(i,x_i)\in\m{O}$ (cf. Fig.~\ref{fig:topoiff}). Note that $\delta$ is \emph{name-preserving}. The semantic of this map is that process $i$ in state~$x_i$ outputs~$y_i$. In wait-free computing, $\delta$ essentially takes views resulting from $t$~rounds of iterated immediate snapshots as inputs, while, in \local, $\delta$ takes  labeled balls of radius~$t$ as inputs. The function~$\delta$ must satisfy two constraints: 
\begin{itemize}
\item $\delta$ is \emph{simplicial}, that is, for every $\sigma=\{(i,x_i):i\in I\}\in\m{P}^{(t)}$, $\delta(\sigma)=\{\delta(i,x_i):i\in I\}\in\m{O}$, i.e., $\delta(\sigma)$ is a legal $k$-process output state, where $k=|I|$, and
\item $\delta$ \emph{agrees} with~$\Delta$, that is, given any input state $\sigma\in \m{I}$, $\delta(\Xi^t(\sigma))\subseteq \Delta(\sigma)$, i.e., the output  after $t$ rounds  of a set of processes initially in state $\sigma\in\m{I}$ must be one of the output states  that are legal w.r.t.~$\sigma$. 
\end{itemize}
This formalism yields a characterization of task solvability (cf. Fig.~\ref{fig:topoiff}). 

\begin{lemma}
A task $(\m{I},\m{O},\Delta)$ is  solvable in at most $t$ rounds if and only if there exists a simplicial map $\delta:\m{P}^{(t)}\to \m{O}$ that agrees with~$\Delta$. 
\end{lemma}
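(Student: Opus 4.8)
The plan is to prove both directions of the equivalence by unpacking the definitions that have just been set up, since the statement is essentially a tautology once the formalism is in place; the only real content is tracking how $t$ rounds of communication correspond to the iterated map $\Xi^t$.

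For the forward direction, assume $(\m{I},\m{O},\Delta)$ is solvable in at most $t$ rounds. First I would recall that, by the reduction to full-information protocols described earlier in the excerpt, a $t$-round algorithm may be taken w.l.o.g. to consist of $t$ communication rounds followed by a single local output computation; concretely, after $t$ rounds the state of the system starting from $\sigma\in\m{I}$ ranges over the simplices of $\Xi^t(\sigma)\subseteq\m{P}^{(t)}$, and the output decision is a function of each process's individual view. I would then set $\delta$ to be exactly this output function, viewed as a map on vertices of $\m{P}^{(t)}$, and extend it simplex-wise. The correctness requirement on the algorithm — that on every execution the processes produce a legal output state for the given input — translates verbatim into the two conditions on $\delta$: that $\delta(\sigma')\in\m{O}$ for every $\sigma'\in\m{P}^{(t)}$ (i.e. $\delta$ is simplicial), and that $\delta(\Xi^t(\sigma))\subseteq\Delta(\sigma)$ for every $\sigma\in\m{I}$ (i.e. $\delta$ agrees with $\Delta$). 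Name-preservation of $\delta$ follows from the fact that $\Xi$ and $\Delta$ are name-preserving, as noted in the setup.

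For the converse, given a simplicial name-preserving $\delta:\m{P}^{(t)}\to\m{O}$ agreeing with $\Delta$, I would define the algorithm in the obvious way: each process runs $t$ rounds of full-information communication — which is always possible in the model $\m{M}$ since the communication map $\Xi$ is defined on all of $\m{P}^{(t-1)}$ — records its resulting view $x_i$, and outputs $y_i$ where $(i,y_i)=\delta(i,x_i)$. On an execution with input simplex $\sigma\in\m{I}$, the global post-communication state is some $\sigma'\in\Xi^t(\sigma)\subseteq\m{P}^{(t)}$; since $\delta$ is simplicial, the collection of outputs $\delta(\sigma')$ is a legal output state in $\m{O}$, and since $\delta$ agrees with $\Delta$, $\delta(\sigma')\in\delta(\Xi^t(\sigma))\subseteq\Delta(\sigma)$, so the output is legal with respect to the input. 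Hence the task is solved in $t$ rounds.

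I do not expect a genuine obstacle here; the statement is a bookkeeping lemma recording that the topological picture of Figure~\ref{fig:topoiff} faithfully encodes round-based solvability. The one point requiring a little care — and the closest thing to a ``hard part'' — is justifying the standing reduction to canonical full-information protocols in the generality of an arbitrary round-based model $\m{M}$ supporting full information: namely that any algorithm can be normalized so that all computation is deferred to the end and the intermediate messages carry the entire transcript, so that the reachable states after $t$ rounds are exactly $\m{P}^{(t)}=\Xi^t(\m{I})$ and nothing more. Since the excerpt has already asserted this normalization as a feature of the models under consideration, I would simply invoke it; a fully rigorous treatment would pin down the definition of $\m{M}$ (deferred to Section~\ref{sec:general-framework}) and check that $\Xi$ captures precisely one round, but for the purposes of this lemma that is taken as given.
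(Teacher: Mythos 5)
Your proposal is correct and follows essentially the same route as the paper's own proof (given formally as Theorem~\ref{theo:iff-general} in Appendix~\ref{app:iff-general}): in both, the $t$-round algorithm's output function is read off as $\delta$ on vertices of $\m{P}^{(t)}$, and conversely $\delta$ is turned back into an algorithm, with simpliciality and agreement with $\Delta$ translating verbatim into correctness on every reachable execution. The only detail the paper's formal version adds is that $\delta$ must be name-independent for the converse (a process cannot use its name when computing its output) and that agreement is only required on closed simplices, refinements that the informal lemma you were asked to prove likewise suppresses.
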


Depending on the context, one usually requires that $\delta$ is \emph{name-independent}, that is, if $\delta(i,x)=(i,y)$ and $\delta(j,x)=(j,y')$, then $y=y'$, reflecting the fact that the name of a process is external, and not part of its input.  See Figs.~\ref{fig:exampleIOP} and~\ref{fig:examplerenaming} for examples. 

\begin{figure}[tb]
\centering
\includegraphics[width=14cm]{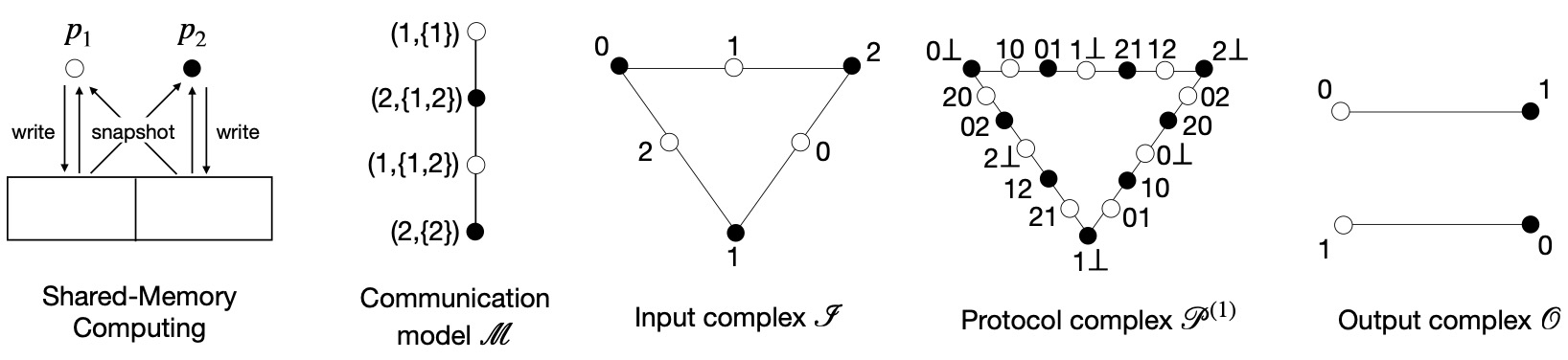}
\caption{\sl Perfect renaming in \wf: Each of the two processes receives an identifier $\id$ in $\{0,1,2\}$ such that $\id(p_1)\neq\id(p_2)$, and they must find new identifiers $\id'$ in $\{0,1\}$, respecting $\id'(p_1)\neq\id'(p_2)$. The protocol complex $\m{P}^{(1)}$ is a chromatic  subdivision of~$\m{I}$~\cite{HerlihyS99}. Perfect renaming is not solvable in 1~rounds, because there is no name-independent simplicial map from $\m{P}^{(1)}$ to~$\m{O}$. The same holds for any $t\geq 1$, and therefore perfect renaming is impossible in \wf. In this paper, we provide another proof of this result, using a generalized version of Brandt's speedup theorem. 
}
\label{fig:examplerenaming}
\end{figure}

\subsection{Speedup Tasks} 

In this section, we introduce a general notion of \emph{speedup tasks}. See Section~\ref{sec:speed} for more details. 

\subparagraph{Definition.} 

The speedup of a task $(\m{I},\m{O},\Delta)$ for a (full information) communication model~$\m{M}$ is a task $(\m{I},\m{O}',\Delta')$ such that (see Fig.~\ref{fig:speedup}):
\begin{itemize}
    \item for every $t\geq 1$, if there exists a simplicial map $\delta:\m{P}^{(t)}\to \m{O}$ that agrees with~$\Delta$, then there exists a simplicial map $\alpha:\m{P}^{(t-1)}\to \m{O}'$ that agrees with $\Delta'$, and  
    \item there exists a simplicial map $\beta:\Xi(\m{O}')\to \m{O}$ such that, for every $\sigma\in \m{I}$, ${\beta(\Xi(\Delta'(\sigma)))\subseteq \Delta(\sigma)}$.
\end{itemize}
Note that $\Xi(\m{O}')$ is the set of all global states that may result after a single round of communication under~$\m{M}$, starting from input states in~$\m{O}'$. The terminology ``speedup task'' is motivated by the following simple observation. 

\begin{lemma}
For every $t\geq 1$, there is a $t$-round algorithm for $(\m{I},\m{O},\Delta)$  in model~$\m{M}$ if and only if there exists a $(t-1)$-round algorithm for its speed up task $(\m{I},\m{O}',\Delta')$   in model~$\m{M}$. 
\end{lemma}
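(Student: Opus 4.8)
The plan is to route everything through the characterization lemma stated just above: a task $(\m{I},\m{O},\Delta)$ is solvable in at most $t$ rounds iff there is a simplicial map $\delta\colon\m{P}^{(t)}\to\m{O}$ agreeing with $\Delta$, and likewise $(\m{I},\m{O}',\Delta')$ is solvable in at most $t-1$ rounds iff there is a simplicial map $\alpha\colon\m{P}^{(t-1)}\to\m{O}'$ agreeing with $\Delta'$. Hence, for each fixed $t\geq 1$, it suffices to prove that these two existence statements are equivalent. The forward implication is immediate: it is literally the first item in the definition of a speedup task. So the whole work is the converse.

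Assume a simplicial $\alpha\colon\m{P}^{(t-1)}\to\m{O}'$ agreeing with $\Delta'$ is given; I will build $\delta$ by ``running $\alpha$ one round early and then applying $\beta$''. The point is that $\m{P}^{(t)}=\Xi(\m{P}^{(t-1)})$, and that the one-round operator $\Xi$ is \emph{functorial} on name-preserving simplicial maps: any name-preserving simplicial map $f\colon\m{A}\to\m{B}$ between subcomplexes induces a name-preserving simplicial map $\Xi(f)\colon\Xi(\m{A})\to\Xi(\m{B})$ given, on each simplex, by $\Xi(f)\bigl(\Xi(\rho,\varphi)\bigr)=\Xi\bigl(f(\rho),\varphi\bigr)$ for every simplex $\rho$ of $\m{A}$ and every closed $\varphi$ with $\name(\varphi)=\name(\rho)$ — intuitively, each process applies $f$ to every round-$(t-1)$ view it received during the extra round and collects the results. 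I then set $\delta=\beta\circ\Xi(\alpha)\colon\m{P}^{(t)}\to\m{O}$, a composition of simplicial maps, hence simplicial. To see that $\delta$ agrees with $\Delta$, fix $\sigma\in\m{I}$ and write $\m{C}=\Xi^{t-1}(\sigma)$. Functoriality, together with $\Xi^{t}(\sigma)=\Xi(\m{C})$ and the fact that $\alpha$ preserves names, gives $\Xi(\alpha)\bigl(\Xi^{t}(\sigma)\bigr)\subseteq\Xi\bigl(\alpha(\m{C})\bigr)$; since $\alpha$ agrees with $\Delta'$ we have $\alpha(\m{C})\subseteq\Delta'(\sigma)$, whence $\Xi(\alpha)\bigl(\Xi^{t}(\sigma)\bigr)\subseteq\Xi(\Delta'(\sigma))$; and finally $\delta\bigl(\Xi^{t}(\sigma)\bigr)=\beta\bigl(\Xi(\alpha)(\Xi^{t}(\sigma))\bigr)\subseteq\beta\bigl(\Xi(\Delta'(\sigma))\bigr)\subseteq\Delta(\sigma)$ by the second item of the definition of a speedup task. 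Applying the characterization lemma once more turns $\delta$ into a $t$-round algorithm for $(\m{I},\m{O},\Delta)$, which closes the equivalence.

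The step I expect to be the main obstacle is establishing the functoriality of $\Xi$, and specifically the well-definedness of $\Xi(f)$: a vertex $(i,x_i)$ of $\m{P}^{(t)}$ may sit inside several simplices $\Xi(\rho,\varphi)$, so one must argue that $x_i$ — the round-$t$ full-information view of process $i$ — unambiguously determines the collection of processes that $i$ heard from together with their round-$(t-1)$ views, so that $\Xi(f)(i,x_i)$ does not depend on the chosen pair $(\rho,\varphi)$. This is precisely where the hypothesis that $\m{M}$ is a full-information model is used, and it is also what makes $\Xi(f)$ simplicial and name-preserving (a face of $\Xi(\rho,\varphi)$ is again of that form, arising from a face of $\rho$ on which $f$ is already defined consistently; and both $\Xi$ and $\beta$ preserve names by the standing conventions on tasks and communication maps). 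Finally, if name-independence is additionally required of algorithms, one observes that $\Xi$ and $\beta$ both preserve name-independence, so the construction goes through verbatim.
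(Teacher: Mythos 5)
Your proposal is correct and follows essentially the same route as the paper: the forward direction is read off from the first condition of the speedup definition together with the solvability characterization, and the converse is the two-phase argument in which the processes solve $(\m{I},\m{O}',\Delta')$ in $t-1$ rounds, communicate for one extra round, and apply $\beta$. Your map $\delta=\beta\circ\Xi(\alpha)$ is just the simplicial-map formalization of the paper's operational description of that second phase (with the well-definedness of $\Xi(\alpha)$ resting, as you note, on the full-information property and on the name-independence of $\alpha$ guaranteed by the characterization theorem).
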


Indeed, the simplicial map $\alpha$ guarantees the solvability of $(\m{I},\m{O}',\Delta')$ in $t-1$ rounds assuming that $(\m{I},\m{O},\Delta)$ is solvable in $t$ rounds, and the simplicial map $\beta$ guarantees the solvability of $(\m{I},\m{O},\Delta)$ in $t$ rounds assuming that $(\m{I},\m{O}',\Delta')$ is solvable in $t-1$ rounds.

\begin{figure}[b]
\begin{center}
\begin{tikzpicture}
\small
\matrix (m) 
[matrix of math nodes,row sep=3em,column sep=4em,minimum width=2em] 
{
\m{P}^{(t-1)} & \m{I} & \m{P}^{(t)}  \\
 \m{O}' & \Xi(\m{O}') & \m{O}     \\ 
}; 
\path[-stealth] 
(m-1-2) edge node [above] {$\Xi^t$} (m-1-3)
(m-1-2) edge node [above] {$\Delta$} (m-2-3)
(m-1-3)  edge node [right] {$\delta$} (m-2-3)
(m-1-2) edge node [above] {$\Xi^{t-1}$} (m-1-1)
(m-1-1)  edge node [left] {$\alpha$} (m-2-1)
(m-2-1) edge node [above] {$\Xi$} (m-2-2)
(m-2-2) edge node [above] {$\beta$} (m-2-3)
(m-1-2) edge node [above] {$\Delta'$} (m-2-1)
;
\end{tikzpicture}
\end{center}
\vspace*{-3ex}
\caption{\sl The task $(\m{I},\m{O}',\Delta')$ is a speedup task for $(\m{I},\m{O},\Delta)$}
\label{fig:speedup}
\end{figure}

\subparagraph{Generic Approach for Constructing Speedup Tasks.}

There is a generic approach for constructing speedup tasks. To see how, note that, given its state $v_i$ after $t-1$ rounds, every process~$p_i$ can  internally build all its possible futures after one more round, as well as all the possible futures of all the other processes, whenever the current states of the other processes after  $t-1$ rounds are compatible with~$p_i$ in state~$v_i$.  Let $\St(i,v_i)$ denote the \emph{star} of $(i,v_i)$ in $\m{P}^{(t-1)}$, that is, the set of simplices of $\m{P}^{(t-1)}$ containing $(i,v_i)$ (see Fig.~\ref{fig:exampleClSt}(a)). Let $\Cl(\St(i,v_i))$ be the \emph{closure} of this star, that is, the minimal complex containing all simplices in~$\St(i,v_i)$. 

\begin{figure}[tb]
\centering
\includegraphics[width=10cm]{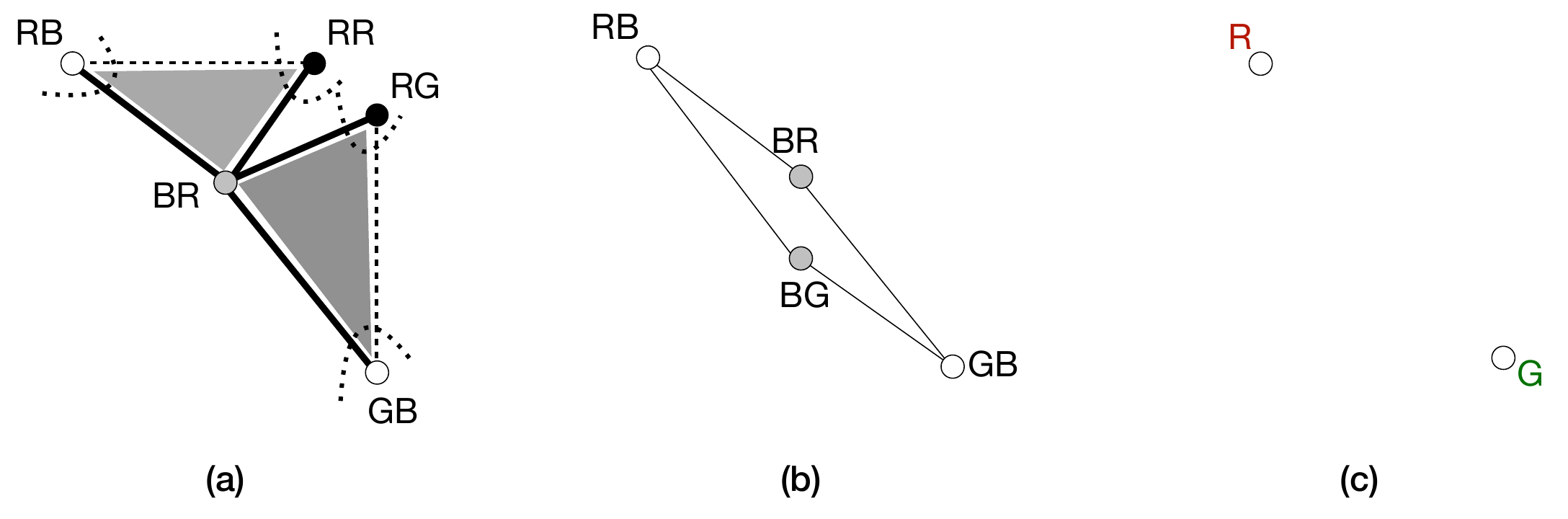}
\caption{\sl (a) The star of the vertex $(1,BR)$ in the protocol complex $\m{P}^{(1)}$ depicted on Fig.~\ref{fig:exampleIOP} consists of the vertex $(1,BR)$ itself, the four bold edges, and the two dark gray triangles. The closure of this star is the complex obtained by adding the four vertices $(2,RB)$, $(3,RR)$, $(3,RG)$, and $(2,GB)$, plus the two edges $\{(2,RB),(3,RR)\}$ and $\{(2,GB),(3,RG)\}$. (b)~The skeleton $\Sk_{\{1,2\}}(\m{P}^{(1)})$. (c)~The skeleton $\Sk_{\{2\}}(\m{O})$. 
}
\label{fig:exampleClSt}
\end{figure}

The complex $\Cl(\St(i,v_i))$ precisely captures all the possible states of the system after round $t-1$, given that process~$i$ is in state~$v_i$. Process~$i$ in state~$v_i$ can then compute $\Xi(\Cl(\St(i,v_i)))$, by simulating all possible scenarios resulting from one more round of communication. It follows that, after round $t-1$, process~$i$ in state~$v_i$ can output 
\[
\alpha(i,v_i)=\delta\Big(\Xi\big(\Cl(\St(i,v_i))\big)\Big).
\]
Observe that, for every $(i,v_i)\in \m{P}^{(t-1)}$, $\alpha(i,v_i)$ is a subcomplex of the output complex~$\m{O}$ of the task $(\m{I},\m{O},\Delta)$ at hand. This provides us with the intuition that, if the task $(\m{I},\m{O},\Delta)$ has a speedup task $(\m{I},\m{O}',\Delta')$, the complex~$\m{O}'$, as well as the input-output specification~$\Delta'$ are in close connection with the subcomplexes of~$\m{O}$, with the idea in mind that every process can extrapolate its current state by simulating all possible scenarios resulting from one more round of communication. In short, we foresee that a vertex of $\m{O}'$ is of the form $(i,K)$ where $i\in[n]$, and $K\subseteq \m{O}$ is a complex. Since a simplicial complex is a collection of sets of values, this provides us with the intuition that an output value for the speedup task $(\m{I},\m{O}',\Delta')$ is a set of set of output values in~$\m{O}$. Now, the next question to address is, what are the consistency conditions to be satisfied for a set $\{(i,K_i):i\in I\}$ of vertices of~$\m{O}'$ to be a simplex of~$\m{O}'$? At this stage of the discussion, it is not yet clear what these conditions should look like, but one can identify one hypothesis that helps very much, called \emph{local independence}. Roughly, given a simplex $\{(i,v_i),(j,v_j)\}\in\m{P}^{(t-1)}$, we would like all simplices in $\alpha(i,v_i)=\delta\big(\Xi\big(\Cl(\St(i,v_i))\big)\big)$ and $\alpha(j,v_j)=\delta\big(\Xi\big(\Cl(\St(j,v_j))\big)\big)$ to be compatible. 

\subparagraph{The Local Independence Property.}

Let $t\geq 1$ be an integer. A task $(\m{I},\m{O},\Delta)$ satisfies the \emph{$t$-independence} property w.r.t.~a communication model $\m{M}$ if, for every closed simplex $\varphi=\{(i,J_i) : i \in I \} \in \m{M}$, for every $i \in I$, for every $j \in J_i$, for every simplex ${\{(i,v_i),(j,v_j)\}\in \m{P}^{(t)}}$, and for every two collections of dimension-1 simplices
\[
\left\{\begin{array}{l}
C_i ={\big\{\sigma_{k} = \{ (i,v_i),(k,v_k) \} \in \St(i,v_i) : k \in J_i \smallsetminus \{i,j\}\big\}}\\
C_j ={\big\{\sigma_{k} = \{ (j,v_j),(k,v_k) \} \in \St(j,v_j) : k \in J_j \smallsetminus \{i,j\}\big\}}
\end{array}\right.
\]
we have 
$
\bigcup_{\sigma\in C_i\cup C_j} \sigma  \in \m{P}^{(t)}. 
$

Note that the $t$-independence property depends solely of the model~$\m{M}$, and of the input complex~$\m{I}$ of the task. Indeed, $\m{I}$ and~$\m{M}$ are the only parameters that govern the properties of the protocol complex at time~$t$. An interesting special case of the $t$-independence property is when considering  $j=i$. The $t$-independence property then implies that, for every $(i,J_i) \in \m{M}$, for every vertex $(i,v_i)\in \m{P}^{(t)}$, and for every collection $C ={\big\{\sigma_{k} = \{ (i,v_i),(k,v_k) \} \in \St(i,v_i) : k \in J_i \smallsetminus \{i\}\big\}}$, we have 
$
 \bigcup_{\sigma\in C} \sigma \in \m{P}^{(t)}.
$

Note that any task $(\m{I},\m{O},\Delta)$ for two processes satisfies the $t$-independence property w.r.t.~any communication model $\m{M}$, for any $t\geq 0$. The task depicted on Fig.~\ref{fig:exampleIOP}  satisfies 0-independence w.r.t.~$\vec{C}_3$ because, for every $\{(i,v_i),(j,v_j)\}\in \m{P}^{(0)}=\m{I}$, either $J_i=\{i,j\}$ or $J_j=\{i,j\}$, and any input value~$v_k$ of the third process~$p_k$ is compatible with $v_i$ and~$v_j$, i.e., $\{(i,v_i),(j,v_j),(k,v_k)\}\in \m{P}^{(0)}$. On the other hand, it is not 1-independent. To see why, let us consider the simplex $\{(1,BR),(2,RB)\}\in \m{P}^{(1)}$. We have $\{(1,BR),(3,RG)\}\in \St(1,BR)$, but $\{(1,BR),(2,RB),(3,RG)\}\notin \m{P}^{(1)}$. Yet, every task with locally checkable inputs satisfies the $t$-independence property w.r.t.~the anonymous variant of \local\/ in graph with girth larger than $2t+1$.  Indeed, in \local\/ the values $v_i$ in $\m{P}^{(t)}$ are input-labeled balls of radius~$t$ centered at~$p_i$, and $t$-independence boils down to the ability to extend these balls into balls of radius $t+1$ in a compatible manner for any two adjacent processes. See~\cite{Brandt19} for more details. 

\subsection{A General Construction of Speedup Tasks}

Let $(\m{I},\m{O},\Delta)$ be a task for $n$ processes. For defining a speedup task $(\m{I},\m{O}',\Delta')$, 
we define the complex~$\m{O}'$, and the input-output specification~$\Delta'$, as follows.  The construction is inspired by the aforementioned map $\alpha:\m{P}^{(t-1)}\to \m{O}'$ defined by $\alpha(i,v_i)=\delta\big(\Xi\big(\Cl(\St(i,v_i))\big)\big)$, but every complex resulting from the application of~$\alpha$ at a vertex $(i,v_i)$ of $\m{P}^{(t-1)}$ is decomposed into sets of vertices. More specifically, let $I\subseteq [n]$. We denote by $\Sk_I(\m{O})$ the \emph{skeleton}  of~$\m{O}$ composed of all simplices $\sigma\in\m{O}$ with $\name(\sigma)\subseteq I$ (see Fig.~\ref{fig:exampleClSt}(b)). In particular,  $\Sk_{\{i\}}(\m{O})$ is merely a set of vertices of~$\m{O}$, each of the form $(i,y)$ for some $y\in \val(\m{O})$ (see Fig.~\ref{fig:exampleClSt}(c)). First, we describe the vertices of~$\m{O}'$, then its simplices, and finally the input-output specification~$\Delta'$. See Section~\ref{sec:generalizedBrandt} for more details. 

\subparagraph{Vertices of~$\m{O}'$.} 

Each vertex in~$\m{O}'$ is a pair $(i,\bb{P}_i)$ with 
$\bb{P}_i = (x_i , \bb{S}_i)$,  
$x_i\in \val(\m{I})$, and 
\[
\bb{S}_i=\{\bb{S}_{i,k,J_i} : (i,J_i) \in \m{M} \;\mbox{and}\; k \in  [n] \},
\]
where,  for every vertex $(i,J_i) \in \m{M}$, and for every $k\in [n]$, 
$
\bb{S}_{i,k,J_i} \in 2^{2^{\Sk_{\{i\}}(\m{O})}}.
$
In other words, each $\bb{S}_{i,k,J_i}$ is a collection of sets with elements in $\Sk_{\{i\}}(\m{O})$. There is a set $\bb{S}_{i,k,J_i}$ for every set $J_i$ of processes from which process~$i$ may receive information in some communication specified by~$\m{M}$, and for every process $k\in [n]$. 
Each set $\bb{S}_{i,k,J_i}\in \bb{S}_i$ is identified in $ \bb{S}_i$ by a pair $(k,J_i)$. That is, formally, $\bb{S}_i$~is  an array indexed by pairs (process, set of processes). Nevertheless, for the sake of simplifying the notations, we describe $\bb{S}_i$ as a set. For a pair $(i,(x_i , \bb{S}_i))$ to be a vertex of~$\m{O}'$, the sets $\bb{S}_{i,k,J_i}$ in $\bb{S}_i$ must satisfy the following property:   

\begin{description}
\item[P0:] For every vertex $(i,J_i) \in \m{M}$ with $J_i = \{k_1,\ldots,k_d\}$, and for every 
\[
(S_{i,k_1,J_i},\ldots,S_{i,k_d,J_i}) \in \bb{S}_{i,k_1,J_i} \times \ldots \times \bb{S}_{i,k_d,J_i},
\]
we have 
$
\bigcap\limits_{j \in [d]} S_{i,k_j,J_i} \neq \varnothing.
$
\end{description}

\subparagraph{Example.} 

Let us for instance consider the task of Fig.~\ref{fig:exampleIOP}. We have $J_1=\{1,3\}$, and thus there are three sets in $\bb{S}_1$, which are $\bb{S}_{1,1,\{1,3\}},\bb{S}_{1,2,\{1,3\}},\bb{S}_{1,3,\{1,3\}}$. We have $\Sk_{\{1\}}(\m{O})=\{(1,B)\}$, and thus each of these three sets is potentially one of the four  sets of sets with elements in $\Sk_{\{1\}}(\m{O})$, namely $\varnothing, \{\varnothing\}, \{\{B\}\}, \{\varnothing,\{B\}\}$, where, for the sake of simplifying the notations, we denote by $B$ the vertex $(1,B)\in \Sk_{\{1\}}(\m{O})$. However, the sets $\{\varnothing\}$ and $\{\varnothing,\{B\}\}$ do not satisfy P0, and therefore only the two sets $\varnothing$ and $\{\{B\}\}$ remain. For $i\in\{2,3\}$, we have $\Sk_{\{i\}}(\m{O})=\{(i,R),(i,G)\}$ (cf. Fig.~\ref{fig:exampleClSt}). Therefore, still denoting by $X$ a vertex $(i,X)\in \Sk_{\{i\}}(\m{O})$, we get a larger collection of sets, including, e.g., $\{\{R\},\{G\}\}$ and $\{\{R,G\}\}$. Note however, that $\{\{R\},\{G\}\}$ can occur at most once in $\bb{S}_i$ because it is not true that for any $S\in \{\{R\},\{G\}\}$, and any $S'\in \{\{R\},\{G\}\}$, we have $S\cap S'\neq \varnothing$. For instance, $\{R\}\cap\{G\}=\varnothing$.  

\subparagraph{Simplices of~$\m{O}'$.} 

A vertex-set $\{(i,(x_i,\bb{S}_i)): i \in [n]\}$ is a facet of $\m{O}'$ if, for every closed simplex $\{(i,J_i) : i \in I \} \in \m{M}$, and for every $(i,k) \in I\times I$ with $i \in J_k$ or $k \in J_i$, the following two properties hold:

\begin{description}
\item[P1:] There exists $(S_{i,k,J_i,J_k},S_{k,i,J_k,J_i}) \in \bb{S}_{i,k,J_i} \times \bb{S}_{k,i,J_k}$ satisfying that, for every 
\[ 
\big ((i,y_i),(k,y_k)\big) \in S_{i,k,J_i,J_k} \times S_{k,i,J_k,J_i},
\]
there exists $\tau \in \m{I}$ such that
$
\{(i,x_i), (k,x_k)\} \subseteq \tau,
\;\mbox{and}\;
 \{ (i,y_i),(k,y_k) \} \in \Cl(\Delta(\tau)) . 
$

\medbreak 

\noindent Moreover, for every $(k,J_k') \in \Cl(\St(i,J_i))$, and for every $(i,J_i') \in \Cl(\St(k,J_k))$,
\[
S_{i,k,J_i,J_k} = S_{i,k,J_i,J_k'} \;\mbox{and} \; S_{k,i,J_k,J_i} = S_{k,i,J_k,J_i'}.
\]

\item[P2:] $|\bb{S}_{i,i,J_i}| = 1$ (i.e., a unique set in~$\bb{S}_{i,i,J_i}$), and, if $i \in J_k$ and $k \notin J_i$ then $\bb{S}_{i,k,J_i} = \bb{S}_{i,i,J_i}$.
\end{description}

\subparagraph{Example.} 

Let us consider again the task of Fig.~\ref{fig:exampleIOP}, and let us define the following sets $\bb{S}_{1},\bb{S}_{2}$, and $\bb{S}_{3}$, where $X\in\{R,G\}$: 
\[
\begin{array}{lll}
\bb{S}_{1,1,\{1,3\}}=\{\{B\}\} & \bb{S}_{1,2,\{1,3\}}=\{\{B\}\} & \bb{S}_{1,3,\{1,3\}}=\{\{B\}\} \\
\bb{S}_{2,1,\{1,2\}}=\{\{X\}\} & \bb{S}_{2,2,\{1,2\}}=\{\{X\}\} & \bb{S}_{2,3,\{1,2\}}=\{\{X\}\} \\
\bb{S}_{3,1,\{2,3\}}=\{\{R,G\}\} & \bb{S}_{3,2,\{2,3\}}=\{\{R\},\{G\}\} & \bb{S}_{3,3,\{2,3\}}=\{\{R,G\}\} 
\end{array}
\]
We claim that $\{(i,(x_i,\bb{S}_i)): i \in \{1,2,3\}\}$, where $x_1=B$, $x_2=X$, and $x_3\in \{R,G\}$, is a facet of $\m{O}'$. First, P0 is satisfied at each vertex $(i,(x_i,\bb{S}_i)$, $i\in\{1,2,3\}$. Second, for every $i\in\{1,2,3\}$, the set $\bb{S}_i$ satisfies~P2. It remains to check~P1. The only non-trivial case is $i=2$ and $k=3$. There in a unique set $\{X\}$ in $\bb{S}_{2,3,\{1,2\}}$. Therefore, the corresponding set in 
$\bb{S}_{3,2,\{2,3\}}$ must be $\{\bar{X}\} = \{R,G\}\smallsetminus \{X\}$. We pick $\tau=\{(1,B),(2,X),(3,x_3)\}$, and, indeed, $\{(2,X),(3,\bar{X})\}$ satisfies $X\neq \bar{X}$ while the output of $p_2$ is equal to its input. Therefore $\{(2,X),(3,\bar{X})\}\in\Cl(\Delta(\tau))$, which establishes~P1.  

\subparagraph{Input-output specification.} 

$\Delta'$ satisfies the following:

\begin{description}
\item[P3:] For every two simplices $\sigma = \{(i,x_i) : i \in I \} \in \m{I}$, and $\tau = \{(i,(x'_i,\bb{S}_i)) : i \in I\} \in \m{O}'$, where $I \subseteq [n]$, we set: 
$
    \tau \in \Delta'(\sigma) \iff \forall i \in I, x'_i = x_i.
$
\end{description}

\subparagraph{Example.} 

Still for the task of Fig.~\ref{fig:exampleIOP}, the simplex $\{(1,(B,\bb{S}_1)), (2,(X,\bb{S}_2)), (3,(x_3,\bb{S}_3))\}$ is a valid output for the input simplex $\{(1,B),(2,X),(3,x_3)\}$. This yields a simplicial map $\alpha:\m{I}\to\m{O}'$ which agrees with~$\Delta'$. Therefore, $(\m{I},\m{O}',\Delta')$ is solvable in zero rounds.  This is in agreement with our main result established in the next subsection.

\subsection{General Speedup Theorem}

We show that, under certain conditions on a task $(\m{I},\m{O},\Delta)$, the task $(\m{I},\m{O}',\Delta')$ where $\m{O}'$ is the complex defined by properties~P0-2, and $\Delta'$ is the input-output relation defined by~P3, is a speedup task of $(\m{I},\m{O},\Delta)$. The statements in this section are not entirely formal, as some additional properties are required for the results to hold. Nevertheless, these properties are essentially technical, and they do not impact the general message delivered by the statements below. For more details, see Sections~\ref{sec:t_to_t-1} and~\ref{sec:t-1_to_t}. 

\subsubsection{From $t$ rounds to $t-1$ rounds} 

\begin{lemma}\label{informal-weakspeedup}
Let $(\m{I},\m{O},\Delta)$ be a task, let $\m{M}$ be a model, let $t\geq 1$ be an integer, and let us assume that $(\m{I},\m{O},\Delta)$ satisfies the $(t-1)$-independence property w.r.t.~$\m{M}$. If $(\m{I},\m{O},\Delta)$ is solvable in $t$ rounds in~$\m{M}$, then the task $(\m{I},\m{O}',\Delta')$ is solvable in $t-1$ rounds in~$\m{M}$. 
\end{lemma}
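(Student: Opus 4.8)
The plan is to assume the existence of a $t$-round algorithm for $(\m{I},\m{O},\Delta)$, i.e.\ a simplicial map $\delta:\m{P}^{(t)}\to\m{O}$ agreeing with $\Delta$, and to construct from it a $(t-1)$-round algorithm $\alpha:\m{P}^{(t-1)}\to\m{O}'$ agreeing with $\Delta'$. The candidate $\alpha$ is the one foreshadowed in the generic construction: each process $p_i$ in state $v_i$ after $t-1$ rounds computes $\alpha(i,v_i)=(i,(x_i,\bb{S}_i))$, where $x_i\in\val(\m{I})$ is the input value recorded in $v_i$, and where, for each $(i,J_i)\in\m{M}$ and each $k\in[n]$, the set $\bb{S}_{i,k,J_i}$ is built by simulation: $p_i$ considers all simplices of $\Xi(\Cl(\St(i,v_i)))$ obtained by extending $v_i$ one more round via a communication round compatible with $J_i$, applies $\delta$ to each, and collects the sets of $k$-indexed output vertices of $\Sk_{\{k\}}(\m{O})$ that appear. (When $k\notin J_i$, one takes the degenerate choice forced by P2.) I would first verify that $\alpha$ is well defined, then that each $\alpha(i,v_i)$ is a vertex of $\m{O}'$, then that $\alpha$ is simplicial, and finally that $\alpha$ agrees with $\Delta'$.

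The second and third steps are where the hypotheses enter. To see $\alpha(i,v_i)$ is a vertex of $\m{O}'$, I must check P0: for $J_i=\{k_1,\dots,k_d\}$ and any choice of $S_{i,k_j,J_i}\in\bb{S}_{i,k_j,J_i}$, the intersection $\bigcap_j S_{i,k_j,J_i}$ is nonempty. Each $S_{i,k_j,J_i}$ records the $k_j$-components of $\delta$ applied to \emph{some} one-round extension of $\Cl(\St(i,v_i))$ compatible with $J_i$; the $(t-1)$-independence property is exactly what lets me glue these $d$ separate extensions — one per neighbour direction — into a single simplex of $\m{P}^{(t)}$ lying over a common $J_i$-round, so that $\delta$ applied to that one simplex produces output values witnessing the common element of the intersection. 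For P1 and P2 (the simplex condition), given a closed simplex $\{(i,J_i):i\in I\}\in\m{M}$, I take a facet $\{(i,v_i):i\in I\}\in\m{P}^{(t-1)}$ mapping to the $\alpha(i,v_i)$'s; since $\m{P}^{(t-1)}$ arises from a genuine execution, the states $v_i$ are mutually compatible, so the simulated one-round futures can be chosen consistently across $i$ and $k$, which yields the required common sets $S_{i,k,J_i,J_k}$; the fact that $\delta$ is simplicial and agrees with $\Delta$ then gives, for each output pair, the input simplex $\tau\in\m{I}$ with $\{(i,y_i),(k,y_k)\}\in\Cl(\Delta(\tau))$ demanded by P1. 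The ``moreover'' clause of P1 (independence of $S_{i,k,J_i,J_k}$ from $J_k$, and symmetrically) should follow because the construction of $\bb{S}_{i,k,J_i}$ depends only on $v_i$ and $J_i$, not on the partner's communication set. Agreement with $\Delta'$ (P3) is immediate since $\alpha$ copies the input value into the first coordinate.

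The main obstacle I anticipate is step P0 / the coherent gluing: the sets $\bb{S}_{i,k,J_i}$ for different $k\in J_i$ are populated by simulating \emph{separate} branchings of $\Cl(\St(i,v_i))$ under one more communication round, and a priori these branchings involve different choices for the states of the third parties; $(t-1)$-independence is precisely the axiom asserting that any two such dimension-1 extensions — here, more delicately, $d$ of them sharing the vertex $(i,v_i)$ and indexed by the directions in $J_i$ — can be realized simultaneously inside $\m{P}^{(t)}$. I would therefore isolate a combinatorial lemma, derived by iterating the two-direction statement of $t$-independence (and its $j=i$ special case, already singled out in the excerpt), showing that a star-shaped collection $\{\{(i,v_i),(k,v_k)\}:k\in J_i\}\subseteq\St(i,v_i)$ always has union in $\m{P}^{(t-1)}$; applying $\Xi$ with the round $J_i$ and then $\delta$ to the resulting simplex of $\m{P}^{(t)}$ produces the witness for P0. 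The remaining verifications (simpliciality across a closed simplex of $\m{M}$, and the symmetry conditions in P1--P2) are then bookkeeping, using only that $\delta$ is simplicial, name-preserving, and agrees with $\Delta$, together with the definition of $\Xi$ on closed simplices.
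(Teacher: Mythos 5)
Your overall strategy coincides with the paper's: define $\alpha(i,v_i)=(i,(x_i,\bb{S}_i))$ by simulating one more round from $\Cl(\St(i,v_i))$, use the degenerate ($j=i$) case of $(t-1)$-independence to glue a star-shaped family $\{\{(i,v_i),(k,v_k)\}:k\in J_i\}$ into a single simplex of $\m{P}^{(t-1)}$, and then invoke simpliciality and agreement of $\delta$ with $\Delta$; P3 is indeed immediate. However, there is a genuine gap in the definitional step on which everything else rests. You populate $\bb{S}_{i,k,J_i}$ with sets of \emph{$k$-indexed} output vertices, i.e.\ subsets of $\Sk_{\{k\}}(\m{O})$ (and you repeat this in your P0 paragraph: ``the $k_j$-components of $\delta$''). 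In the construction of $\m{O}'$, however, $\bb{S}_{i,k,J_i}\in 2^{2^{\Sk_{\{i\}}(\m{O})}}$: every member is a set of possible outputs \emph{for process $i$ itself}, and the index $k$ parameterizes which $(t-1)$-state of process $k$ compatible with $v_i$ is held fixed during the simulation, namely $S^{\sigma}_{i,k,J_i}=\delta(\Sk_i(\Xi(\St(\sigma),J_i)))$ for $\sigma=\{(i,v_i),(k,v_k)\}\in\St(i,v_i)$. With your reading, $\alpha(i,v_i)$ is not a vertex of $\m{O}'$ at all, and P0 cannot be witnessed: for distinct $k_1,k_2\in J_i$ your $S_{i,k_1,J_i}$ and $S_{i,k_2,J_i}$ consist of vertices carrying different names, so any intersection is empty; your own P0 argument shows the tension, since gluing the $d$ extensions yields one output value \emph{per neighbour}, not ``the common element of the intersection''. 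The whole point of P0 is that process $i$ can pick a single output for itself consistent with every way its neighbours' states could be pinned down — this only makes sense with the outputs-for-$i$ reading, and it is exactly what is consumed later in the $t-1\to t$ direction.

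The mis-specification also removes the mechanism behind P1 and P2. For P1, the paper takes the actual face $\sigma=\{(i,v_i),(k,v_k)\}$ of a facet of $\m{P}^{(t-1)}$ and selects the \emph{matched} pair $S^{\sigma}_{i,k,J_i}\in\bb{S}_{i,k,J_i}$ and $S^{\sigma}_{k,i,J_k}\in\bb{S}_{k,i,J_k}$, then uses $(t-1)$-independence once more to show that every pair $(y_i,y_k)$ drawn from these two sets is realized in a single execution, so that agreement of $\delta$ with $\Delta$ supplies the input simplex $\tau$ with $\{(i,y_i),(k,y_k)\}\in\Cl(\Delta(\tau))$; your appeal to ``the simulated futures can be chosen consistently'' has no handle without the parameterization of $\bb{S}_{i,k,J_i}$ by the neighbour's state. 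Finally, P2 is not a ``degenerate choice'' one may impose when $k\notin J_i$: it is a property of the constructed $\alpha$ that must be verified — $|\bb{S}_{i,i,J_i}|=1$ because the only admissible simplex for $k=i$ is $\{(i,v_i)\}$, and $\bb{S}_{i,k,J_i}=\bb{S}_{i,i,J_i}$ when $i\in J_k$ and $k\notin J_i$ because, by $(t-1)$-independence, the output of $i$ under the pattern $J_i$ cannot depend on the state of a process from which $i$ receives nothing. The gluing lemma you isolate is the right tool, but the proof needs to be rebuilt on the correct definition of $\bb{S}_{i,k,J_i}$.
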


\noindent\textbf{Sketch of Proof.} 
Let $\delta : \m{P}^{(t)} \to \m{O}$ solving $(\m{I},\m{O},\Delta)$  in $t$ rounds in~$\m{M}$. We define $\alpha:\m{P}^{(t-1)} \to \m{O}$. Note that, for any two  closed simplices $\varphi$ and $\psi$ in $\St(i,J_i)$, we have $\Sk_i (\Xi (\St(\sigma),\varphi)) = \Sk_i (\Xi (\St(\sigma),\psi))$ for every simplex~$\sigma\in\m{P}^{(t-1)}$. Therefore, we can abuse notation by denoting $\Sk_i (\Xi (\St(\sigma),\varphi))$ as $\Sk_i (\Xi (\St(\sigma),J_i))$. For any vertex $(i,v_i) \in \m{P}^{(t-1)}$, we let
$
\alpha(i,v_i)= (i,\bb{P}_i)  \;\mbox{with} \; \bb{P}_i = (x_i,\bb{S}_i),
$
where $x_i$ is the input value of process~$i$ (which is present in its view~$v_i$), and 
\[
\bb{S}_i = \{\bb{S}_{i,k,J_i} : ((i,J_i) \in \m{M}) \land (k \in [n]) \},
\]
where 
\[
\ora{\bb{S}_{i,k,J_i} =  \{ \delta(\Sk_i( \Xi (\St(\sigma),J_i))) : (\sigma \in \Sk_{\{i,k\}}(\St(i,v_i))) \land (\dim(\sigma) = |\{i,k\}| - 1) \}.}
\]
For every $\sigma \in \Sk_{\{i,k\}}(\St(i,v_i))$ with $\dim(\sigma) = |\{i,k\}| - 1$, the set 
$
S_{i,k,J_i}^\sigma  =  \delta(\Sk_i (\Xi (\St(\sigma),J_i)))
$
is the set of every possible output for process~$i$ using $\delta$ whenever the communication pattern~$J_i$ occurred at time~$t$, and the process $k$ has its  value fixed according to $\sigma\in\m{P}^{(t-1)}$. In particular, we have $S_{i,k,J_i}^\sigma \in 2^{\Sk_i(\m{O})}$, and thus $\bb{S}_{i,k,J_i} \in 2^{2^{\Sk_i(\m{O})}}$, as desired. 

We show that P0 holds. For every vertex $(i,J_i) \in \m{M}$, let us consider a set 
$
\{S_{i,k,J_i} \in \bb{S}_{i,k,J_i} : k \in J_i \}.
$
By definition, for every set $S_{i,k,J_i}$, there exists $\sigma_k \in \St(i,v_i)$ such that 
$
S_{i,k,J_i} = S_{i,k,J_i}^{\sigma_k} = \delta(\Sk_i(\Xi(\St(\sigma_k),E_i))). 
$
Using the $(t-1)$-independence property for $k = i \in J_i$, it holds that,
$
\bigcup_{k \in J_i \setminus \{i\} } \sigma_k \in \m{P}^{(t-1)}.
$
There is only one candidate for the simplex  corresponding to $k=i$, and this simplex is $\sigma_{i}=\{(i,v_i)\}$. Therefore,
$
\bigcup_{k \in J_i \setminus \{ i \} } \sigma_k = \bigcup_{k \in J_i} \sigma_k \in \m{P}^{(t-1)}.
$
For every $k \in J_i$, we have 
$
\Xi(\St(\cup_{k \in J_i} \sigma_k),J_i) \subseteq \Xi(\St(\sigma_k),J_i).
$
Therefore,
$
\delta(\Sk_i(\Xi(\St(\cup_{k \in J_i} \sigma_k),J_i))) \subseteq \bigcap_{k \in J_i} S_{i,k,J_i}. 
$
Now, $\delta(\Sk_i(\Xi(\St(\bigcup_{k \in J_i} \sigma_k),J_i)))$ cannot be empty, simply because $\bigcup_{k \in J_i} \sigma_k \in \m{P}^{(t-1)}$. Therefore, property P0 holds, that is, $\alpha$ produces vertices of~$\m{O}'$.

To prove that $\alpha$ solves $(\m{I},\m{O}',\Delta')$, it is sufficient to consider an arbitrary facet $\rho = \{(i,v_i) : i \in [n] \} \in \m{P}^{(t-1)}$, and its image $\alpha(\rho)=\{ (i,(x_i,\bb{S}_i) : i \in [n] \}$, and we show that $\alpha(\rho)$ is a facet of~$\m{O}'$ that agrees with~$\Delta'$. It is sufficient to show that both properties~P1 and~P2 hold as, by definition of $\alpha$, P3~holds by construction. 

First we prove that P1 holds. For every closed simplex $\varphi = \{(i,J_i) : i \in I \} \in \m{M}$, for every $(i,k) \in I \times I$ with $j \in J_k$ or $k \in J_i$, we consider the face $\sigma = \{(i,v_i),(k,v_k)\}$ of~$\rho$. Note that $\sigma \in \Sk_{\{i,k\}}(\St(i,v_i)) \cap \Sk_{\{i,k\}}(\St(k,v_k))$. Let us consider the sets 
\[
     S_{i,k,J_i}^\sigma =  \delta(\Sk_{i}(\Xi(\St(\sigma),J_i))) \in \bb{S}_{i,k, J_i} \;\mbox{and}\; 
     S_{k,i,J_k}^\sigma =  \delta(\Sk_{k}(\Xi(\St(\sigma),J_k))) \in \bb{S}_{k,i, J_k}. 
\]
Note that the set $S_{i,k,J_i}^\sigma$ (resp., $S_{k,i,J_k}^\sigma$) is independent of~$J_k$ (resp., $J_i$), by construction. 
It can be shown, again using $(t-1)$-independence, that P1 holds for these sets. 

Finally, P2 holds, also using $(t-1)$-independence. (See complete proof in Section~\ref{sec:t_to_t-1}).
\qed
\medbreak 

Note that Lemma~\ref{informal-weakspeedup} does not requires local checkability, and may therefore be applied even to tasks such as consensus in \wf. 

\subsubsection{From $t-1$ rounds to $t$ rounds} 

Our reciprocal of Lemma~\ref{informal-weakspeedup}, which guarantees that the task $(\m{I},\m{O}',\Delta')$  can be used for deriving a lower bound for $(\m{I},\m{O},\Delta)$, requires the task to satisfy a specific property, called \emph{edge-checkability}. The following definition is inspired from the notion of local checkability defined in~\cite{FraigniaudRT13} for the \wf\/ model.  Given a simplex $\sigma=\{(i,x_i):i\in I\}$, and $J\subseteq I$, we define 
$
\pi_J(\sigma)=\{(i,x_i):i\in J\}.
$
A task $(\m{I},\m{O},\Delta)$ is \emph{locally checkable} for the communication model $\m{M}$ if, for every $\sigma\in\m{I}$ with $\name(\sigma)=I\subseteq [n]$, for every set $\tau=\{(i,y_i):i\in I \land y_i\in\val(\m{O})\}$, and for every closed simplex $\varphi=\{(i,J_i):i\in I\}$ of $\m{M}$, the following holds: 
\[
\tau\in\Delta(\sigma) \iff \forall i\in I, \; \pi_{J_i}(\tau)\in \Delta (\pi_{J_i}(\sigma)).
\]
Moreover, the task $(\m{I},\m{O},\Delta)$ is \emph{edge-checkable} if  the following holds: 
\[
\tau\in\Delta(\sigma) \iff \forall i\in I, \; \forall k \in J_i, \; \pi_{\{i,k\}}(\tau)\in \Delta (\pi_{\{i,k\}}(\sigma)).
\]
Note that edge-checkability implies local checkability. For instance, renaming is edge-checkable in \wf, and proper coloring is edge-checkable in \local. In particular, the task depicted on  Fig.~\ref{fig:exampleIOP}  is edge-checkable. On the other hand, consensus is not even locally checkable in \wf, and the standard version of maximal independent set (MIS), where a node in the set is labeled~1, while a node not in the set is labeled~0, is locally checkable in \local, but not edge-checkable in \local. The ``edge version'' of MIS defined in~\cite{Balliu0HORS19,Brandt19} is however edge-checkable. As in~\cite{Brandt19}, we assume an underlying mechanism enabling the any two processes $i$ and $k$ such that $i\in J_k$ or $k\in J_i$  to \emph{break symmetry}.

\begin{lemma}\label{informal-recipro-speedup}
Let $(\m{I},\m{O},\Delta)$ be a task, let $\m{M}$ be a model, let $t\geq 1$ be an integer, and let us assume a symmetry-breaking mechanism, and that $(\m{I},\m{O},\Delta)$ is edge-checkable in~$\m{M}$. If $(\m{I},\m{O}',\Delta')$ is solvable in $t-1$ rounds in~$\m{M}$ then $(\m{I},\m{O},\Delta)$~is solvable in $t$ rounds in~$\m{M}$. 
\end{lemma}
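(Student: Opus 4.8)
The plan is to run the construction in reverse: given a $(t-1)$-round algorithm $\alpha:\m{P}^{(t-1)}\to\m{O}'$ agreeing with~$\Delta'$, build a $t$-round algorithm $\delta:\m{P}^{(t)}\to\m{O}$ agreeing with~$\Delta$. The intuition is that a state $v\in\m{P}^{(t)}$ of process~$i$ after $t$ rounds decomposes as $\Xi(\sigma,\varphi)$ for some $\sigma\in\m{P}^{(t-1)}$ restricted to $\St(i,v_i)$ and some closed $\varphi=\{(i,J_i):i\in I\}\in\m{M}$; so from~$v$ process~$i$ can recover its own $(t-1)$-round view $v_i$, the set $J_i$ of processes it just heard from, and, for each $k\in J_i$, the $(t-1)$-round view $v_k$ of process~$k$. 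Running $\alpha$ on $(i,v_i)$ yields the vertex $(i,(x_i,\bb{S}_i))\in\m{O}'$, and running $\alpha$ on each $(k,v_k)$ yields $(k,(x_k,\bb{S}_k))$; moreover the whole collection $\{(k,v_k):k\in J_i\}$ is a simplex of $\m{P}^{(t-1)}$ (it is a face of a facet through $(i,v_i)$), so $\alpha$ maps it to a simplex of $\m{O}'$, i.e.\ properties P0--P2 hold among all these vertices.

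The core step is then to extract, for each $k\in J_i$ with $k\neq i$, the set $S_{i,k,J_i,J_k}\in\bb{S}_{i,k,J_i}$ furnished by~P1 for the pair $(i,k)$ — using the symmetry-breaking mechanism to make this choice consistently between $i$ and $k$ (each of the two processes, knowing both $v_i$ and $v_k$, picks the same pair $(S_{i,k,J_i,J_k},S_{k,i,J_k,J_i})$). Process~$i$ also has the singleton $\bb{S}_{i,i,J_i}=\{S_{i,i,J_i}\}$ from~P2. I would set
\[
Y_i \;=\; S_{i,i,J_i}\;\cap\;\bigcap_{k\in J_i\setminus\{i\}} S_{i,k,J_i,J_k},
\]
a subset of $\Sk_{\{i\}}(\m{O})$, and then invoke property~P0 applied to the vertex $(i,(x_i,\bb{S}_i))$ — with the tuple of chosen sets — to conclude $Y_i\neq\varnothing$; process~$i$ picks (again via symmetry breaking, or any fixed rule on its view) some $(i,y_i)\in Y_i$ and outputs it. This defines $\delta(i,v)=(i,y_i)$.

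It remains to check that $\delta$ is simplicial and agrees with~$\Delta$, and this is where edge-checkability is essential. Fix a facet $\rho\in\m{P}^{(t)}$, write it as $\Xi$ of a facet of $\m{P}^{(t-1)}$ along a closed $\varphi=\{(i,J_i):i\in I\}\in\m{M}$, and let $\tau=\delta(\rho)=\{(i,y_i):i\in I\}$. By edge-checkability it suffices to show, for every $i$ and every $k\in J_i$, that $\pi_{\{i,k\}}(\tau)=\{(i,y_i),(k,y_k)\}\in\Delta(\pi_{\{i,k\}}(\sigma))$ where $\sigma$ is the underlying input simplex; since $y_i\in S_{i,k,J_i,J_k}$ and $y_k\in S_{k,i,J_k,J_i}$ (here I use the second clause of~P1, that $S_{i,k,J_i,J_k}$ does not depend on which extension $J_k$ is taken, together with the consistency of the symmetry-broken choice), property~P1 hands us a $\tau'\in\m{I}$ with $\{(i,x_i),(k,x_k)\}\subseteq\tau'$ and $\{(i,y_i),(k,y_k)\}\in\Cl(\Delta(\tau'))$. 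Because $\Delta$ is name-preserving and the input complex is rich enough (this is one of the "essentially technical" hypotheses alluded to in the statement; concretely one needs $\pi_{\{i,k\}}(\tau')$ and $\pi_{\{i,k\}}(\sigma)$ to coincide, which holds since $x_i,x_k$ are determined by $v_i,v_k$), this gives $\pi_{\{i,k\}}(\tau)\in\Delta(\pi_{\{i,k\}}(\sigma))$, and edge-checkability assembles these pairwise facts into $\tau\in\Delta(\sigma)$. Agreement with~$\Delta$ over all of $\m{I}$ then follows since $\rho$ was an arbitrary facet of $\m{P}^{(t)}=\Xi^t(\m{I})$.

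The main obstacle I anticipate is the bookkeeping around the symmetry-breaking mechanism and the "$J_k$-independence" clauses of~P1/P2: one must argue that the set $S_{i,k,J_i,J_k}$ that process~$i$ extracts (knowing $J_i$ and the views, hence $J_k$) is literally the same object that process~$k$ extracts, so that $y_i$ and $y_k$ are certified compatible by a \emph{single} witnessing pair, not two different ones — the clause "$S_{i,k,J_i,J_k}=S_{i,k,J_i,J_k'}$" in~P1 is exactly what rules out the mismatch, and wiring it correctly into the two processes' local computations is the delicate point. A secondary subtlety is making sure $Y_i\neq\varnothing$: P0 only guarantees nonemptiness of an intersection over \emph{all} $k\in J_i$ of sets drawn from the respective $\bb{S}_{i,k,J_i}$, so I must be sure that each $S_{i,k,J_i,J_k}$ I chose indeed lies in $\bb{S}_{i,k,J_i}$ (true by~P1) and that $S_{i,i,J_i}$ is the P2 singleton, so that the tuple is a legitimate input to~P0.
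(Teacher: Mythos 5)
Your overall strategy is the paper's own argument in lightly disguised form: instead of exhibiting the map $\beta:\Xi(\m{O}')\to\m{O}$ and appealing to the speedup-task machinery (run the $(t-1)$-round algorithm for $(\m{I},\m{O}',\Delta')$, exchange its outputs in one extra round, then decide), you fold that extra round into the full-information $t$-round view and let each process simulate $\alpha$ on the $(t-1)$-round views it holds of every $k\in J_i$; under full information these are the same algorithm, and your use of P1 plus symmetry breaking, of P0 for nonemptiness of the intersection $Y_i$, and of edge-checkability to assemble the pairwise facts into $\tau\in\Delta(\sigma)$ coincides with the paper's proof.

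There is, however, one step that fails as written: you treat every $k\in J_i$ as if the communication were mutual, asserting that ``each of the two processes, knowing both $v_i$ and $v_k$, picks the same pair'' via the symmetry-breaking mechanism. In the general models covered by the lemma, $k\in J_i$ does not imply $i\in J_k$ --- the paper's running example, the directed cycle $\vec{C}_3$ of Fig.~\ref{fig:exampleIOP}, already has this asymmetry --- and when $i\notin J_k$ process $k$ never sees $v_i$, so no joint choice between $i$ and $k$ is possible; worse, the pair $(k,i)$ does not even enter $k$'s own intersection, so nothing in your argument constrains $y_k$ to be compatible with $y_i$, while edge-checkability still demands $\{(i,y_i),(k,y_k)\}\in\Delta(\pi_{\{i,k\}}(\sigma))$. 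This is exactly where the second clause of P2 must be invoked (you use only its first clause, the self-loop singleton): if $k\in J_i$ and $i\notin J_k$, then $\bb{S}_{k,i,J_k}=\bb{S}_{k,k,J_k}$ is a singleton, so $k$'s side of the P1 witnessing pair is forced, $(k,y_k)$ automatically lies in it because $(k,y_k)\in S_{k,k,J_k}$, and process $i$ --- which does know $v_k$, hence $\bb{S}_k$, and by the ``Moreover'' clause of P1 need not know $J_k$ --- unilaterally picks the matching first component $S_{i,k,J_i,J_k}$, no coordination being needed. With this case added (and symmetry breaking reserved for the mutual case $i\in J_k$ and $k\in J_i$), your construction of $\delta$ and the concluding edge-checkability argument go through and match the paper's proof of Lemma~\ref{informal-recipro-speedup}.
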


\noindent\textbf{Sketch of Proof.} 
    It is  sufficient to show the existence of a simplicial map~$\beta:\Xi(\m{O}')\to \m{O}$ such that, for every closed simplex $\sigma\in \m{I}$, $\beta(\Xi(\Delta'(\sigma)))\subseteq \Delta(\sigma)$. 
    
    Let $\tau = \{{(i,(x_i,\bb{S}_i)) : i \in I} \}\in\Delta'(\sigma)$ for some closed simplex $\sigma = \{(i,x_i) : i \in I \} \in \m{I}$. Note that $\tau$ is a face of a facet of $\m{O}'$, which, by definition, satisfy P1 and P2. Since $(\m{I},\m{O},\Delta)$ is edge-checkable, it is sufficient to prove that, for every closed simplex $\varphi=\{(i,J_i):i\in I\} \in \m{M}$, for every $i\in I$, process~$i$ can output a solution $(i,y_i) \in \Sk_i(\m{O})$ such that,  for every $k \in J_i$, $\{(i,y_i),(k,y_k)\} \in \Delta(\pi_{\{i,k\}}(\sigma))$. 
    
    After one round of communication according to~$\varphi$, every  process~$i\in I$ receives messages from every process $k \in J_i$. Therefore every process~$i$ has access to the set $\{ (x_k,\bb{S}_k) : k \in J_i \}$. If $k \in J_i$ does not receive from~$i$ then property~P2 guarantees that $|\bb{S}_{k,i,J_k}| = 1$. If $k \in J_i$ receives from~$i$ (i.e $i \in J_k$ then the symmetry-breaking mechanism, and property~P1 allow the process $i$ and $k$ to choose sets $(S_{i,k,J_i},S_{k,i,J_k}) \in \bb{S}_{i,k,J_i} \times \bb{S}_{k,i,J_k}$ such that, for every choice of $(y_i,y_k)$ where $(i,y_i)\in S_{i,k,J_i}$ and $(k,y_k)\in S_{k,i,J_k}$, there exists $\tau \in \m{I}$ such that
    \[
    \{(i,x_i),(k,x_k)\} \subseteq \tau, \; \mbox{and} \;
    \{(i,y_i),(k,y_k)\} \in \Delta(\tau). 
    \]
    Since $(\m{I},\m{O},\Delta)$  is edge-checkable, $\{(i,y_i),(k,y_k)\} \in \Delta(\tau)$ implies that 
    \[
    \{(i,y_i),(k,y_k)\} \in \Delta(\pi_{\{i,k\}}(\tau)) = \Delta(\pi_{\{i,k\}}(\sigma)).
    \]
    By repeating this operation for every process $k \in J_i$, process~$i$ can output any value $y_i$ such that,
    \[
        (i,y_i) \in \bigcap_{k \in J_i} S_{i,k,J_i}. 
    \]
    Such a value $y_i$ does exist thanks to property~P0. The correctness of this algorithm is straightforward since the task is edge-checkable, and the sets $S_{i,k,J_i}$ are precisely chosen to satisfy~P1.
\qed

\medbreak

Using Lemmas~\ref{informal-weakspeedup} and~\ref{informal-recipro-speedup}, we immediately derive our main result. 

\begin{theorem}\label{informal-strspeedup}
Let $(\m{I},\m{O},\Delta)$ be a task, let $\m{M}$ be a model, let $t\geq 1$ be an integer, and let us assume that $(\m{I},\m{O},\Delta)$ satisfies the $(t-1)$-independence property w.r.t.~$\m{M}$, and is edge-checkable in~$\m{M}$. $(\m{I},\m{O},\Delta)$ is solvable in $t$ rounds in~$\m{M}$ if and only if $(\m{I},\m{O}',\Delta')$ is solvable in $t-1$ rounds in~$\m{M}$.
\end{theorem}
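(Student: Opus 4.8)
The plan is to assemble Theorem~\ref{informal-strspeedup} directly from the two preceding lemmas by noting that the task $(\m{I},\m{O}',\Delta')$ appearing in both lemmas is the \emph{same} task, namely the one built from properties P0--P3. For the ``only if'' direction, I would invoke Lemma~\ref{informal-weakspeedup}: since $(\m{I},\m{O},\Delta)$ satisfies $(t-1)$-independence with respect to $\m{M}$, solvability of $(\m{I},\m{O},\Delta)$ in $t$ rounds yields solvability of $(\m{I},\m{O}',\Delta')$ in $t-1$ rounds. For the ``if'' direction, I would invoke Lemma~\ref{informal-recipro-speedup}: since $(\m{I},\m{O},\Delta)$ is edge-checkable in $\m{M}$ (and we assume the symmetry-breaking mechanism), solvability of $(\m{I},\m{O}',\Delta')$ in $t-1$ rounds yields solvability of $(\m{I},\m{O},\Delta)$ in $t$ rounds. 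Chaining the two implications gives the stated equivalence.

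The one genuine point that needs care --- and which I expect to be the main (though modest) obstacle --- is checking that the hypotheses of the two lemmas are simultaneously available and that the constructed task $(\m{I},\m{O}',\Delta')$ is literally identical in both. Lemma~\ref{informal-weakspeedup} uses only $(t-1)$-independence, and Lemma~\ref{informal-recipro-speedup} uses only edge-checkability plus symmetry breaking; the theorem simply conjoins all three. I would remark explicitly that the construction of $\m{O}'$ via P0--P2 and of $\Delta'$ via P3 depends only on $\m{I}$, $\m{O}$, $\Delta$, and $\m{M}$ --- not on $t$ --- so the map $\alpha:\m{P}^{(t-1)}\to\m{O}'$ produced in the proof of Lemma~\ref{informal-weakspeedup} and the map $\beta:\Xi(\m{O}')\to\m{O}$ produced in the proof of Lemma~\ref{informal-recipro-speedup} refer to the same output complex and the same input-output specification. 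Hence no compatibility adjustment between the two constructions is required.

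Concretely, I would write: assume first that $(\m{I},\m{O},\Delta)$ is solvable in $t$ rounds in $\m{M}$. Because $(\m{I},\m{O},\Delta)$ satisfies $(t-1)$-independence w.r.t.~$\m{M}$, Lemma~\ref{informal-weakspeedup} applies and $(\m{I},\m{O}',\Delta')$ is solvable in $t-1$ rounds in $\m{M}$. Conversely, assume $(\m{I},\m{O}',\Delta')$ is solvable in $t-1$ rounds in $\m{M}$. Because $(\m{I},\m{O},\Delta)$ is edge-checkable in $\m{M}$ and a symmetry-breaking mechanism is available, Lemma~\ref{informal-recipro-speedup} applies and $(\m{I},\m{O},\Delta)$ is solvable in $t$ rounds in $\m{M}$. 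This establishes the equivalence and completes the proof. If desired, I would also record the corollary obtained by iterating $\Phi^\star$: provided $(\m{I},\m{O},\Delta)$ satisfies $s$-independence for all $s\le t-1$ and remains edge-checkable at each stage, applying the theorem $r$ times shows that $(\m{I},\m{O},\Delta)$ is solvable in $t$ rounds iff $\Phi^{\star r}(\m{I},\m{O},\Delta)$ is solvable in $t-r$ rounds, which is the form used for lower bounds --- but this is a remark rather than part of the proof of the theorem itself.
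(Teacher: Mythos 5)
Your proposal is correct and matches the paper exactly: Theorem~\ref{informal-strspeedup} is stated in the paper as an immediate consequence of Lemmas~\ref{informal-weakspeedup} and~\ref{informal-recipro-speedup}, with one lemma supplying each direction of the equivalence for the same task $(\m{I},\m{O}',\Delta')$ defined by P0--P3. Your additional observation that the construction of $\m{O}'$ and $\Delta'$ does not depend on~$t$, so the two lemmas refer to the identical task, is a correct and worthwhile clarification.
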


\subsection{Applications}

We illustrate the generality of Theorem~\ref{informal-strspeedup} by examples from two radically different settings, namely  shared-memory wait-free computing, and synchronous failure-free network computing. 

\subsubsection{Shared-Memory Wait-Free Computing}
\label{sec:appwf}

We consider the standard \emph{perfect renaming} task in \wf, with two processes. The two processes starts with distinct identifiers in $\{0,1,2\}$ as input, and they are asked to output a distinct identifiers in $\{0,1\}$. We provide a new impossibility proof for perfect renaming, using Theorem~\ref{informal-strspeedup}.

\begin{corollary}\label{cor:renaming}
Perfect renaming in 2-process system is impossible in \wf. 
\end{corollary}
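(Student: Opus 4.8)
The strategy is to iterate the generalized speedup construction starting from the perfect renaming task $\Pi = (\m{I},\m{O},\Delta)$ on two processes, and to derive a contradiction at the base of the recursion. First I would check that the hypotheses of Theorem~\ref{informal-strspeedup} apply for every $t\geq 1$: perfect renaming with two processes is edge-checkable in \wf{} (for two processes, local checkability, edge-checkability, and global checkability all coincide, since the only nonempty name sets are $\{1\}$, $\{2\}$, and $\{1,2\}$, and the output constraint $\id'(p_1)\neq\id'(p_2)$ is literally a constraint on the single edge $\{1,2\}$); and any task on two processes trivially satisfies the $(t-1)$-independence property w.r.t.~any model~$\m{M}$, as explicitly noted in the excerpt right after the definition of $t$-independence. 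So Theorem~\ref{informal-strspeedup} gives, for every $t\geq 1$: $\Pi$ is solvable in $t$ rounds in \wf{} iff $\Phi^\star(\Pi)$ is solvable in $t-1$ rounds in \wf.

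**Iterating.** Define $\Pi^{(0)}=\Pi$ and $\Pi^{(r+1)}=\Phi^\star(\Pi^{(r)})$. Since every $\Pi^{(r)}$ is still a two-process task, I need the two hypotheses to be preserved under~$\Phi^\star$: the $(t-1)$-independence property is automatic (two processes), and I would verify (or invoke from the body of the paper) that edge-checkability of $\Pi^{(r)}$ implies edge-checkability of $\Pi^{(r+1)}$ — this is plausible because the specification~$\Delta'$ defined by P3 only constrains inputs (it forces $x_i'=x_i$), so checking membership in $\Delta'$ reduces to checking that the candidate output simplex is a face of a facet of~$\m{O}'$, which is an edge condition via P1. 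Granting this, a downward induction yields: $\Pi$ is solvable in $t$ rounds iff $\Pi^{(t)}$ is solvable in $0$ rounds, for every $t\geq 1$.

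**Closing the argument.** It therefore suffices to show that $\Pi^{(1)}=\Phi^\star(\Pi)$ is \emph{not} solvable in $0$ rounds in \wf. A $0$-round algorithm is a name-independent simplicial map $\alpha:\m{I}\to\m{O}'$ agreeing with~$\Delta'$; by P3 such an~$\alpha$ must be input-preserving on the first coordinate, so on each input vertex $(i,x)$ it outputs $(i,(x,\bb{S}_i^x))$ for some $\bb{S}_i^x$ depending only on~$x$ (name-independence forces $\bb{S}_1^x$ and $\bb{S}_2^x$ to be "the same" collection relabeled). The point is to unfold what the vertices and facets of~$\m{O}'$ look like for renaming: each $\bb{S}_{i,k,J_i}$ is a collection of subsets of $\Sk_{\{i\}}(\m{O})=\{(i,0),(i,1)\}$, P0 forces these collections to have the finite-intersection property, and the facet condition P1 forces, for a valid output edge $\{(1,(0,\bb{S}_1)),(2,(1,\bb{S}_2))\}$ over an input edge $\{(1,x_1),(2,x_2)\}$ (with $x_1\neq x_2$ in $\{0,1,2\}$), that the chosen sets $S_{1,2,\ldots}$ and $S_{2,1,\ldots}$ be "cross-compatible" with respect to the renaming constraint $y_1\neq y_2$. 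The obstruction is exactly the same one that makes renaming unsolvable in one round of the original task: on the input path $0-1-2$ (three inputs, but two processes so really the three input edges $\{0,1\},\{1,2\},\{0,2\}$) a name-independent $0$-round map into~$\m{O}'$ would have to consistently commit, per input value, to a collection of candidate output-sets, and the finite-intersection plus cross-compatibility constraints cannot be met simultaneously on all three input edges — it reduces to a parity/$2$-coloring obstruction on an odd structure, mirroring the classical fact that $\m{P}^{(1)}$ for $2$-process renaming is a subdivision with no name-independent simplicial map to~$\m{O}$.

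**Main obstacle.** The delicate step is the last one: spelling out $\m{O}'$ and $\Delta'$ concretely enough for renaming to exhibit the non-solvability of $\Pi^{(1)}$ in $0$ rounds. The cleanest route is probably \emph{not} to analyze $\Pi^{(1)}$ directly but to argue by contradiction at level~$t$: if renaming were solvable in \wf{} at all, it is solvable in some $t$ rounds (in fact the relevant protocol complexes are finite at each~$t$), hence $\Pi^{(t)}$ solvable in $0$ rounds; then show that a $0$-round solution to \emph{any} $\Pi^{(r)}$ with $r\geq 1$ would, via the map $\beta$ of the speedup (Lemma~\ref{informal-recipro-speedup} applied $r$ times, or just once to $\Pi^{(1)}$), yield a $1$-round — and then, peeling one more layer, a hypothetical $0$-round — solution to renaming, contradicting the classical impossibility of renaming in $1$ round (stated in the caption of Fig.~\ref{fig:examplerenaming}). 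I expect the write-up to use this "one extra application of $\beta$" shortcut rather than a direct combinatorial dissection of $\Pi^{(1)}$, since the paper already takes the $1$-round impossibility of renaming as known.
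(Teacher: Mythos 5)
Your setup (two-process tasks trivially satisfy $(t-1)$-independence, perfect renaming is edge-checkable in \wf, so Theorem~\ref{informal-strspeedup} applies) matches the paper, but the heart of the argument is missing, and the route you sketch to replace it does not close. The paper never iterates $\Phi^\star$: its entire proof consists of the concrete analysis of properties P0--P2 for the renaming output complex, showing that in any facet of $\m{O}'$ the collections $\bb{S}_{i,k,\{i\}}$ and $\bb{S}_{k,i,\{k\}}$ are forced into $\{\bb{O},\bb{I}\}$ with $\bb{S}_{i,k,\{i\}}\neq\bb{S}_{k,i,\{k\}}$, i.e.\ a solution of $\Phi^\star(\Pi)$ \emph{already encodes} a solution of renaming. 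This gives the descent directly on renaming itself: solvable in $t$ rounds $\Rightarrow$ $\Phi^\star(\Pi)$ solvable in $t-1$ rounds $\Rightarrow$ renaming solvable in $t-1$ rounds, down to the elementary $0$-round impossibility. That analysis is exactly the ``delicate step'' you defer (your ``parity/$2$-coloring obstruction'' remark), so the proposal omits the paper's actual proof content.

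Moreover, both of your suggested workarounds have genuine defects. First, iterating $\Phi^\star$ requires edge-checkability of $\Pi^{(r)}$ in \wf\ for all $r$; you call this ``plausible'', but the paper only proves such preservation for \hlocal\ (Corollary~\ref{hypergraphspeedup}), and its argument there uses that the communication pattern is unique and undirected, which fails for \wf\ --- the paper's proof avoids needing this altogether. Second, the ``one extra application of $\beta$'' shortcut is logically broken: from ``renaming solvable in $t$ rounds'' your descent gives $0$-round solvability of $\Pi^{(t)}$, and applying $\beta$ (Lemma~\ref{informal-recipro-speedup}) $r$ times to a $0$-round solution of $\Pi^{(r)}$ only yields an $r$-round solution of renaming --- no contradiction for $r=t$; for $r=1$ it yields a $1$-round solution, which contradicts only the \emph{classical} $1$-round impossibility from the caption of Fig.~\ref{fig:examplerenaming}, a fact whose proof is precisely the topological argument this corollary is meant to give an alternative to (and which you would be importing as a black box). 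To repair the argument you must carry out the P0--P2 analysis of $\m{O}'$ for renaming (or some equivalent embedding of renaming into its speedup task), as the paper does; note the paper's remark that once this is done, Lemma~\ref{informal-weakspeedup} alone suffices and edge-checkability is not even needed for the descent.
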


\begin{proof}
We start with two observations. First, for any $t\geq 0$,  renaming satisfies $t$-independence w.r.t. \wf\/ with two processes. Second \emph{perfect renaming} is \emph{edge-checkable} in \wf. Indeed, in \wf, if every process receives identifiers that are different from its own identifier, then all identifiers are necessarily distinct. Therefore, Theorem~\ref{informal-strspeedup} applies. Let us identify the task  $(\m{I},\m{O}',\Delta')$ defined by Properties P0-3 applied to perfect renaming (the input and output complexes of perfect renaming, $\m{I}$ and $\m{O}$, are displayed on Fig.~\ref{fig:examplerenaming}, and the input-output specification is trivial, i.e., $\tau\in\Delta(\sigma)$ whenever $\name(\tau)=\name(\sigma)$). 

In the general construction, we have $\bb{S}_j \in 2^{2^{\Sk_{\{j\}}(\m{O})}}$ for process~$j$. However, for the sake of simplifying the notations,  we manipulate sets in $2^{2^{\val(\m{O})}}=2^{2^{\{0,1\}}}$.
Any set $S \in 2^{\val(\m{O})}$ must be one of the following three sets: $\mb{0} = \{0\}, \mb{1} = \{1\},$ and $\mb{X} = \{0,1\}$. Indeed, the empty set $S=\varnothing$ does not satisfy~P0.  
We denote by $p_i$ and $p_k$ the two processes in the systems (instead of $p_1$ and $p_2$, for avoiding confusion between process indexes and input and output values). 
Let us consider a facet $\{(\ell,(x_\ell,\mathbb{S}_\ell)) : \ell \in \{i,k\} \} \in \m{O}'$. By definition,  for any $\ell \in \{i,k\}$, we have 
\[
\mathbb{S}_\ell = \{\mathbb{S}_{\ell,\ell',J_k} : (\ell,J_\ell) \in \mathcal{M} \land \ell' \in \{i,k\}\}.
\]
Let us focus on $\mathbb{S}_{i,k,J}$ and $\mathbb{S}_{k,i,J}$ for $J = \{i,k\}$. Recall that any $S \in \mathbb{S}_{i,k,J}$ (resp., $\mathbb{S}_{k,i,J}$) is non-empty thanks to the universal quantifier in~P0. Moreover, $\mathbb{S}_{i,k,J}$ (resp., $\mathbb{S}_{k,i,J}$)  is itself non-empty thanks to the existential quantifier in~P1. (These two facts actually hold for any set in $\bb{S}_i$ or $\bb{S}_k$.)
By Property~P1, there exists $(S_{i,J},S_{k,J}) \in \mathbb{S}_{i,k,J} \times \mathbb{S}_{k,i,J}$ such that, for every $(y_i,y_k) \in S_{i,J} \times S_{k,J}$,
\[
    \{(i,y_i),(k,y_k)\} \in \Delta(\{(i,x_i),(k,x_k)\}) = \m{O}.
\]
We necessarily have $S_{i,J}\neq \mb{X}$ because, for any $y \in \{0,1\}$, either $\{(i,0),(k,y)\} \notin \m{O}$, or $\{(i,1),(k,y)\} \notin \m{O}$, and therefore it is not possible that both sets are simplices of $\m{O}$. By the same arguments, we also have $S_{k,J}\neq \mb{X}$. 
It follows that $\mathbb{S}_{i,k,J}$ and $\mathbb{S}_{k,i,J}$ can only take three possible values : $\bb{O} = \{\mb{0}\}$, $\bb{I} = \{\mb{1}\}$ and $\bb{X} = \{\mb{0},\mb{1}\}$. 

By the same arguments applied on the facet  $\{(i,\bb{S}_{i,k,J}),(k,\bb{S}_{k,i,\{k\}})\}$, we get that $\bb{S}_{k,i,\{k\}}$  can also only take its values in $\{\bb{O},\bb{I},\bb{X}\}$. 
By Property~P2, it must be the case that $\bb{S}_{k,i,\{k\}} = \bb{S}_{k,k,\{k\}}$. On the other hand,  $|\bb{S}_{k,k,\{k\}}| = 1$ implies that $\bb{S}_{k,i,\{k\}} \in \{\bb{O},\bb{I}\}$. 

Symmetrically the same  holds for process~$i$, that is,  $\bb{S}_{i,k,\{i\}} \in \{\bb{O},\bb{I}\}$.

We now show that, necessarily, $\bb{S}_{i,k,\{i\}} \neq \bb{S}_{k,i,\{k\}}$. Let us consider the two sets
 $S_{i,\{i\}} \in \bb{S}_{i,k,\{i\}}$ and $S_{k,\{k\}} \in \bb{S}_{k,i,\{k\}}$ such that $(S_{i,\{i\}},S_{k,J})$ and $(S_{k,\{i\}},S_{i,J})$ both satisfy~P1. 
Since $(S_{i,J},S_{k,J})$ satisfy~P1, and since $(S_{i,J},S_{k,J})\in \{\bb{O},\bb{I}\}^2$,  we have $S_{i,J} \neq S_{k,J}$. It follows that $S_{k,J} \neq S_{i,\{i\}}$, and therefore $S_{i,J} = S_{i,\{i\}}$. Similarly, we have  $S_{i,J} \neq S_{k,\{i\}}$, and therefore $S_{k,J} = S_{k,\{k\}}$. As a consequence, $S_{i,\{i\}} \neq S_{k,\{k\}}$, and thus $\bb{S}_{i,k,\{i\}} \neq \bb{S}_{k,i,\{k\}}$, as claimed.

Overall, we have shown that $(\bb{S}_{i,k,\{i\}},\bb{S}_{k,i,\{k\}}) \in \{(\bb{O},\bb{I}),(\bb{I},\bb{O})\}$. Therefore, by replacing $\bb{O}$ by~0, and $\bb{I}$ by~1, there is a one-to-one correspondence between the partial outputs $(\mathbb{S}_{i,k,\{i\}},\mathbb{S}_{k,i,\{k\}})$ for $(\m{I},\m{O}',\Delta')$, and valid outputs for perfect renaming. It follows that if there is an algorithm for solving the task $(\m{I},\m{O}',\Delta')$ defined by Properties P0-3 applied to perfect renaming, then, in particular, this algorithm also solves perfect renaming. 
Therefore, thanks to Theorem~\ref{informal-strspeedup}, we get that, for every $t\geq 1$,  if perfect renaming is solvable in $t$~rounds in \wf, then perfect renaming is solvable in $t-1$ rounds in \wf. Since perfect renaming is not solvable in zero rounds,  we conclude that perfect renaming is not solvable in \wf.
\end{proof}

\subparagraph{Remark.} 

The proof of Corollary~\ref{cor:renaming} illustrates a quite interesting case, where solving the speedup task $(\m{I},\m{O}',\Delta')$ obtained using P0-3 includes solving the original task $(\m{I},\m{O},\Delta)$. In this case, Theorem~\ref{informal-strspeedup} is not necessary, and Lemma~\ref{informal-weakspeedup} suffices for establishing the impossibility of solving the task $(\m{I},\m{O},\Delta)$. That is, the edge-checkability condition is not required, and solely local-independence is required. An interesting application is 2-process consensus, which is not edge-checkable in \wf. Nevertheless, the speedup task of consensus obtained using P0-3 happens to include consensus itself,  by the same type of arguments as in the proof of Corollary~\ref{cor:renaming}. Impossibility of consensus therefore directly follows from Lemma~\ref{informal-weakspeedup}. 

\begin{corollary}\label{cor:consensus}
Consensus in 2-process system is impossible in \wf. 
\end{corollary}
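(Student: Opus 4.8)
\textbf{Proof plan for Corollary~\ref{cor:consensus}.}
The plan is to follow the same strategy as in the proof of Corollary~\ref{cor:renaming}, namely to show that the speedup task $(\m{I},\m{O}',\Delta')$ obtained by applying Properties~P0--P3 to 2-process consensus has the property that any algorithm solving it also solves consensus itself. Since Lemma~\ref{informal-weakspeedup} requires only $(t-1)$-independence (and \emph{not} edge-checkability, which consensus fails to satisfy in \wf), this suffices: first observe that consensus on two processes satisfies $t$-independence w.r.t. \wf\/ for every $t\geq 0$ (as noted in the excerpt, every 2-process task does), so Lemma~\ref{informal-weakspeedup} applies, and solvability of consensus in $t$ rounds implies solvability of $(\m{I},\m{O}',\Delta')$ in $t-1$ rounds; combined with the embedding of consensus into its speedup task, this yields that consensus in $t$ rounds implies consensus in $t-1$ rounds, and since 2-process consensus is trivially unsolvable in zero rounds, it is unsolvable in \wf.

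The core of the argument is therefore the analysis of $\m{O}'$ and $\Delta'$ for consensus. Recall $\val(\m{O})=\{0,1\}$ with the input-output relation $\tau\in\Delta(\sigma)$ iff all output values in~$\tau$ are equal and each is an input value appearing in~$\sigma$; in particular, for a facet $\sigma=\{(i,0),(k,1)\}$ we have $\{(i,0),(k,0)\}\in\Delta(\sigma)$ and $\{(i,1),(k,1)\}\in\Delta(\sigma)$ but neither $\{(i,0),(k,1)\}$ nor $\{(i,1),(k,0)\}$ is in $\Delta(\sigma)$. As in Corollary~\ref{cor:renaming}, I would work with sets in $2^{2^{\{0,1\}}}$, the candidate singletons being $\mb{0}=\{0\}$, $\mb{1}=\{1\}$, $\mb{X}=\{0,1\}$, with $\varnothing$ ruled out by~P0. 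For the communication simplex $J=\{i,k\}$, Property~P1 forces the existence of $(S_{i,J},S_{k,J})\in\bb{S}_{i,k,J}\times\bb{S}_{k,i,J}$ such that every $(y_i,y_k)\in S_{i,J}\times S_{k,J}$ is a legal output of consensus on some input facet containing $\{(i,x_i),(k,x_k)\}$; since a legal consensus output on two processes must be a \emph{constant} pair, this forces $S_{i,J}=S_{k,J}$ and both equal to a singleton $\mb{0}$ or $\mb{1}$ (if either contained both~$0$ and~$1$, the product would contain a non-constant pair). Then, reasoning on the facet involving $\bb{S}_{k,i,\{k\}}$ and using P2 ($|\bb{S}_{k,k,\{k\}}|=1$ and $\bb{S}_{k,i,\{k\}}=\bb{S}_{k,k,\{k\}}$ when $i\notin J_k$ in the relevant run), we get $\bb{S}_{k,i,\{k\}}\in\{\bb{O},\bb{I}\}$ where $\bb{O}=\{\mb{0}\}$, $\bb{I}=\{\mb{1}\}$; symmetrically $\bb{S}_{i,k,\{i\}}\in\{\bb{O},\bb{I}\}$.

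The final step is to show $\bb{S}_{i,k,\{i\}}=\bb{S}_{k,i,\{k\}}$ (this is where consensus differs from renaming, which forced \emph{inequality}): since $S_{i,J}=S_{k,J}$ is a fixed singleton, and P1 also links the $\{i\}$- and $\{k\}$-runs to the $J$-run via the equalities on the $S$'s across stars, the singleton chosen at $\bb{S}_{i,k,\{i\}}$ and at $\bb{S}_{k,i,\{k\}}$ must both coincide with $S_{i,J}=S_{k,J}$, forcing $\bb{S}_{i,k,\{i\}}=\bb{S}_{k,i,\{k\}}\in\{\bb{O},\bb{I}\}$. Replacing $\bb{O}$ by~$0$ and $\bb{I}$ by~$1$ gives a one-to-one correspondence between the relevant coordinates of a facet of $\m{O}'$ and a constant pair over $\{0,1\}$; moreover Property~P3 pins the input coordinate $x'_\ell=x_\ell$, so a solution to $(\m{I},\m{O}',\Delta')$ assigns to the two processes a common bit that is one of the two inputs — i.e., it solves consensus. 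The main obstacle I anticipate is being careful with the indexing of the sets $\bb{S}_{\ell,\ell',J_\ell}$ across the various communication simplices of \wf\/ on two processes (the simplices $\{i\}$, $\{k\}$, $\{i,k\}$ and the asymmetric "one sees the other" patterns), and in particular correctly invoking the "agreement across stars" clause of P1 together with P2 to propagate the singleton value from the symmetric run $J=\{i,k\}$ to the solo runs; the consensus-specific content — that legal outputs are constant pairs — is what makes all the relevant sets collapse to a single shared bit, but threading it through P0--P3 cleanly is the delicate part.
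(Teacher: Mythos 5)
Your proposal follows exactly the paper's own route: the paper establishes Corollary~\ref{cor:consensus} via the remark preceding it, namely that every 2-process task satisfies $t$-independence so Lemma~\ref{informal-weakspeedup} alone applies (no edge-checkability needed, which consensus lacks), and that the speedup task of consensus ``happens to include consensus itself, by the same type of arguments as in the proof of Corollary~\ref{cor:renaming}'', after which zero-round impossibility closes the induction. Your filled-in facet analysis of $\m{O}'$ --- P1 on the symmetric pattern $J=\{i,k\}$ forcing $S_{i,J}=S_{k,J}$ to a common singleton $\mb{0}$ or $\mb{1}$ (with validity coming from $\Delta(\tau)$ and P3 pinning the inputs), then the star-agreement clause of P1 together with P2 propagating that singleton to $\bb{S}_{i,k,\{i\}}$ and $\bb{S}_{k,i,\{k\}}$, so that equality is forced where renaming forced inequality --- is precisely the intended adaptation, so the proposal is correct and essentially identical to the paper's argument.
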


\subsubsection{Synchronous failure-free network computing}

As mentioned before, several models can satisfy the conditions in the statement of Theorem~\ref{informal-strspeedup}, beyond \local. This is, for instance, the case for some dynamic graph models, $\dyn(\m{F})$, where, at each round, one of the graphs in $\m{F}$ is chosen to be the underlying communication graph. This is also the case for some hypergraph models, $\hlocal(G)$, which is the natural extension of \local\/ to hypergraphs~$G$ (\textsf{H} stands for hypergraph). It can be shown that, in \hlocal, edge-checkability  is essentially equivalent, up to $\pm 1$ rounds, to  local checkability (see Lemma~\ref{lem:locdecvsedgdec}). 
The following result is a direct consequence of our generalised version of Theorem~\ref{informal-strspeedup} (see Theorem~\ref{strspeedup}). Interestingly, as communications produced by hypergraphs could be viewed as communications on a graph where hyperedges are transformed into cliques, the result below also shows that the ``large girth property'' is, to some extend,  not necessary for applying Theorem~\ref{informal-strspeedup}. For instance, Theorem~\ref{informal-strspeedup} could be applied to communication graphs such as the one represented in Figure~\ref{fig:hypertree}. The proof of the following can be found in Section~\ref{sec:application2hypergraphs}.

\begin{figure}[tb]
    \centering
     \includegraphics[scale = 0.1]{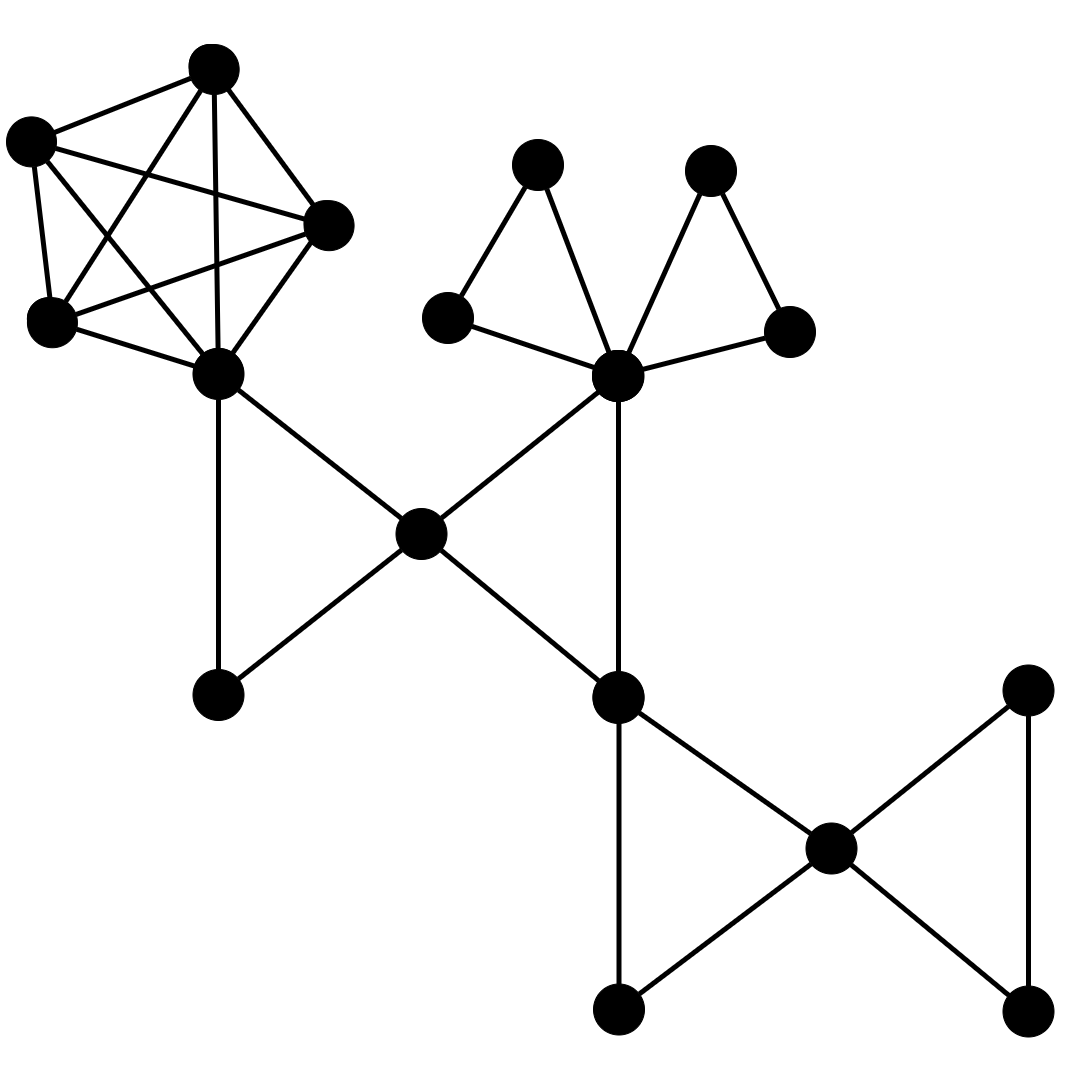}  
    \caption{A graph induced by a linear hypertree}
    \label{fig:hypertree}
\end{figure}

\begin{corollary}
Let  $H$ be a hypergraph on~$n$ nodes, let $\Pi=(\m{I},\m{O},\Delta)$ be a task for $n$ processes, and let $t\geq 1$. Let us assume that $\Pi$ satisfies the $r$-independence property w.r.t.~$\hlocal(H)$ for every $r\in\{0,\dots,t-1\}$, and that $\Pi$ is edge-checkable in~$\hlocal(H)$. Let us  assume the existence of a symmetry breaking mechanism among the processes in each hyperedge of~$H$.  
$\Pi$ is solvable in $t$ rounds in~$\hlocal(H)$ if and only if $\Pi^{(t)}=(\m{I},\m{O}^{(t)},\Delta^{(t)})$ is solvable in zero rounds in~$\hlocal(H)$, where $\Pi^{(t)}$ is the task defined by iterating $t$~times the construction defined by properties~P0-3.
\end{corollary}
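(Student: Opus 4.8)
The plan is to derive this corollary as a straightforward consequence of the generalised Theorem~\ref{strspeedup} (whose informal version is Theorem~\ref{informal-strspeedup}) by iterating it $t$~times. The single-step version states that if $\Pi$ satisfies $(t-1)$-independence w.r.t.~$\hlocal(H)$ and is edge-checkable in~$\hlocal(H)$, then $\Pi$ is solvable in $t$ rounds in~$\hlocal(H)$ if and only if $\Pi'=(\m{I},\m{O}',\Delta')$ is solvable in $t-1$ rounds in~$\hlocal(H)$. Writing $\Pi^{(0)}=\Pi$ and $\Pi^{(r)}=(\Pi^{(r-1)})'$ for $1\le r\le t$ (i.e., the task obtained by applying the P0--3 construction $r$~times), the goal is to chain the equivalences
\[
\Pi^{(0)} \text{ solvable in } t \text{ rounds} \iff \Pi^{(1)} \text{ solvable in } t-1 \text{ rounds} \iff \cdots \iff \Pi^{(t)} \text{ solvable in } 0 \text{ rounds}.
\]

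First I would check that the hypotheses of Theorem~\ref{strspeedup} propagate along the iteration. For the $r$-th step ($r$ going from $0$ to $t-1$), I need $\Pi^{(r)}$ to satisfy $(t-1-r)$-independence w.r.t.~$\hlocal(H)$ and to be edge-checkable in~$\hlocal(H)$. The independence hypothesis is the delicate part: recall from Section~\ref{sec:summary} that $(t-1-r)$-independence depends only on the input complex~$\m{I}$ and the model, and since the P0--3 construction does not change the input complex (by~P3, $\Delta'$ is defined on the same $\m{I}$), all the tasks $\Pi^{(r)}$ share the input complex~$\m{I}$ of~$\Pi$. Hence $\Pi^{(r)}$ satisfies $s$-independence w.r.t.~$\hlocal(H)$ for every $s$ for which $\Pi$ does, and in particular for every $s\in\{0,\dots,t-1\}$ by hypothesis. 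For edge-checkability, I would invoke the fact (to be established in the body of the paper, around the P0--3 analysis and the arguments used in Corollaries~\ref{cor:renaming} and~\ref{cor:consensus}) that the speedup construction preserves edge-checkability: the output complex $\m{O}'$ is built from skeletons of~$\m{O}$ and the simplex condition P1 is itself an ``edge'' condition (it quantifies over pairs $(i,k)$ with $i\in J_k$ or $k\in J_i$), so $\Delta'$ is determined by its restriction to pairs, i.e.\ $\Pi'$ is edge-checkable in~$\hlocal(H)$ whenever $\Pi$ is. Combined with the assumed symmetry-breaking mechanism inside each hyperedge (which is exactly what Theorem~\ref{informal-strspeedup} requires), every step of the iteration satisfies the theorem's hypotheses.

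With propagation in hand, the proof is a finite induction on $r$: the base case $r=0$ is the hypothesis on~$\Pi$; the inductive step applies Theorem~\ref{strspeedup} to $\Pi^{(r)}$ at round-parameter $t-r$, using $(t-r-1)$-independence (available since $t-r-1\in\{0,\dots,t-1\}$) and edge-checkability of $\Pi^{(r)}$, to get ``$\Pi^{(r)}$ solvable in $t-r$ rounds $\iff$ $\Pi^{(r+1)}$ solvable in $t-r-1$ rounds''. Composing the $t$ equivalences yields ``$\Pi$ solvable in $t$ rounds in $\hlocal(H)$ $\iff$ $\Pi^{(t)}=(\m{I},\m{O}^{(t)},\Delta^{(t)})$ solvable in $0$ rounds in $\hlocal(H)$'', which is the statement.

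The main obstacle I anticipate is not the induction itself but justifying that edge-checkability (and the $r$-independence for intermediate~$r$, though that one I expect to be easy via the invariance of~$\m{I}$) is genuinely preserved by the P0--3 construction in the hypergraph setting~$\hlocal(H)$ — this is where one must be careful, since the construction's bookkeeping of the sets $\bb{S}_{i,k,J_i}$ is indexed by communication patterns $J_i$ coming from~$H$, and one should confirm that closed simplices of $\hlocal(H)$ behave as required (in particular that the ``moreover'' clause in~P1, forcing $S_{i,k,J_i,J_k}$ to be independent of $J_k$, is consistent with the hypertree structure). I would handle this by pointing to the general analysis of P0--3 done earlier in the paper and to Lemma~\ref{lem:locdecvsedgdec} relating local checkability and edge-checkability in~\hlocal, noting that only the $t$-round (equivalently: finitely many rounds) regime is needed, so no issue of preserving the property ``uniformly in~$t$'' arises.
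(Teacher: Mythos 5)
Your proposal follows essentially the same route as the paper: invoke Theorem~\ref{strspeedup} iteratively and check that its preconditions propagate to each $\Pi^{(r)}$, with the independence hypothesis preserved exactly as you argue, because the P0--3 construction leaves the input complex $\m{I}$ unchanged. The one step you defer to a ``fact to be established elsewhere'' --- edge-checkability of the constructed task --- is settled by the paper in one line, not via edge-checkability of $\Pi$ or the pair-based P1 of the summary section, but by observing that in $\hlocal(H)$ the communication pattern is unique (so the per-hyperedge conditions P0--3 are checkable along each channel) and the model is undirected, which makes P2 trivially satisfied.
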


\subsection{Technical Summary}

Brandt's speedup theorem holds thanks to an operator~$\Phi$ which, given any task $\Pi=(\m{I},\m{O},\Delta)$, constructs a task $\Phi(\Pi)=(\m{I},\m{O}',\Delta')$ such that, for every $t\geq 1$, if $\Pi$ is $t$-independent w.r.t. the \local\/ model, and if $\Pi$ is edge-checkable in the \local\/ model then 
\[
\Pi \;\mbox{is solvable in $t$ rounds in \local} \iff \Phi(\Pi) \;\mbox{is solvable in $t-1$ rounds in \local.} 
\]
We have extended Brandt's operator $\Phi$ to an operator $\Phi^\star$ defined by Properties P0-3, which applies to any round-based model~$\m{M}$ supporting full-information protocols. We have extended the notion of local independence and edge-checkability to these models, which allows us to extend the equivalence above to all such communication model~$\m{M}$, using $\Phi^\star$ instead of $\Phi$. Interestingly, the operator $\Phi^\star$ is applicable even to asynchronous models such as \wf, and allows us to provide new impossibility proofs for consensus and renaming in 2-process systems. 
Defining an operator $\Phi_{\mathsf{WF}}$ transforming any task $\Pi=(\m{I},\m{O},\Delta)$ into a task $\Phi_{\mathsf{WF}}(\Pi)=(\m{I},\m{O}',\Delta')$ such that, under some conditions, for every $t\geq 1$ and $n\geq 2$, $\Pi$ is solvable in $t$ rounds in \wf\/ if and only if  $\Phi_{\mathsf{WF}}(\Pi)$ is solvable in $t-1$ rounds in \wf. Our operator $\Phi^\star$ applies to \wf\/ for $n=2$. Extending $\Phi^\star$ to some $\Phi_{\mathsf{WF}}$ for arbitrary~$n$ requires to overcome the fact that \wf\/ does not satisfy local independence for $n>2$. 

The remaining of the paper formalizes and generalizes the concepts and ideas provided in this section. 

\section{A General Model of Communication}
\label{sec:general-framework}

This section describes our framework. We assume that the reader is familiar with the basic concepts of combinatorial topology applied to distributed computing, and in particular with the notion of \emph{task} $(\m{I},\m{O},\Delta)$. The  reader unfamiliar with these concepts may refer to Appendix~\ref{sec:distributed-tasks}. 
Our framework encapsulates several standard communication media, from asynchronous crash-prone shared-memory computing to synchronous failure-free network computing, such as: 
\begin{itemize}
\item $\wf(n)$ refers to the asynchronous shared-memory model involving $n$ crash-prone processes interacting via iterated immediate snapshots~\cite{AttiyaW04}. Note that this model is computationally equivalent to the model with atomic read/write operations~\cite{Herlihy2013}. 
$\tr(n)$ refers to the same model as $\wf(n)$, except that at most $f$ processes can crash, where $1\leq f \leq n-1$~\cite{AttiyaW04}. 

\item $\local(G)$ refers to the synchronous failure-free message-passing model in networks~\cite{Peleg00}, i.e., processes are nodes of the (undirected) graph~$G$, computation and communication proceed in lockstep, and a message is exchanged along each edge of~$G$ between neighboring processes at every time step. This model can trivially be extended to $\hlocal(H)$, where $H$ is a hypergraph. 

\item $\dyn(\m{F})$ refers to the dynamic network model, which is the same as \local\/ except that the communication graph may evolve with time~\cite{CasteigtsFQS11,Ferreira04}. Specifically, given a family $\m{F}$ of $n$-node graphs on the same set of vertices, the communications occurring at any step are performed along the edges of one of the graphs in~$\m{F}$. Again, this model can trivially be extended to $\hdyn(\m{F})$ where $\m{F}$ is a family of hypergraphs. 
\end{itemize}

All these models share properties that are essential to our framework. First, they all include a notion of \emph{round}, either explicitly like in \local\/ and \dyn, or implicitly like in \wf\/ and \tr. Second, they all support \emph{full information} protocols, that is, whenever a process communicates, whether it be by sending/receiving messages to/from neighbors, or writing/reading in a shared memory, it communicates its entire history since the beginning of its execution. This assumption enables the design of \emph{robust} lower bounds, which still hold if the communication medium restricts the communication power somehow. Last but not least, none of these models place limitations on the individual computing power of each process, which enables the design of \emph{unconditional} lower bounds or impossibility results. 
Note that none of these models refer to IDs. The fact that nodes may or may not be provided with IDs is, in our framework, not a property of the communication model, but a property of the tasks to be solved in this model. This is formally specified below. 

\subparagraph{Names and Identifiers.}

Identifiers can be viewed as forming a special type of input values, for which it is assumed that every two different processes are assigned different IDs. We stress the fact that \emph{the ID of a process must not be confused with its name}. The former  is a value given to the process as input, while the latter is external, whose sole purpose it to refer to the process. In this paper, we always assume that \emph{processes are not aware of their names}, and \emph{processes may or may not be granted with IDs} as part of their inputs. 
 For $i\in [n]$, process~$i$ is denoted by $p_i$, and its identifier (if any) by $\id(p_i)$. It is systematically the case that $\name(p_i)=i$, while $\id(p_i)$ is some arbitrary value taken in some finite set of integers. When IDs are assigned to the processes, the input value of $p_i$ is therefore a pair $(x,y)$ where $x$ is the ID of~$p_i$, and $y$ is the input \emph{label} of $p_i$. Hence, a vertex of the input complex is of the form 
 $
 (i,s)=\big(\!\name(p_i),\val(p_i)\big)=\Big(\!\name(p_i),\big(\id(p_i),\lab(p_i)\big)\Big).
 $
A task $(\m{I},\m{O},\Delta)$ does not necessarily involve IDs in its specification, e.g., consensus. However, given a tasks $(\m{I},\m{O},\Delta)$ with no IDs, solving $(\m{I},\m{O},\Delta)$ with IDs in $[1,N]$ is merely the task  $(\m{I}',\m{O},\Delta')$ where $\{(i,(x_i,v_i)):i\in I\}$ is in~$\m{I}'$  if  (1)~$\{(i,v_i):i\in I\}$ is in~$\m{I}$, and, (2)~for every $i\in I$, $x_i\in [1,N]$ with  $x_i\neq x_j$ for every $j\in I\smallsetminus\{i\}$.  Moreover, for every $\sigma'=\{(i,(x_i,v_i)):i\in I\}\in \m{I}'$ and $\tau\in\m{O}$, we set $\tau\in\Delta'(\sigma')$ whenever $\tau\in\Delta(\sigma)$ where $\sigma=\{(i,v_i):i\in I\}\in \m{I}$. 

\subparagraph{Communication Models.}

We now describe a way to encapsulate various communication models in a single general framework. Every  communication medium is modeled as a complex $\m{M}$. To describe this complex, let $\m{S}_n$ be the $(n-1)$-dimensional chromatic pseudo-sphere induced by~$2^{2^{[n]}}$ (cf. Appendix~\ref{app:element-topo}). A vertex of $\m{S}_n$ is a pair $(i,E)$ with $i\in [n]$, and $E\subseteq 2^{[n]}$. The semantics of such a vertex is that $p_i$ receives information from all  $p_j, j\in e$, for every $e\in E$.  Given a vertex $(i,E)$ of $\m{S}_n$, each element $e\in E$ is called a \emph{channel}.
In fact, we will consider only vertices $(i,E)$ where $\{i\}\in E$, that is, we systematically assume that a process has a (private) channel to itself, a.k.a.~a ``self-loop''. Also, w.l.o.g., for simplifying and unifying the presentation, we will consider only vertices $(i,E)$ where, for every $e\in E$, $i\in e$. 

\begin{definition}\label{Def:comm} 
A communication model is a pure $(n-1)$-dimensional sub-complex $\m{M}$ of~$\m{S}_n$ such that, for every vertex $(i,E)\in \m{M}$, $\{i\}\in E$, and, for every $e\in E$, $i\in e$.
\end{definition}

In many classical models, every channel has cardinality~2 (putting aside the self-loops). Given a vertex $(i,E)$, and a channel $e=\{i,j\}\in E$, the semantic of this channel is that $p_i$ receives information from~$p_j$ via the channel~$e$. Whenever all channels have cardinality~$\leq 2$, the vertices of~$\m{M}$ can merely be represented as pairs $(i,E)$ where $E\subseteq [n]$, and $i\in E$. In that case, an element $j\in E$ represents a channel $\{i,j\}$. It is however worth considering the case where channels have larger cardinalities, as it actually helps in the statement of our results, and it makes these results directly applicable to models involving multiparty communications, e.g., involving hypergraphs (a channel is then merely a hyperedge). 

The semantics of a simplex $\{(i,E_i):i\in I\}$, $I\subseteq [n]$, of $\m{M}$ is that the model allows an instance of communication in which,  during a \emph{same} round, every process $i\in I$ receives a message from all processes in the channels belonging to~$E_i$.  Here are a few examples, where the first three deal with channels involving two processes (therefore the sets~$E_i$ are merely subsets of~$[n]$), while the fourth one involves potentially larger channels. 

\begin{itemize}
\item $\wf(n)$: for every non-empty $I\subseteq [n]$, a set $\{(i,E_i):i\in I\}$ is a simplex of $\m{M}$ if, for every $i,j\in I$, we have 
$E_i\subseteq [n]$, ($E_i\subseteq E_j \lor E_j\subseteq E_i$), and we have
$E_j\subseteq E_i$ whenever $j\in E_i$. $\m{M}$ is actually isomorphic to the chromatic subdivision of the complete complex with vertex set $V=\{(i,\bot):i\in [n]\}$ (see~\cite{HerlihyS99}). 

\item $\tr(n)$: for every non-empty $I\subseteq [n]$, a set $\{(i,E_i):i\in I\}$ is a simplex of $\m{M}$ if, for every $i,j\in I$, we have 
$E_i\subseteq [n]$, ($E_i\subseteq E_j \lor E_j\subseteq E_i$),
$E_j\subseteq E_i$ whenever $j\in E_i$, and $|E_i|\geq n-f$. Indeed, since at most $f$ processes can crash, every process can wait for at least $n-f-1$ other processes before proceeding to the next round. 

\item $\local(G)$: given an $n$-node graph $G$ with nodes labeled from~1 to~$n$, for every non-empty $I\subseteq [n]$, a set $\{(i,E_i):i\in I\}$ is a simplex of $\m{M}$ if, for every $i\in I$, $E_i=N_G[i]$ where $N_G[i]$ denotes the closed neighborhood of node~$i$ in~$G$. In $\dyn(\m{F})$, given a family $\m{F}$ of $n$-node graphs with nodes labeled from~1 to~$n$, for every non-empty $I\subseteq [n]$, a set $\{(i,E_i):i\in I\}$ is a simplex of $\m{M}$ if there exists $G\in \m{F}$ such that, for every $i\in I$, $E_i=N_G[i]$. 

\item $\hlocal(H)$: given an $n$-node hypergraph $H$ with nodes labeled from~1 to~$n$, for every non-empty $I\subseteq [n]$, a set $\{(i,E_i):i\in I\}$ is a simplex of $\m{M}$ if, for every $i\in I$, $E_i=E_H(i)$ where $E_H(i)$ denotes the set of hyperedges of~$H$ containing node~$i$. For the sake of technical uniformity, we assume, w.l.o.g., that $\{i\}\in E_H(i)$. 
\end{itemize}

Some models $\m{M}$ are \emph{non-deterministic}, in the sense that, given a set $I\subseteq [n]$ of processes, there are more than one simplices $\sigma\in \m{M}$ with $\name(\sigma)=I$. This is the case for  \wf, \tr, and \dyn. This reflects the fact that the pattern of communication may differ at each round, whether it be because the processes are asynchronous, because some processes have crashed, or because the communication network evolves with time. Instead, \local\/ and $\hlocal\/$ are deterministic in the sense that the communication pattern performed at each round is identical through time.
Beyond the standard models captured by our formalism, the formalism is flexible enough to capture a vast class of other models, including the following asynchronous wait-free variant of \local. 
\begin{itemize}
\item $\wflocal(G)$ denotes the model~$\m{M}$ in which, given an $n$-node graph $G$ with nodes labeled from~1 to~$n$,  for every non-empty $I\subseteq [n]$, a set $\{(i,E_i):i\in I\}$ is a simplex of $\m{M}$ if, for every $i\in I$, we have $E_i\subseteq N_G[i]$, and, for every  $U \subseteq I$, if the subgraph of $G$ induced by $U$ is a clique, then for every $i,j\in U$, 
$(E_j\cap U \subseteq E_i \cap U \lor E_i\cap U \subseteq E_j \cap U$), and 
$E_j\cap U \subseteq E_i$ whenever $j\in E_i$. (In particular, for every edge $\{i,j\}\in E(G)$, $j\in E_i$ or $i\in E_j$, or both).
\end{itemize}

\subparagraph{Open and Closed Simplices.}

There are two types of simplices in~$\m{M}$. A \emph{closed} simplex is a simplex  $\{(i,E_i):i\in I\}$ such that the set of processes in the union of all channels is~$I$, i.e., $\bigcup_{i\in I}\bigcup_{e\in E_i} e=I$. A simplex that is not closed is called \emph{open}. Closed simplices play an important role as they can be used to model various scenarios in which the processes in~$I$ are disconnected from the other processes, whether it be because the latter crashed, or because the processes in~$I$ are forming a connected component of a disconnected network. In $\wf(n)$, for every $I\subset [n]$, there are simplices $\sigma\in\m{M}$ with $\name(\sigma)=I$ that are closed, and there are simplices $\sigma\in\m{M}$ with $\name(\sigma)=I$ that are open. Instead, in \local\/ and \dyn, as long as the networks are connected,  only the facets of $\m{M}$ are closed, and all the lower dimensional simplices are open. 
Similarly, in $\tr(n)$, any simplex~$\sigma$ with  $|\name(\sigma)| < n-f$ is necessarily open. 

\subparagraph{Local Encoding of the processes and channels. }

In the presence of IDs assigned to the processes, a process~$i$ can trivially identify the other processes from which it receives information thanks to their IDs. In absence of IDs, and/or in the presence of large channels, we assume a mechanism baring similarities with the \emph{port-numbers} in anonymous variants of~\local~\cite{HirvonenS2020}. Specifically, for every process~$i$, each channel~$e$ incident to~$i$ is uniquely identified by~$p_i$, and every process~$j\in e$ is also unambiguously identified by~$p_i$, that is, if $j\in e\cap e'$ for two channels $e$ and $e'$ incident to~$i$, process~$i$ correctly identifies the information received from~$j$ as information received from a  same process. Moreover, all processes in a same channel $e$ identify $e$ with the same name.

\subparagraph{Communication Map and Protocol Complexes.}

We now describe the evolution of the system along with time when a full-information protocol is executed for solving a task. 
To every communication model $\m{M}$ corresponds a \emph{communication} map~$\Xi$ that applies to any chromatic complex~$\m{K}$, defined as follows. Let $\sigma=\{(i,v_i):i\in I\} \in \m{K}$, where $I\subseteq [n]$, and let us assume that there exists a closed simplex $\varphi=\{(i,E_i):i\in I\}$ in~$\m{M}$. We set 
$$
\Xi(\sigma,\varphi)=\Big \{\big (i,\big\{\{v_j:j\in e\}:e\in E_i\big \}\big ):i\in I\Big \}. 
$$
Note that, for every $i\in I$, and every $e\in E_i$, $\{v_j:j\in e\}$ is a multiset, and so is $\big\{\{v_j:j\in e\}:e\in E_i\big \}$. The elements of both multisets are indexed locally by process~$i$, thanks to the local identification of the channels and of the other processes. 
For every $\sigma\in\m{K}$, we define
$
\Xi(\sigma)=\{\Xi(\sigma,\varphi) : (\varphi \in \m{M})\wedge(\name(\varphi)=\name(\sigma))\wedge(\mbox{$\varphi$ is closed})\}.
$
The communication map $\Xi$ merely reflects the fact that if the processes in~$\sigma$ interact among themselves according to $\varphi\in\m{M}$, then they will end up in the state $\Xi(\sigma,\varphi)$ in $\Xi(\sigma)$. Note that if all simplices $\varphi\in \m{M}$ with $\name(\varphi)=\name(\sigma)$ are open, then 
$
\Xi(\sigma)=\varnothing.
$
An empty $\Xi(\sigma)$ reflects the fact that there are no communication patterns for which the processes in $\name(\sigma)$ communicate solely among themselves. We say that a simplex $\sigma\in\m{K}$ is closed in model~$\m{M}$ if there exists a closed simplex $\varphi\in \m{M}$ with $\name(\varphi)=\name(\sigma)$. The simplex $\sigma$ is open otherwise. 

\begin{definition}
Given a communication model~$\m{M}$, and given a pure $(n-1)$-dimensional chromatic complex~$\m{K}$, we define $\Xi(\m{K})$ as the closure of the images by~$\Xi$ of all the closed simplices of~$\m{K}$. In other words, 
$
\Xi(\m{K}) = \bigcup_{\sigma\in \m{K}, \; \sigma \; \mbox{\rm\footnotesize closed}}\Cl(\Xi(\sigma)).
$ 
\end{definition}

Note that, by definition of the closure operator, $\Xi(\m{K})$ is a complex. Moreover, this complex is pure, with dimension~${n-1}$. It is precisely the complex representing all possible states of the system after one round of communication starting from the state complex~$\m{K}$. The communication map~$\Xi$ induced by a model~$\m{M}$ enables to specify the evolution of the system along with the course of an execution starting from any initial state. 

\begin{definition}
Given the input complex $\m{I}$ of some task $(\m{I},\m{O},\Delta)$, the \emph{protocol complex} at time $t\geq 0$, denoted by~$\m{P}^{(t)}$, is defined as $\m{P}^{(0)}=\m{I}$, and, for every $t>0$, $\m{P}^{(t)}=\Xi(\m{P}^{(t-1)})$. 
 \end{definition}

By definition, a vertex of the protocol complex $\m{P}^{(t)}$ is a pair $(i,w)$ where $w$ is a possible \emph{view} of the system as perceived by process~$i$ at time~$t$. This view depends on the  history experienced by process~$i$ during $t$ rounds of communication, and is very much depending on the communication model. For instance, in $\local(G)$, the view~$w$ is a ball of radius~$t$ in the graph~$G$, whose nodes are labeled by their input values (and their IDs if the task at hand assumes identifiers).  In \wf, $w$ results from the sequence of $t$ immediate-snapshot instructions performed at the successive levels~1 to~$t$ of the shared memory, whose values depend on the interleavings of these instructions performed asynchronously by all the processes. By definition, a set $\{(i,w_i):i\in I\}$ of vertices of $\m{P}^{(t)}$, with $I\subseteq [n]$, forms a simplex of $\m{P}^{(t)}$ if the views $w_i$, $i\in I$, are mutually compatible, i.e., there is a sequence of $t$ communication rounds leading every process $i\in I$ to acquire the view~$w_i$. 

\subparagraph{Computation as Simplicial Maps.}

In the context of full-information protocols, the design of an algorithm boils down to computing an output at every process after a given number~$t\geq 0$ of communication rounds. That is, a $t$-round algorithm consists of (1)~communicating for $t$~rounds, and (2)~computing an output at each process. A $t$-round algorithm for a task $(\m{I},\m{O},\Delta)$ is therefore merely a chromatic (i.e., name-preserving) function 
$
\delta:V(\m{P}^{(t)})\to V(\m{O})
$
mapping every vertex $(i,s)\in V(\m{P}^{(t)})$ to some vertex $(i,v)\in V(\m{O})$. The semantic of this mapping is that process~$i$ in state~$s$  at round~$t$ outputs the value~$v$. For the algorithm to be correct, it must be the case that $\delta$ is \emph{simplicial}, i.e., it maps simplices to simplices. Indeed any possible global state of the system at time~$t$ must be mapped to some legal global output state.  Moreover, $\delta$ must  agree with the specification $\Delta$ of the task at hand, that is, for every closed simplex $\sigma\in \m{I}$, it must be the case that 
$
\delta (\Xi^t(\sigma))\subseteq \Delta(\sigma).
$
This guarantees that, for every closed input state $\sigma\in\m{I}$, which may evolve in any of the states $\tau_1,\dots,\tau_k$ of $\m{P}^{(t)}$ after $t$ rounds (i.e., after $t$ applications of $\Xi$), the output~$\delta(\tau_i)$ of each of these states is legal w.r.t. the specification $\Delta$ of the task, stating that the output state~$\delta(\tau_i)$  must be one of the states listed in~$\Delta(\sigma)$. Note that if $\sigma$ is closed, then so are all simplices in $\Xi(\sigma)$, and therefore $\Xi^t(\sigma)\neq\varnothing$. The inclusion $\delta (\Xi^t(\sigma))\subseteq \Delta(\sigma)$ is not enforced for open simplices $\sigma$. This is because there are no executions in $\m{M}$ that let the processes in $\sigma$  solely interacting among themselves.  
Finally, the map $\delta$ must be \emph{name-independent}, that is, for every two vertices $(i,w)$ and $(j,w)$ of~$\m{P}^{(t)}$, one must have $\delta(i,w)=(i,x)$ and $\delta(j,w)=(j,x)$, for the same output value~$x$.  This is because the name~$i$ of process~$i$ is not part of its input. The identifier $\id(p_i)$ may however be part of $p_i$'s input. Obviously, a setting assuming that $\id(p_i)=i$ for every $i\in[n]$, may equivalently  be viewed as assuming no IDs, and then disgarding the constraint of name-independence by allowing the outcomes of~$\delta$ to depend on the names.   

\subparagraph{Computability}

The following theorem (whose proof can be found in Appendix~\ref{app:iff-general}, for the sake of completeness) fully characterizes the ability to solve a task in a given number of rounds. It can be summarized as ``the diagram in Figure~\ref{fig:topoiff} commutes''.  Its statement is a straightforward generalization of similar statements in~\cite{Herlihy2013}. 

\begin{theorem}\label{theo:iff-general}
Let $\m{M}$ be a communication model, let $\Xi$ be the associated communication map, and let $(\m{I},\m{O},\Delta)$ be a task. For every $t\geq 0$, there exists a $t$-round algorithm solving $(\m{I},\m{O},\Delta)$ in model~$\m{M}$ if and only if there exists a chromatic name-independent simplicial map $\delta:\m{P}^{(t)}\to \m{O}$ such that, for every closed simplex $\sigma\in \m{I}$, $\delta (\Xi^t(\sigma))\subseteq \Delta(\sigma)$. 
\end{theorem}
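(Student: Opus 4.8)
### Proof plan for Theorem~\ref{theo:iff-general}

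The plan is to prove both directions of the equivalence by unwinding the definitions of the protocol complex $\m{P}^{(t)}$, the communication map $\Xi$, and the notion of a $t$-round full-information algorithm, exactly as in the classical wait-free setting of~\cite{Herlihy2013}, but carried out at the level of generality of our model~$\m{M}$.

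\textbf{From an algorithm to a simplicial map.} Suppose a $t$-round algorithm $\m{A}$ solves $(\m{I},\m{O},\Delta)$ in~$\m{M}$. Since $\m{M}$ supports full-information protocols, without loss of generality $\m{A}$ consists of $t$ rounds of communication in which every process forwards its entire history, followed by a local output computation. The key point is that after $t$ rounds of communication, the global states reachable starting from an input simplex $\sigma\in\m{I}$ are exactly the simplices of $\Xi^t(\sigma)$; this is immediate from the definition $\m{P}^{(t)}=\Xi(\m{P}^{(t-1)})$ together with $\m{P}^{(0)}=\m{I}$ and an induction on~$t$. Hence the output computation of $\m{A}$ at process~$i$ is a function of the vertex $(i,s)\in V(\m{P}^{(t)})$ alone, and it produces a vertex $(i,v)\in V(\m{O})$; define $\delta(i,s)=(i,v)$. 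The map $\delta$ is chromatic by construction. It is name-independent because a process does not know its own name: if $(i,w)$ and $(j,w)$ are two vertices of $\m{P}^{(t)}$ with the same view~$w$, the algorithm must produce the same output value. It is simplicial: if $\{(i,w_i):i\in I\}$ is a simplex of $\m{P}^{(t)}$, the views $w_i$ are mutually compatible, i.e.\ they arise from a common execution on some closed input simplex, and correctness of $\m{A}$ forces the resulting output tuple $\{\delta(i,w_i):i\in I\}$ to be a simplex of $\m{O}$. Finally, agreement is exactly the correctness of $\m{A}$ on every closed input simplex: for $\sigma\in\m{I}$ closed, every reachable output is legal, i.e.\ $\delta(\Xi^t(\sigma))\subseteq\Delta(\sigma)$.

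\textbf{From a simplicial map to an algorithm.} Conversely, given a chromatic name-independent simplicial map $\delta:\m{P}^{(t)}\to\m{O}$ agreeing with $\Delta$, define the algorithm that runs $t$ full-information communication rounds and then has each process~$i$, holding view $s$ (so that $(i,s)\in V(\m{P}^{(t)})$), output the value $v$ where $\delta(i,s)=(i,v)$. This value is well-defined on the process's local information since $\delta$ is a function of $(i,s)$, and it does not depend on the name~$i$ by name-independence, so it is a legitimate algorithm. For correctness, fix any execution of the algorithm; it starts from some closed input simplex $\sigma\in\m{I}$ and, after $t$ rounds, reaches a global state $\tau\in\Xi^t(\sigma)$. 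Because $\delta$ is simplicial, $\delta(\tau)$ is a simplex of $\m{O}$, i.e.\ a legal global output state; because $\delta$ agrees with $\Delta$, $\delta(\tau)\in\Delta(\sigma)$, so the output is legal with respect to the input. Since $\sigma$ is closed, $\Xi^t(\sigma)\neq\varnothing$, so this covers all executions.

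\textbf{Main obstacle.} The conceptual content is light, but the delicate point—exactly as flagged in the text—is the correspondence ``reachable global states after $t$ rounds $=$ simplices of $\Xi^t(\sigma)$'', and in particular the bookkeeping around \emph{closed} simplices. One must be careful that the inclusion $\delta(\Xi^t(\sigma))\subseteq\Delta(\sigma)$ is imposed, and needs to be imposed, only for closed $\sigma$: open simplices correspond to no execution in which exactly the named processes interact among themselves, so there is nothing to enforce, and conversely a closed $\sigma$ has $\Xi(\sigma)\neq\varnothing$ with all simplices of $\Xi(\sigma)$ again closed, which is what makes the induction $\Xi^t(\sigma)\neq\varnothing$ go through. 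The other routine-but-necessary check is that full information lets us assume the normal-form (communicate-then-compute) structure of algorithms without loss of generality; this is where the ``full-information'' and ``unlimited individual computational power'' hypotheses on~$\m{M}$ are actually used.
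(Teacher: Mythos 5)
Your proposal is correct and follows essentially the same route as the paper's own proof: extracting $\delta(i,s)=(i,\alg(s))$ from the algorithm and checking chromaticity, name-independence, simpliciality, and agreement on closed input simplices, then conversely defining $\alg(s)=\val(\delta(i,s))$ and using name-independence for well-definedness and simpliciality plus agreement for correctness. Your added remarks on closed versus open simplices and the full-information normal form are consistent with (and implicit in) the paper's argument.
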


\section{Speedup Theorem}
\label{sec:speed}

This section presents another illustration of the flexibility of our model, by establishing a result generalizing to other models the speedup theorem by Brandt~\cite{Brandt19} stated for \local. 
Given a task $\m{T}=(\m{I},\m{O},\Delta)$ Brandt's speedup theorem enables to automatically construct a task $\m{T}'=(\m{I},\m{O}',\Delta')$ such that $\m{T}$ is solvable in $t$~rounds in $\local(G)$ if and only if $\m{T}'$ is solvable in $t-1$~rounds in $\local(G)$. Although the theorem holds for specific graphs~$G$ only, and under some conditions to be satisfied by the task~$\m{T}$, it provides a powerful tool for the design of lower bounds. For extending this theorem to models beyond \local, we define the notion of speedup in general. Let  $\m{M}$ be a communication model, and let  $\Xi$ be its associated communication map. 

\begin{definition}\label{Def:speedup}
Let $\m{T}=(\m{I},\m{O},\Delta)$ be a task. A task $\m{T}'=(\m{I},\m{O}',\Delta')$ is a \emph{speedup} of $\m{T}$ for $\m{M}$ if the following two conditions hold: 
\begin{enumerate}
    \item for every $t\geq 1$, if there exists a $t$-round algorithm solving $\m{T}$ in~$\m{M}$, then there exists a simplicial map $\alpha:\m{P}^{(t-1)}\to \m{O}'$ such that, for every closed simplex $\sigma\in \m{I}$, $\alpha (\Xi^{t-1}(\sigma))\subseteq \Delta'(\sigma)$; 
    \item there exists a simplicial map $\beta:\Xi(\m{O}')\to \m{O}$ such that, for every closed simplex $\sigma\in \m{I}$, $\beta(\Xi(\Delta'(\sigma)))\subseteq \Delta(\sigma)$.
\end{enumerate}
\end{definition}

Figure~\ref{fig:speedup} provides a graphical representation of Definition~\ref{Def:speedup}. The first condition expresses the fact that the task $\m{T}'$ is solvable in $t-1$ rounds whenever the original task $\m{T}$ is solvable in $t$~rounds. The second condition essentially expresses the fact that the task $\m{T}$ is solvable in a single round whenever the processes are given as input a solution for~$\m{T}'$. The second condition therefore guarantees that the round-complexity of task $\m{T}'$ does not decrease too much compared to the complexity of task~$\m{T}$.

\begin{lemma}\label{lem:speedup}
If a task $\m{T}'$ is a speedup of a task $\m{T}$ in~$\m{M}$, then, for every $t\geq 1$, $\m{T}$ is solvable in at most $t$ rounds in~$\m{M}$ if and only if $\m{T}'$ is solvable in at most $t-1$ rounds in~$\m{M}$. 
\end{lemma}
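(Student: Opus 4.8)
The plan is to prove the two implications of the equivalence separately, in each case passing between ``solvable in $r$ rounds'' and ``existence of a simplicial map out of $\m{P}^{(r)}$'' by means of Theorem~\ref{theo:iff-general}. Throughout I would use two elementary facts. First, since the speedup task $\m{T}'=(\m{I},\m{O}',\Delta')$ shares its input complex with $\m{T}=(\m{I},\m{O},\Delta)$, the two tasks have the very same protocol complex $\m{P}^{(r)}=\Xi^r(\m{I})$ at every round~$r$; in particular $\m{P}^{(t)}=\Xi(\m{P}^{(t-1)})$. Second, in a full-information round-based model one can always pad an $r'$-round algorithm with $r-r'$ silent communication rounds, so ``solvable in at most $r$ rounds'' and ``solvable in exactly $r$ rounds'' coincide. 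Fix $t\ge 1$.

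For the forward implication I would start from a $t$-round algorithm for $\m{T}$ (obtained by padding if $\m{T}$ is solvable in fewer rounds) and feed its decoding map into Condition~1 of Definition~\ref{Def:speedup}. This directly produces a chromatic simplicial map $\alpha:\m{P}^{(t-1)}\to\m{O}'$ with $\alpha(\Xi^{t-1}(\sigma))\subseteq\Delta'(\sigma)$ for every closed $\sigma\in\m{I}$, which by Theorem~\ref{theo:iff-general} is exactly a $(t-1)$-round algorithm for $\m{T}'$. The only point to check here is that $\alpha$ may be taken name-independent; this holds because $\Xi$, the star and closure operators, and the decoding map are all name-respecting, hence preserve name-independence.

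For the converse I would start, via Theorem~\ref{theo:iff-general}, from a chromatic name-independent simplicial map $\alpha:\m{P}^{(t-1)}\to\m{O}'$ with $\alpha(\Xi^{t-1}(\sigma))\subseteq\Delta'(\sigma)$ for closed $\sigma\in\m{I}$, and combine it with the map $\beta:\Xi(\m{O}')\to\m{O}$ supplied by Condition~2. The bridge between them is functoriality of the communication map: a chromatic name-independent simplicial map $f:\m{K}\to\m{L}$ induces a chromatic simplicial map $\Xi(f):\Xi(\m{K})\to\Xi(\m{L})$ that relabels, inside each post-round view, every constituent value by its $f$-image, and satisfies $\Xi(f)\circ\Xi=\Xi\circ f$ on closed simplices. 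Taking $f=\alpha$ and setting $\delta:=\beta\circ\Xi(\alpha):\m{P}^{(t)}\to\m{O}$, one gets a simplicial map (recall $\m{P}^{(t)}=\Xi(\m{P}^{(t-1)})$), and for every closed $\sigma\in\m{I}$,
\[
\delta\big(\Xi^{t}(\sigma)\big)=\beta\Big(\Xi(\alpha)\big(\Xi(\Xi^{t-1}(\sigma))\big)\Big)=\beta\Big(\Xi\big(\alpha(\Xi^{t-1}(\sigma))\big)\Big)\subseteq\beta\big(\Xi(\Delta'(\sigma))\big)\subseteq\Delta(\sigma),
\]
using in turn $\Xi(\alpha)\circ\Xi=\Xi\circ\alpha$, then $\alpha(\Xi^{t-1}(\sigma))\subseteq\Delta'(\sigma)$, then Condition~2. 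By Theorem~\ref{theo:iff-general}, $\delta$ is a $t$-round algorithm for $\m{T}$, so $\m{T}$ is solvable in at most $t$ rounds.

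The hard part will be making the functoriality statement precise and tracking names through it: one has to check that $\Xi(\alpha)$ is well defined on all of $\Xi(\m{P}^{(t-1)})$ and not just on the simplices $\Xi(\sigma,\varphi)$, that its image really lies in $\Xi(\m{O}')$ and not in some strictly larger complex, and that name-independence of $\alpha$ — which is in fact what makes $\Xi(\alpha)$ definable at all, since a process identifies its neighbours only locally — together with name-independence of $\beta$ delivers a name-independent $\delta$. Everything else reduces to the two short computations above.
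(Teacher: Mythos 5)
Your proposal is correct and takes essentially the same route as the paper: the forward direction is exactly Condition~1 of Definition~\ref{Def:speedup} combined with Theorem~\ref{theo:iff-general}, and your composite $\delta=\beta\circ\Xi(\alpha)$ is the simplicial formalization of the paper's two-phase algorithm (run the $(t-1)$-round solution of $\m{T}'$, exchange its outputs in one extra communication round, then apply $\beta$), with the same correctness chain $\tau\in\Delta'(\sigma)$, $\tau'\in\Xi(\tau)$, $\beta(\tau')\in\Delta(\sigma)$. The paper states that second phase operationally, which sidesteps the functoriality of $\Xi$ and the name-independence bookkeeping you flag as the ``hard part,'' but the underlying argument is identical.
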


\begin{proof}
    If $\m{T}$ is solvable in at most $t$ rounds then the first condition of Definition~\ref{Def:speedup} guarantees the existence of a simplicial map $\alpha:\m{P}^{(t-1)}\to \m{O}'$ that agrees with $\Delta'$. Thanks to  Theorem~\ref{theo:iff-general}, this guarantees that $\m{T}'$ is solvable in $t-1$ rounds. 

Conversely, let us assume that $\m{T}'$ is solvable in at most $t-1$ rounds. A $t$-round algorithm for $\m{T}$ proceeds in two phases, as follows. During the first phase, $t-1$ communication rounds are performed, and a solution for $\m{T}'$ is computed at each process. During the second phase, a single round of communications is performed, during which the processes exchange their outputs for $\m{T}'$. After this final round, every process outputs the value resulting from the application of $\beta$ to the collection of outputs in $\val(\m{O}')$ collected during the second phase. Let $\sigma\in\m{I}$ be a closed simplex. By Theorem~\ref{theo:iff-general}, the first phase results in a global state $\tau\in\Delta'(\sigma)$. Let $\tau'\in\Xi(\tau)$. The second condition of Definition~\ref{Def:speedup} guarantees that $\beta(\tau')\in \Delta(\sigma)$. Therefore, our $t$-round algorithm allows the processes to output a solution for $\m{T}$ that agrees with the input~$\sigma$. 
\end{proof}

\section{Generalized Brandt's Speedup Mechanism}
\label{sec:generalizedBrandt}

We now introduce a general approach susceptible to construct a speedup task $\m{T}'$ of any given task~$\m{T}=(\m{I},\m{O},\Delta)$, under any model~$\m{M}$. Let $\delta:\m{P}^{(t)}\to \m{O}$ be a simplicial map that agrees with~$\Delta$. There is a natural candidate $\m{T}'=(\m{I},\m{O}',\Delta')$ that may be a speedup of~$\m{T}$. Given a vertex $(i,v)\in\m{P}^{(t-1)}$, recall that we denote by $\St(i,v)$ the actual closure $\Cl(\St(i,v))$ of the star~$\St(i,v)$. Hence, $\St(i,v)$ is a complex (while $\St(i,v)$ is usually the set of all simplices of $\m{P}^{(t-1)}$ containing~$(i,v)$). 
Let us set 
$
\alpha(i,v)=\Big (i,\delta\big (\Xi(\St(i,v))\big )\Big)
$
for every $(i,v)\in\m{P}^{(t-1)}$. We have that $\Xi(\St(i,v))$ is a complex, that is pure, and of dimension~$n-1$. This complex is actually a sub-complex of~$\m{P}^{(t)}$. The simplicial map~$\delta$ maps this sub-complex to a sub-complex of~$\m{O}$. In other words, the image of a vertex $(i,v)\in\m{P}^{(t-1)}$ is a pair $(i,\m{K})$ where $\m{K}$ is a pure $(n-1)$-dimensional sub-complex of~$\m{O}$. Therefore, a good candidate for $\m{O}'$ is the chromatic simplicial complex on vertex set 
$
V(\m{O}')=\{\alpha(i,v):(i,v)\in \m{P}^{(t-1)}\}, 
$
such that a non-empty set $\{(i,\m{K}_i):i\in I\}$ of vertices of $\m{O}'$, with $I\subseteq [n]$, is a simplex of $\m{O}'$ if there exists a simplex $\{(i,v_i):i\in I\}\in \m{P}^{(t-1)}$ such that, for every $i\in I$, $\m{K}_i=\alpha(i,v_i)$. By construction, the map $\alpha:\m{P}^{(t-1)}\to \m{O}'$ is simplicial, which is the first requirement of Definition~\ref{Def:speedup}. However, it is uneasy to push this approach further, for two reasons. First, $\m{O}'$ depends on~$\delta$, and, second, it is not clear which conditions need to be satisfied for guaranteeing the existence of~$\beta:\Xi(\m{O}')\to \m{O}$. Yet, we were inspired by this approach for our generalization of Brandt's speedup theorem. In essence, the approach followed in the proof of Brandt's speedup theorem consists of finding a clever way to split the complex $\delta\big (\Xi(\St(i,v))\big )$ into a collection of sets of output values, one for each each edge incident to~$p_i$ in the graph~$G$ of $\local(G)$. We shall follow the same approach, for each channel incident to process~$i$ in~$\m{M}$. 

First, we present a speedup construction satisfying the first condition of Definition~\ref{Def:speedup} only. Obviously, satisfying that first condition is straightforward, by taking any task~$\m{T}'=(\m{I},\m{O}',\Delta')$ solvable in zero rounds (e.g., pick $\m{O}'=\Cl(\{(i,\bot):i\in[n]\})$ with $\Delta'(\sigma)=\big\{\{(i,\bot):i\in \name(\sigma)\}\big\}$ for any $\sigma\in\m{I}$). Our construction is non-trivial in the sense that, under additional assumptions, it also satisfies the second condition of Definition~\ref{Def:speedup}. In absence of these assumptions, only the first condition is guaranteed to hold, and thus we call that result the \emph{weak speedup lemma}. Note that, depending on the context, such a lemma may potentially yield a task~$\m{T}'$ whose complexity is smaller than~$t-1$  but non-necessarily zero (e.g., $t/2$, or $\sqrt{t}$), which might be sufficient to establish non-trivial lower bounds. 
Let $\m{T}=(\m{I},\m{O},\Delta)$ be a task for $n$ processes, and let $\m{M}$ be a communication model with  $\Xi$ its associated communication map. For defining a speedup task $\m{T}'=(\m{I},\m{O}',\Delta')$, 
we define the complex~$\m{O}'$, and the input-output specification~$\Delta'$, as follows. 

\noindent $\bullet$ The vertices of~$\m{O}'$ are a subset of all the pairs $(i,\bb{P}_i)$ with 
$\bb{P}_i = (x_i , \bb{S}_i)$,  
$x_i\in \val(\m{I})$, and 
\[
\bb{S}_i=\{\bb{S}_{i,e,E_i} : ((i,E_i) \in \m{M}) \land (e \subseteq [n]) \land (i \in e) \},
\]
where,  for every vertex $(i,E_i) \in \m{M}$, and for every $e \subseteq [n]$ and $i \in e$, 
$
\bb{S}_{i,e,E_i} \in 2^{2^{\Sk_i(\m{O})}}.
$
In other words, $\bb{S}_{i,e,E_i}$ is a collection of sets with elements in $\Sk_i(\m{O})$. Note that $\Sk_i(\m{O})$ is merely a set of vertices of~$\m{O}$, each of the form $(i,y)$ for some $y\in \val(\m{O})$. There is a set $\bb{S}_{i,e,E_i}$ for every set $E_i$ of channels incident to~$i$ susceptible to be active in some communication, and for every potential channel $e\subseteq [n]$. 
In fact, each set $\bb{S}_{i,e,E_i}\in \bb{S}_i$ is identified by a pair $(e,E_i)$. That is, formally, $\bb{S}_i$~is  an array indexed by pairs (\emph{channel}, \emph{set of channels}). Nevertheless, for the sake of simplifying the notations, we describe $\bb{S}_i$ as a set.
For $(i,(x_i , \bb{S}_i))$ to be a vertex of~$\m{O}'$, the sets $\bb{S}_{i,e,E_i}$ in $\bb{S}_i$ must satisfy the following property: 

\begin{description}
\item[P0:] For every vertex $(i,E_i) \in \m{M}$ with $E_i = \{e_1,\ldots,e_d\}$, and for every $(S_{i,e_1,E_i},\ldots,S_{i,e_d,E_i}) \in \bb{S}_{i,e_1,E_i} \times \ldots \times \bb{S}_{i,e_d,E_i}$, we have 
$
\bigcap\limits_{j \in [d]} S_{i,e_j,E_i} \neq \emptyset.
$
\end{description}

\medbreak 

\noindent $\bullet$ A vertex-set $\{(i,(x_i,\bb{S}_i)): i \in [n]\}$ is a facet of $\m{O}'$ if, for every closed simplex $\varphi = \{(i,E_i) : i \in I \} \in \m{M}$, and for every $e = \{i_1,\ldots,i_d\} \in E_i$ for some $i \in I$, when we denote $\bar{E_{i_k}}=\varphi \setminus (i_k,E_{i_k})$ for every $k \in e$, the following two properties hold:

\begin{description}
\item[P1:] For every $e = \{i_1,\ldots,i_d\} \in E_i$, there exists $(S_{i_1,e,E_{i_1},\bar{E_{i_1}}},\ldots,S_{i_d,e,E_{i_d},\bar{E_{i_d}}}) \in \bb{S}_{i_1,e,E_{i_1}} \times \ldots \times \bb{S}_{i_d,e,E_{i_d}}$ such that, for every $\big ((i_1,y_1),\ldots,(i_d,y_d)\big) \in S_{i_1,e,E_{i_1},\bar{E_{i_1}}} \times \ldots \times S_{i_d,e,E_{i_d},\bar{E_{i_d}}}$, there exists $\tau \in \m{I}$ for which
\[
\{(i_1,x_{i_1}),\dots, (i_d,x_{i_d})\} \subseteq \tau,
\;\mbox{and}\;
 \{ (i_1,y_1),\ldots,(i_d,y_d) \} \in \Cl(\Delta(\tau)) . 
\]

Moreover, for every $k \in [d]$,
$$
\forall \Bar{E_{i_k}} \in \Cl(\St(i_k,E_k)), \, S_{i_k,e,E_i,\bar{E_{i_k}}} = S_{i_k,e,E_i,E_{i_k}} 
$$

\item[P2:] $|\bb{S}_{i,\{i\},E_i}| = 1$ (i.e., there is a unique set in~$\bb{S}_{i,\{i\},E_i}$), and, for every $j \in I$, for every $e \in E_j$, if $i \in e$ and $j \notin \cup_{f \in E_i} f$, that is, if $p_i$ does not receive information from~$p_j$ in any channel of~$E_i$, then $\bb{S}_{i,e,E_i} = \bb{S}_{i,\{i\},E_i}$.  
\end{description}

\medbreak 

\noindent $\bullet$ Finally, the input-output specification $\Delta'$ satisfies the following:
\begin{description}
\item[P3:] For every two simplices $\sigma = \{(i,x_i) : i \in I \} \in \m{I}$, and $\tau = \{(i,(x'_i,\bb{S}_i)) : i \in I\} \in \m{O}'$, where $I \subseteq [n]$, 
\[
    \tau \in \Delta'(\sigma) \iff \forall i \in I, x'_i = x_i.
\]
\end{description}

\subsection{From $t$ Rounds to $t-1$ Rounds} 
\label{sec:t_to_t-1}

We show that, under certain conditions, the task $\m{T}'=(\m{I},\m{O}',\Delta')$ defined above is a \emph{weak} speedup of~$\m{T}$, that is it verifies the first condition of the definition \ref{Def:speedup}. The main condition is actually \emph{not} local checkability, but a \emph{local independence} property. This property is related to the protocol complex at time~$t$, and is local to every process. Given a vertex $(i,E_i) \in \m{M}$ and a channel $e\in E_i$, let us consider a simplex $\{(j,v_j):j\in e\}\in \m{P}^{(t)}$. The views $v_j$ at time~$t$ of the processes $j\in e$ are mutually consistent. Let $j\in e$, and let us recall that $\St(j,v_j)$ denotes the complex induced by the set of all simplices of $\m{P}^{(t)}$ containing vertex $(j,v_j)$. For every $j\in e$, and every channel $f \in E_j \smallsetminus \{e\}$, one can consider a simplex $\sigma_{f,j} = \{ (k,v_k) : k \in f \} \in \St(j,v_j)$. The $t$-independence property essentially states that the simplices $\sigma_{f,j}$, for $j\in e$ and $f\in E_j \smallsetminus \{e\}$ are ``independent'' of each other, in the sense that, if, for every $j\in e$ and every $f\in E_j \smallsetminus \{e\}$, the views of the processes~$k\in f$ are consistent with the view of process~$j\in e$, then the views of all these processes are mutually consistent all together. That is, the union of all the simplices $\sigma_{f,j}$ is a simplex of~$\m{P}^{(t)}$. 

\begin{definition}\label{def:t-independence}
Let $t\geq 1$ be an integer. $\m{T}$ satisfies the \emph{$t$-independence} property w.r.t.~$\m{M}$ if, for every closed simplex $\{(i,E_i) : i \in I \} \in \m{M}$, for every $i \in I$, for every $e \in E_i$, for every simplex ${\{(j,v_j):j\in e\}\in \m{P}^{(t)}}$, and for every collection 
\[
{\big\{\sigma_{f,j} = \{ (k,v_k) : k \in f \} \in \St(j,v_j) : (j\in e) \wedge (f \in E_j \smallsetminus \{e\})\big\}},
\]
we have 
\begin{equation}\label{eq:independence}
\bigcup_{j\in e} \; \bigcup_{f \in E_j \smallsetminus \{e\}} \sigma_{f,j} \in \m{P}^{(t)}. 
\end{equation}
\end{definition}

Note that the $t$-independence property depends solely on the model~$\m{M}$, and on the input complex~$\m{I}$ of the task. Indeed, $\m{I}$ and~$\m{M}$ are the only parameters that govern the properties of the protocol complex at time~$t$. An interesting special case of the $t$-independence property is when considering the ``self-loop'' $e=\{i\}$ (recall that we assumed that $\{i\}\in E_i$ for every $(i,E_i) \in \m{M}$). For the self-loops, the $t$-independence property implies that, for every $(i,E_i) \in \m{M}$, for every vertex $(i,v_i)\in \m{P}^{(t)}$, and for every collection $\big\{\sigma_{f} = \{ (j,v_j) : j \in f \} \in \St(i,v_i) : f \in E_i \setminus \{i \} \big\}$, we have 
\begin{equation}\label{eq:independence-degenerate}
\bigcup_{f \in E_i \smallsetminus \{i \} } \sigma_{f} \in \m{P}^{(t)}.
\end{equation}

\subparagraph{Example}

Let us consider $\local(G)$ where $G$ is the graph on $n=2k$ nodes $u_1,\dots,u_k,v_1,\dots,v_k$, with edges $\{u_1,v_i\}$ and $\{u_i,v_1\}$ for $i=1,\dots,k$. In other words, $G$ consists of two stars centered at $u_1$ and~$v_1$, respectively, and sharing the edge~$\{u_1,v_1\}$. In~$G$,  for $i\in [k]$, process~$2i-1$ occupies node $u_i$, and process~$2i$ occupies node $v_i$. Let $\m{T}=(\m{I},\m{O},\Delta)$ be a task where $\m{I}$ is the input complex corresponding to $3$-coloring (without IDs). Specifically, $\rho=\{(i,x_i):i \in [n]\}$ is a facet of~$\m{I}$ if, for every $i\in [n]$, $x_i\in\{1,2,3\}$,  and, for every $i\in [k]$, $x_1\neq x_{2i}$ and $x_2\neq x_{2i-1}$. 

We illustrate 0-independence by demonstrating that, for $t=0$,   Eq.~\eqref{eq:independence} holds for $p_1$, and for the channel $e$ between $p_1$ and~$p_2$ (the extension to other processes, and to other channels is straightforward). As a warm up, we start by showing that Eq.~\eqref{eq:independence-degenerate} holds for~$p_1$. Process~$1$ is occupying node~$u_1$, and we have 
\[
E_1=\big \{\{1\},\{1,2\},\{1,4\},\dots,\{1,2k\}\big \}.
\]
Let $f\in E_1\smallsetminus \{1\}$, say $f=\{1,2i\}$ for some $i\in [k]$, and let us assume that $p_1$ is colored~$x_1\in\{1,2,3\}$. The set  $\sigma_f=\{(1,x_1),(2i,x_{2i})\}$ is a simplex of $\m{P}^{(0)}=\m{I}$ in $\St(1,x_1)$ whenever $x_{2i}\in\{1,2,3\}\smallsetminus\{x_1\}$. We have 
\[
\bigcup_{f \in E_1 \smallsetminus \{\{1\}\} } \sigma_{f} = 
\{(1,x_1),(2,x_{2}),(4,x_4),\dots,(2k,x_{2k})\},
\]
and indeed this simplex is in~$\m{I}$ as $x_{2i}\neq x_1$ for every $i\in[k]$. Thus Eq.~\eqref{eq:independence-degenerate} is satisfied by~$p_1$. Let us now turn our attention to the channel $e=\{1,2\}\in E_1$, and let us assume that  $p_1$ is colored~$x_1\in\{1,2,3\}$ while $p_2$ is colored~$x_2\in\{1,2,3\}\smallsetminus\{x_1\}$. Let $j\in e=\{1,2\}$, and $f\in E_j\smallsetminus \{e\}$. For $j=1$, $f=\{1,2i\}$ for some $i\in [2,k]$, and the set $\sigma_{f,1}=\{(1,x_1),(2i,x_{2i})\}$ is a simplex of $\m{P}^{(0)}=\m{I}$ in $\St(1,x_1)$ whenever $x_{2i}\in\{1,2,3\}\smallsetminus\{x_1\}$. Similarly, for $j=2$, $f=\{2,2i-1\}$ for some $i\in [2,k]$, and the set $\sigma_{f,2}=\{(2,x_2),(2i-1,x_{2i-1}\}$ is a simplex of $\m{P}^{(0)}=\m{I}$ in $\St(2,x_2)$ whenever $x_{2i-1}\in\{1,2,3\}\smallsetminus\{x_2\}$. We have 
\[
\bigcup_{j\in \{1,2\}} \; \bigcup_{f \in E_j \smallsetminus \{\{1,2\}\}} \sigma_{f,j} = \{(1,x_1),(2,x_{2}), (3,x_3),(4,x_4), (5,x_5), \dots,(2k,x_{2k})\}. 
\]
This simplex is actually a facet of~$\m{I}$, and thus is in $\m{P}^{(0)}=\m{I}$. Thus Eq.~\eqref{eq:independence} is satisfied by~$p_1$ and the channel $e=\{1,2\}\in E_1$. The same argument can be used to show that Eq.~\eqref{eq:independence} is satisfied for every process~$i\in [n]$, for any channel~$e\in E_i$, which demonstrates 0-independence. 

This example can be extended to the graph $G$ consisting of two full $(k-1)$-ary trees of height~${h>1}$ (each node has $k-1$ children, excepted the leaves, which are all at distance~$h$ from the root), connected by an edge $e=\{i,j\}$ connecting process~$i$ to process~$j$. For instance, for $t\leq h-1$, and for a vertex $(i,v_i)$ of~$\m{P}^{(t)}$, the view~$v_i$ is a tree rooted at~$p_i$ spanning all nodes of $G$ at distance at most~$t$ from~$p_i$, whose nodes are properly 3-colored. A set $\{(i,v_i),(j,v_j)\}$ is a simplex of~$\m{P}^{(t)}$ whenever the two views $v_i$ and $v_j$ are compatible. That is, all processes in $v_i\cap v_j$ are colored the same in the two views $v_i$ and $v_j$, and the colors of processes $p_i$ and $p_j$ are different. In this context, the $t$-independence property essentially says that whatever color is given to each node not in $v_i\cup v_j$ but adjacent to a leaf of $v_i$ or $v_j$, if this color is different from the color of the leaf it is attached to, the resulting set 
\[
\{(i,v_i),(j,v_j)\} \cup \{(k,v_k):k\in (N_G(i)\cup N_G(j)) \smallsetminus \{i,j\}\},
\]
where, for every $k\in (N_G(i)\cup N_G(j)) \smallsetminus \{i,j\}$, $v_k$ is the view at distance~$t$ of a neighbor of $p_i$ or $p_j$ that is compatible with $v_i$ and $v_j$, with the nodes not in $v_i\cup v_j$ colored arbitrarily as above, is indeed a simplex of~$\m{P}^{(t)}$. 

\medbreak

The following lemma does not require local checkability, but solely the local independence property. 

\begin{lemma}\label{weakspeedup}
Let $\m{T}=(\m{I},\m{O},\Delta)$ be a task for $n$ processes, and let $\m{M}$ be a model on~$n$ vertices. Let $t\geq 1$ be an integer, and assume that $\m{T}$ satisfies the $(t-1)$-independence property w.r.t.~$\m{M}$. If $\m{T}$ is solvable in $t$ rounds in~$\m{M}$, then the task $\m{T}'=(\m{I},\m{O}',\Delta')$ is solvable in $t-1$ rounds in~$\m{M}$, where $\m{O}'$ is the complex defined by properties~P0-2, and $\Delta'$ is the input-output relation defined by~P3.
\end{lemma}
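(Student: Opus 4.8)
The plan is to exhibit an explicit $(t-1)$-round algorithm for $\m{T}'$, namely the simplicial map $\alpha : \m{P}^{(t-1)} \to \m{O}'$ suggested by the generic construction, and to check it satisfies the conditions of Theorem~\ref{theo:iff-general} together with the defining properties P0--P3 of $\m{O}'$ and $\Delta'$. Fix a simplicial map $\delta : \m{P}^{(t)} \to \m{O}$ that agrees with $\Delta$, which exists by hypothesis and Theorem~\ref{theo:iff-general}. For a vertex $(i,v_i) \in \m{P}^{(t-1)}$, I would set $\alpha(i,v_i) = (i,(x_i,\bb{S}_i))$ where $x_i$ is the input value recorded in $v_i$, and, for every $(i,E_i) \in \m{M}$ and every channel $e$ with $i \in e$,
\[
\bb{S}_{i,e,E_i} = \big\{\, \delta\big(\Sk_i(\Xi(\St(\sigma),E_i))\big) \;:\; \sigma \in \Sk_{e}(\St(i,v_i)),\ \dim(\sigma) = |e|-1 \,\big\}.
\]
Here I first record the observation (as in the sketch) that $\Sk_i(\Xi(\St(\sigma),\varphi))$ does not depend on which closed $\varphi \in \St(i,E_i)$ one picks, so that writing $\Xi(\St(\sigma),E_i)$ is unambiguous; this is what makes the ``moreover'' clauses of P1 hold by construction. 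Each element of $\bb{S}_{i,e,E_i}$ is a set of vertices of $\Sk_i(\m{O})$, so $\bb{S}_{i,e,E_i} \in 2^{2^{\Sk_i(\m{O})}}$ as required, and $\alpha$ is name-preserving and (once we check the output is a simplex of $\m{O}'$) simplicial.

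The next step is to verify P0, i.e.\ that $\alpha$ actually outputs vertices of $\m{O}'$. Fix $(i,E_i) \in \m{M}$ with $E_i = \{e_1,\dots,e_d\}$ and pick $S_{i,e_j,E_i} \in \bb{S}_{i,e_j,E_i}$ for each $j$; by definition each is $\delta(\Sk_i(\Xi(\St(\sigma_j),E_i)))$ for some $\sigma_j \in \St(i,v_i)$. Applying $(t-1)$-independence in its degenerate self-loop form (Equation~\eqref{eq:independence-degenerate}) to the collection $\{\sigma_j\}$ — noting that the self-loop $e=\{i\}$ contributes only $\{(i,v_i)\}$ itself — gives $\bigcup_j \sigma_j \in \m{P}^{(t-1)}$. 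Since $\St$ of a larger simplex is contained in the $\St$ of each of its faces, $\Xi(\St(\bigcup_j\sigma_j),E_i) \subseteq \Xi(\St(\sigma_j),E_i)$ for every $j$, hence $\delta(\Sk_i(\Xi(\St(\bigcup_j\sigma_j),E_i))) \subseteq \bigcap_j S_{i,e_j,E_i}$; and the left-hand side is non-empty because $\bigcup_j \sigma_j$ is a genuine simplex of $\m{P}^{(t-1)}$, so $\Xi(\St(\cdot))$ is non-empty (the model allows a closed communication on this name-set) and $\delta$ maps non-empty complexes to non-empty complexes. This is exactly P0.

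It then remains to show that for an arbitrary facet $\rho = \{(i,v_i) : i \in [n]\} \in \m{P}^{(t-1)}$ its image $\alpha(\rho)$ is a facet of $\m{O}'$ agreeing with $\Delta'$, i.e.\ that P1, P2, P3 hold. P3 is immediate: $\alpha$ preserves the recorded input value $x_i$. For P1, fix a closed $\varphi = \{(i,E_i):i\in I\} \in \m{M}$ and a channel $e = \{i_1,\dots,i_d\} \in E_{i}$; take the candidate sets $S_{i_k,e,E_{i_k}} = \delta(\Sk_{i_k}(\Xi(\St(\sigma),E_{i_k})))$ where $\sigma = \{(i_k,v_{i_k}) : k\}$ is the face of $\rho$ on the names in $e$ — this is legitimate since $\sigma \in \Sk_e(\St(i_k,v_{i_k}))$ for each $k$, and by the unambiguity observation above these sets do not depend on the rest of $\varphi$, giving the ``moreover'' clause of P1 for free. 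To get the main clause: for any choice $(i_k,y_{k}) \in S_{i_k,e,E_{i_k}}$, each $y_k$ is the $\delta$-image of some vertex in a closed communication on $\St(\sigma)$; applying $(t-1)$-independence in its full form (Equation~\eqref{eq:independence}) to $\sigma$ and to these witnessing one-more-round extensions lets one assemble, within a single simplex of $\m{P}^{(t)}$ lying over $\sigma$, a global state whose $\delta$-image contains $\{(i_1,y_1),\dots,(i_d,y_d)\}$; since $\delta$ agrees with $\Delta$ and $\sigma = \Xi^{t-1}(\tau)$ for the corresponding input simplex $\tau \supseteq \{(i_1,x_{i_1}),\dots,(i_d,x_{i_d})\}$, we get $\{(i_1,y_1),\dots,(i_d,y_d)\} \in \Cl(\Delta(\tau))$, which is P1. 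Finally, P2: the self-loop set $\bb{S}_{i,\{i\},E_i}$ is a singleton because $\Sk_{\{i\}}(\St(i,v_i))$ contains only the vertex $(i,v_i)$ itself; and if $p_i$ receives nothing from $p_j$ in $E_i$ then, because in forming $\bb{S}_{i,e,E_i}$ the relevant views $v_j$ of $p_j$ are not visible in any $\Xi(\St(\cdot),E_i)$ restricted to $\Sk_i$, the sets $\bb{S}_{i,e,E_i}$ and $\bb{S}_{i,\{i\},E_i}$ coincide — here one again uses $(t-1)$-independence to see that $p_j$'s view can be varied freely without affecting $p_i$'s possible outputs, so no information about $p_j$ is encoded. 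Assembling P0--P3 shows $\alpha(\rho) \in \m{O}'$ and agrees with $\Delta'$, so $\alpha$ solves $\m{T}'$ in $t-1$ rounds.

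\textbf{Main obstacle.} The delicate point is P1, specifically the simultaneous assembly step: one must check that the \emph{separate} witnessing extensions produced for each pair $(i_k, y_k)$, coming a priori from different closed communications over the stars of different faces, can be merged into \emph{one} simplex of $\m{P}^{(t)}$, and this is precisely what the full $t$-independence hypothesis (Equation~\eqref{eq:independence}) is engineered to provide — so the work is in matching the quantifier structure of Definition~\ref{def:t-independence} to the quantifier structure of P1, and in the bookkeeping that the ``moreover'' independence-of-$\bar{E_{i_k}}$ clauses are automatic from the $\Sk_i$-invariance observation. P2's second clause requires the same kind of care. None of these steps needs local checkability, which is why the lemma holds under $(t-1)$-independence alone.
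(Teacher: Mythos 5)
Your proposal is correct and follows essentially the same route as the paper's proof: the same map $\alpha$ with $\bb{S}_{i,e,E_i}$ built from $\delta(\Sk_i(\Xi(\St(\sigma),E_i)))$, the degenerate (self-loop) form of $(t-1)$-independence for P0, the face of the facet on the names of~$e$ together with the full independence property to merge the witnessing extensions for P1, the same ``$p_j$'s view can be varied freely'' argument for P2, and P3 by construction. The only cosmetic difference is that the paper makes the P1 merging explicit at time $t-1$ (the witnesses $\sigma_k$ are simplices of $\m{P}^{(t-1)}$ whose union with $\sigma$, once the pattern $E_{i_k}$ is fixed, determines each $(i_k,y_k)$ uniquely), which is exactly the bookkeeping you flag as the delicate step.
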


\begin{proof}
Let $\delta : \m{P}^{(t)} \to \m{O}$ solving $\m{T}$ in $t$ rounds in~$\m{M}$. We define $\alpha:\m{P}^{(t-1)} \to \m{O}'$, and then show that $\alpha$ is solving $\m{T}'$ in~$\m{M}$. Intuitively the algorithm corresponding to~$\alpha$ consists in simulating every possible solution for $\m{T}$ when the values in~$\m{P}^{(t-1)}$ of some subset of processes are fixed, and when the communications occurring at time~$t$ are fixed. 
For a closed simplex $\varphi\in\m{M}$, and a chromatic complex~$\m{K}$, we define  
\[
\Xi(\m{K},\varphi)=\Cl\Big(\bigcup_{\sigma\in\m{K}:\name(\sigma)=\name(\varphi)}\Xi(\sigma,\varphi)\Big).
\]
Formally, note that, for any two  closed simplices $\varphi$ and $\psi$ in $\St(i,E_i)$, $\Sk_i (\Xi (\St(\sigma),\varphi)) = \Sk_i (\Xi (\St(\sigma),\psi))$ for every simplex~$\sigma\in\m{P}^{(t-1)}$. Therefore, we can abuse notation by denoting $\Sk_i (\Xi (\St(\sigma),\varphi))$ as $\Sk_i (\Xi (\St(\sigma),E_i))$. For any vertex $(i,v_i) \in \m{P}^{(t-1)}$, we let 
\[
\alpha(i,v_i)= (i,\bb{P}_i)  \;\mbox{with} \; \bb{P}_i = (x_i,\bb{S}_i),
\]
where $x_i$ is the input value of process~$i$ (which is present in its view~$v_i$), and 
\[
\bb{S}_i = \{\bb{S}_{i,e,E_i} : ((i,E_i) \in \m{M}) \land (e \subseteq [n]) \land (i \in e) \},
\]
where 
\[
\ora{\bb{S}_{i,e,E_i} =  \{ \delta(\Sk_i( \Xi (\St(\sigma),E_i)))   : (\sigma \in \Sk_e(\St(i,v_i))) \land (\dim(\sigma) = |e| - 1) \}.}
\]
For every $\sigma \in \Sk_e(\St(i,v_i))$ with $\dim(\sigma) = |e| - 1$, the set 
\[
S_{i,e,E_i}^\sigma  =  \delta(\Sk_i (\Xi (\St(\sigma),E_i)))
\]
is the set of every possible output for process~$i$ using $\delta$ whenever the communication pattern~$E_i$ occurred at time~$t$, and the processes in~$e$ have their value fixed according to $\sigma\in\m{P}^{(t-1)}$. In particular, we have $S_{i,e,E_i}^\sigma \in 2^{\Sk_i(\m{O})}$, and thus $\bb{S}_{i,e,E_i} \in 2^{2^{\Sk_i(\m{O})}}$, as desired. We show that P0 holds. For every vertex $(i,E_i) \in \m{M}$, let us consider a set 
\[
\{S_{i,e,E_i} \in \bb{S}_{i,e,E_i} : e \in E_i \}.
\]
By definition, for every set $S_{i,e,E_i}$, there exists $\sigma_e \in \St(i,v_i)$ such that 
\[
S_{i,e,E_i} = S_{i,e,E_i}^{\sigma_e} = \delta(\Sk_i(\Xi(\St(\sigma_e),E_i))). 
\]
Using the $(t-1)$-independence property for $e=\{i\} \in E_i$, it holds that,
\[
\bigcup_{e \in E_i \setminus \{ i \} } \sigma_e \in \m{P}^{(t-1)}.
\]
There is only one candidate for the simplex  corresponding to the channel $e=\{i\}$, and this simplex is $\sigma_{\{i\}}=\{(i,v_i)\}$. Therefore,
\[
\bigcup_{e \in E_i \setminus \{ i \} } \sigma_e = \bigcup_{e \in E_i} \sigma_e \in \m{P}^{(t-1)}.
\]
For every $e \in E_i$, we have 
\[
\Sk_i(\Xi(\St(\cup_{e \in E_i} \sigma_e),E_i)) \subseteq \Sk_i(\Xi(\St(\sigma_e),E_i)).
\]
Therefore,
\[
\delta(\Sk_i(\Xi(\St(\cup_{e \in E_i} \sigma_e),E_i))) \subseteq \bigcap_{e \in E_i} S_{i,e,E_i}. 
\]
Now, $\delta(\Sk_i(\Xi(\St(\bigcup_{e \in E_i} \sigma_e),E_i)))$ cannot be empty, simply because $\bigcup_{e \in E_i} \sigma_e \in \m{P}^{(t-1)}$. Therefore, property P0 holds, that is, $\alpha$ produces vertices of~$\m{O}'$.

\medbreak 

To prove that $\alpha$ solves $\m{T}'$, it is sufficient to consider an arbitrary facet $\rho = \{(i,v_i) : i \in [n] \} \in \m{P}^{(t-1)}$, and its image $\alpha(\rho)=\{ (i,(x_i,\bb{S}_i) : i \in [n] \}$, and we show that $\alpha(\rho)$ is a facet of~$\m{O}'$ that agrees with~$\Delta'$. It is sufficient to show that both properties~P1 and~P2 hold as, by definition of $\alpha$, P3~holds by construction. 
 
 \medbreak 

First we prove that P1 holds. For every closed simplex $\varphi = \{(i,E_i) : i \in I \} \in \m{M}$, and for every $e = \{i_1,\ldots,i_d \} \in E_i$ for some $i \in I$, we consider the face $\sigma = \{(i_k,v_k) : k \in [d]\}$ of~$\rho$, which is indeed of dimension~$|e| - 1$. Note that, since $(i_k,v_k) \in \sigma$ for every $k \in [d]$, we have $\sigma \in \bigcap_{k = 1}^{d} \Sk_e(\St(i_k,v_k))$. For every $k \in [d]$, let us consider the set 
\[
S_{i_k,e,E_{i_k}}^\sigma = \delta(\Sk_{i_k}(\Xi(\St(\sigma),E_{i_k}))) \in \bb{S}_{i_k,e, E_{i_k}}. 
\]

Note that $S_{i_k,e,E_{i_k}}^\sigma$ is by definition independent of $E_{i_\ell}$ for $\ell \in [d] \setminus \{k\}$ therefore it is enough to show that $S_{i_k,e,E_{i_k}}^\sigma$ satisfies the first part of~P1 and let $((i_1,y_1),\ldots,(i_d,y_d)) \in S_{i_1,e,E_{i_1}}^\sigma \times \ldots \times S_{i_d,e,E_{i_d}}^\sigma$. Let $k \in [d]$. By definition of $\alpha$,  there exists $\sigma_k \in \St (\sigma)$ with $\name(\sigma_k) = (\cup_{f\in E_{i_k}} f)\smallsetminus e$ such that $(i_k,y_k) \in \delta(\Sk_{i_k}(\Xi(\sigma_k \cup \sigma,E_{i_k})))$. In fact, if we fix the communication pattern~$E_{i_k}$ occurring at time~$t$, and if we fix the values in $\m{P}^{(t-1)}$ of all the processes in the channels of $E_{i_k}$, there exists exactly one vertex $(i_k,w_k) \in \m{P}^{(t)}$ that is consistent with the fixed communication pattern, and with the fixed values in $\m{P}^{(t-1)}$. Therefore the fixed values yield exactly one output for process~$i_k$. In other words, for every $k \in [d]$, 
\[
\delta(\Sk_{i_k}(\Xi(\sigma_k \cup \sigma,E_{i_k}))) = \{ (i_k,y_k) \}.
\]
Using the $(t-1)$-independence property, it holds that $(\bigcup_{k = 1}^{d} \sigma_k) \cup \sigma \in \m{P}^{(t-1)}$. It follows that, for every $k \in [d]$, we have
\[
\emptyset \neq \delta(\Sk_{i_k}(\Xi((\cup_{l = 1}^{d}\sigma_l)\cup\sigma,E_{i_k}))) \subseteq \delta(\Sk_{i_k}(\Xi(\sigma_k\cup\sigma,E_{i_k}))) = \{ (i_k,y_k) \}
\]
As a consequence, 
\[
\delta(\Sk_{\{i_1,\ldots,i_d \}}(\Xi((\cup_{l = 1}^{d}\sigma_l)\cup\sigma,\varphi)))= \{ (i_k,y_k) : k \in [d] \}.  
\]
Recall that, by definition of $\alpha$, for any $j \in [d]$, $\alpha(i_j,v_{i_j})= (i_j,(x_{i_j},\bb{S}_{i_j}))$, 
where $x_{i_j} \in \val(\m{I})$ is the input of process $i_j$ in~$v_{i_j}$. Let us consider a facet $\phi$  of $\m{I}$ such that $(\cup_{k = 1}^{d}\sigma_k)\cup\sigma \in \Delta(\phi)$. Note that, in particular, $\{(i_j,x_{i_j}) : j \in [d] \} \subseteq \phi$. Moreover, since $\delta$ solves $\m{T}$,  we also have $\{ (i_k,y_k) : k \in [d] \} \in \Cl(\Delta(\phi))$. It follows that Property~P1 holds.

\medbreak

Second, we show that P2 holds. For every closed simplex $\varphi = \{(i,E_i) : i \in I\} \in \m{M}$, for every $i\in I$ and for every $e \in E_j$, if $i \in e$ and $j \notin \cup_{f \in E_i} f$, we show that 
\[
\bb{S}_{i,e,E_i} = \bb{S}_{i,\{i\},E_i} \triangleq \{ \delta(\Sk_i(\Xi(\St(i,v_i),E_i)))\}.
\]
Note that $|\bb{S}_{i,\{i\},E_i}| = 1$.
Let $S\in \bb{S}_{i,e,E_i}$. We show that $S = \delta(\Sk_i(\Xi(\St(i,v_i),E_i)))$. By definition of $\bb{S}_{i,e,E_i}$, there exists a simplex $\sigma_0 \in \Sk_e(\St(i,v_i))$ such that $S = \delta(\Sk_i( \Xi (\St(\sigma_0),E_i)))$. Using the $(t-1)$-independence property on~$e$,  it follows that, for every $\sigma \in \Sk_e(\St(i,v_i))$, and for every collection $\{ \sigma_f \in \Sk_f(\St(i,v_i)) : f \in E_i \smallsetminus e \}$ of simplices,  
\[
    \Big( \bigcup_{f \in E_i \setminus e} \sigma_f \Big) \cup \sigma \in \m{P}^{(t-1)}.
\]
Now, $j \in e$ but, for every $f \in E_i$, $j \notin f$. It follows  that $e \notin E_i$, and thus $E_i \smallsetminus e  = E_i$.  Therefore $\cup_{f \in E_i \smallsetminus e} \sigma_f$ defines the values of all the processes~$k$ that are in at least one channel of~$E_i$. This entirely characterizes the output  of the process~$i$ by~$\delta$. Therefore, we have established that $\delta(\Sk_i( \Xi (\St(\sigma_0),E_i)))$ is actually independent of the simplex $\sigma_0\in\Sk_e(\St(i,v_i))$. Formally, 
$$ 
\delta(\Sk_i( \Xi (\St(\sigma_0),E_i))) = \delta(\Sk_i( \Xi (\St(i,v_i),E_i))),
$$ 
which implies that P2 holds, and concludes the proof. 
\end{proof}

\subsection{From $t-1$ Rounds to $t$ Rounds} 
\label{sec:t-1_to_t}

It follows from Lemma~\ref{weakspeedup} that the weak version of Brandt's speedup Theorem can be extended from \local\/ to asynchronous computing models with crashes. The strong version of the speedup theorem does not only require the local independence property, but also the \emph{edge-checkability} property, a stronger variant of local decidability (cf. Def.~\ref{def:local-decidability}).

Granted with the framework of Section~\ref{sec:general-framework}, we define a general notion of \emph{local decidability} (also referred to as \emph{local checkability}). 
We generalize the specific notion defined for \local\/ (see~\cite{FraigniaudKP13}), and the specific notion defined for \wf\/ (see~\cite{FraigniaudRT13}). Given a simplex $\{(i,v_i):i\in I\}$ of a state complex~$\m{K}$, and given $J\subseteq I$, let $\pi_J(\sigma)=\{(i,v_i):i\in J\}$. Note that $\pi_J(\sigma)$ is different from $\Sk_J(\sigma)$ as the former is a simplex, while the latter is a complex. 

\begin{definition}\label{def:local-decidability}
A task $(\m{I},\m{O},\Delta)$ is \emph{locally checkable} for the communication model $\m{M}$ if, for every $\sigma\in\m{I}$ with $\name(\sigma) = I$, for every set $\tau = \{(i,y_i) : i\in I \land y_i \in \val(\m{O})\}$, and for every closed simplex $\varphi=\{(i,E_i):i\in I\}$ of $\m{M}$, the following holds: 
\[
\tau\in\Delta(\sigma) \iff \forall i\in I, \; \pi_{J_i}(\tau)\in \Delta (\pi_{J_i}(\sigma)), \; \mbox{where} \; J_i=\cup_{e\in E_i}e.
\]
The class of locally checkable tasks for the model $\m{M}$ is denoted by $\ld(\m{M})$. 
A problem $\Pi$ is  \emph{locally checkable} if, for every $(\m{T},\m{M})\in \Pi$, the task $\m{T}\in\ld(\m{M})$. 
\end{definition}

In other words, a task $(\m{I},\m{O},\Delta)$ is in~$\ld(\m{M})$ if the correctness of a potential solution $\tau=\{(i,y_i): i\in I\}\in\m{O}$ for an input $\sigma=\{(i,x_i): i\in I\}\in\m{I}$ can be checked in a single round of communication under~$\m{M}$. Indeed, assuming every process~$i\in I$ is given a pair $(x_i,y_i)$ of input-output values, any round of communication performed according to some communication pattern $\varphi\in\m{M}$ allows every process~$i\in I$ to acquire a set $\{(x_j,y_j): j\in J_i\}$ where $x_j$ and $y_j$ are the input and output values of process~$j\in J_i$ in~$\sigma$ and $\tau$, respectively. Every process~$i\in I$ can thus check whether $\pi_{J_i}(\tau)=\{(j,y_j):j\in J_i\}$ belongs to $\Delta (\pi_{J_i}(\sigma))$ or not. Locally decidability states that the output $\tau$ is correct for the input $\sigma$, where $\name(\sigma)=\name(\tau)=I\subseteq[n]$, if and only if all the individual tests performed by the processes in $I$ are passed.
In $\local$, many standard graph problems, e.g., vertex coloring and maximal independent sets (MIS), are locally checkable. Similarly, $(d+1)$-coloring  is in $\ld(\wflocal(G))$ for every $G\in \m{G}_d$. This is because the model guarantees that, for every edge $\{i,j\}$, at least one of the two processes~$i$ and~$j$ receives the color of the other process. On the other hand, as opposed to the case of $\local(G)$, MIS is not in $\ld(\wflocal(G))$. Indeed, a node that is not in the set may not receive information from all its neighbors, and thus it cannot systematically check whether  it has at least one neighbor in the set. The generalized maximal independent set (GMIS) task~\cite{KuhnZ2018} is however in $\ld(\hlocal(H))$ for every hypergraph~$H$ (in GMIS, each hyperedge~$e$ is associated with a threshold $t_e\in\{1,|e|-1\}$, and a set $S$ of vertices is a solution to GMIS if, for every $e\in E(H)$, $|e\cap S| \leq t_e$, and $S$ is maximal for this property).

\begin{definition}\label{def:edge-checkability}
A task $(\m{I},\m{O},\Delta)$ is \emph{edge-checkable} in a model $\m{M}$ if, for every $\sigma\in\m{I}$ with $\name(\sigma) = I$, for every set $\tau = \{(i,y_i) : i\in I \land y_i \in \val(\m{O})\}$, and for every closed simplex $\varphi=\{(i,E_i):i\in I\}$ of $\m{M}$, the following holds: 
\[
\tau\in\Delta(\sigma) \iff \forall i\in I, \; \forall e \in E_i, \; \pi_{e}(\tau)\in \Delta (\pi_{e}(\sigma)).
\]
\end{definition}

Note that MIS (in its standard form) is locally checkable in \local. However, MIS is not edge-checkable in \local\/ because, by considering each of its neighbors independently, a process that is not in the MIS cannot determine whether it has at least one neighboring process that is in the MIS. Nevertheless we describe further a systematic way to transform a locally checkable task into an edge-checkable task, which applies to \local, as well as to other models. As in~\cite{Brandt19}, our speedup theorem assumes an underlying mechanism enabling the processes in a same channel~$e$ to \emph{break symmetry}, whether it be thanks to a local or global identification mechanism, or thanks to an implicit or  explicit ordering of the processes in the same channel. 

\begin{theorem}\label{strspeedup}
Let $\m{T}=(\m{I},\m{O},\Delta)$ be a task for $n$ processes, and let $\m{M}$ be a model on~$n$ vertices that supports symmetry breaking in its channels. Let $t\geq 1$ be an integer. Let us assume that $\m{T}$ satisfies the $(t-1)$-independence property w.r.t.~$\m{M}$, and that $\m{T}$ is edge-checkable in~$\m{M}$. $\m{T}$~is solvable in $t$ rounds in~$\m{M}$ if and only if $\m{T}'=(\m{I},\m{O}',\Delta')$ is solvable in $t-1$ rounds in~$\m{M}$, where $\m{O}'$ is the complex defined by properties~P0-2, and $\Delta'$ is the map defined by Property~P3.
\end{theorem}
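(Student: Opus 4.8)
The plan is to derive Theorem~\ref{strspeedup} from Lemma~\ref{lem:speedup}: it suffices to show that $\m{T}'=(\m{I},\m{O}',\Delta')$, with $\m{O}'$ defined by P0--P2 and $\Delta'$ defined by P3, is a \emph{speedup} of $\m{T}$ for $\m{M}$ in the sense of Definition~\ref{Def:speedup}. The first of the two conditions in that definition --- the existence, whenever $\m{T}$ is solvable in $t$ rounds, of a simplicial map $\alpha:\m{P}^{(t-1)}\to\m{O}'$ that agrees with $\Delta'$ --- is exactly the statement of Lemma~\ref{weakspeedup}, and it uses only the $(t-1)$-independence hypothesis. Hence the whole burden of the proof is the second condition: constructing a name-independent simplicial map $\beta:\Xi(\m{O}')\to\m{O}$ such that $\beta(\Xi(\Delta'(\sigma)))\subseteq\Delta(\sigma)$ for every closed simplex $\sigma\in\m{I}$. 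This is where edge-checkability and the symmetry-breaking mechanism are used; it is the formal counterpart of Lemma~\ref{informal-recipro-speedup}.

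To build $\beta$ I would fix a closed simplex $\sigma=\{(i,x_i):i\in I\}\in\m{I}$, a simplex $\mu=\{(i,(x_i,\bb{S}_i)):i\in I\}\in\Delta'(\sigma)$ (P3 forces the input labels of $\mu$ to coincide with those of $\sigma$), and a closed communication pattern $\varphi=\{(i,E_i):i\in I\}\in\m{M}$ with $\name(\varphi)=I$. After one round performed according to $\varphi$, the vertex of $\Xi(\mu,\varphi)$ held by process $i$ records, for every channel $e\in E_i$ and every $k\in e$, the pair $(x_k,\bb{S}_k)$, along with the local names of the channels of $\varphi$. For each channel $e=\{i_1,\dots,i_d\}\in E_i$, the processes of $e$ agree --- using the symmetry-breaking mechanism internal to $e$ --- on the tuple of witnessing sets $(S_{i_1,e,E_{i_1}},\dots,S_{i_d,e,E_{i_d}})\in\bb{S}_{i_1,e,E_{i_1}}\times\dots\times\bb{S}_{i_d,e,E_{i_d}}$ granted by P1; the ``moreover'' clause of P1 is precisely what ensures this choice is insensitive to the part of $\varphi$ outside $e$, so each process can make it from its own view alone, and when a coordinate $i_j$ of $e$ does not receive from another process of $e$, P2 guarantees $|\bb{S}_{i_j,e,E_{i_j}}|=1$, so no symmetry breaking is needed there. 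Process $i$ then outputs a value $y_i$ with $(i,y_i)\in\bigcap_{e\in E_i}S_{i,e,E_i}$, which is non-empty by P0; I set $\beta$ to send $i$'s vertex to $(i,y_i)$. Since $y_i$ depends only on $i$'s post-round view, $\beta$ is name-independent, it is name-preserving by construction, and it extends vertex-wise over faces, so it is defined on all of $\Xi(\m{O}')$.

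Next I would check that $\beta$ is simplicial and agrees with $\Delta$, i.e., that the output $\tau'=\{(i,y_i):i\in I\}$ lies in $\Delta(\sigma)$ (and hence in $\m{O}$). For every $i\in I$ and every channel $e=\{i_1,\dots,i_d\}\in E_i$ we have $(i_j,y_j)\in S_{i_j,e,E_{i_j}}$ for all $j\in[d]$, so P1 yields $\tau_e\in\m{I}$ with $\{(i_j,x_{i_j}):j\in[d]\}\subseteq\tau_e$ and $\{(i_j,y_j):j\in[d]\}\in\Cl(\Delta(\tau_e))$; applying the ``$\Rightarrow$'' direction of edge-checkability (Definition~\ref{def:edge-checkability}) to a simplex of $\Delta(\tau_e)$ having $\{(i_j,y_j):j\in[d]\}$ as a face gives $\{(i_j,y_j):j\in[d]\}\in\Delta(\pi_e(\tau_e))=\Delta(\pi_e(\sigma))$, the last equality because $\pi_e(\tau_e)$ and $\pi_e(\sigma)$ carry the same input labels on $e$ (and the complexes are chromatic). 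As this holds for every channel $e$ of $\varphi$, the ``$\Leftarrow$'' direction of edge-checkability assembles these facts into $\tau'\in\Delta(\sigma)$. This establishes $\beta(\Xi(\Delta'(\sigma)))\subseteq\Delta(\sigma)$, hence the second condition of Definition~\ref{Def:speedup}, and Theorem~\ref{strspeedup} follows from Lemma~\ref{lem:speedup}.

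The hard part will be the bookkeeping hidden in the construction of $\beta$: one must verify that the P1-witnessing sets are genuinely computable from each process's local post-round view --- which is exactly why the ``moreover'' clause of P1 (independence from the rest of $\varphi$) and P2 (singleton sets whenever information does not flow back along a channel) were written into the definition of $\m{O}'$ --- and that the symmetry-breaking mechanism lets all members of a channel agree on the same witness tuple, including for channels of size larger than two, as arise in \hlocal. Once this is in place, edge-checkability does the rest essentially for free, since it reduces membership in $\Delta(\sigma)$ to the per-channel conditions that P1 was designed to produce. I also expect to carry along, as already signalled before Lemma~\ref{informal-weakspeedup}, a few mild technical side conditions on $\m{T}$ and $\m{M}$ making the manipulations of stars, skeletons and closures rigorous --- notably one ensuring that an input assignment whose restriction to each channel is a simplex of $\m{I}$ is itself a simplex of $\m{I}$, which is what lets the global direction of edge-checkability be applied to arbitrary facets of $\m{O}'$ --- but these do not affect the structure of the argument.
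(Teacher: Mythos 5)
Your proposal is correct and follows essentially the same route as the paper: Lemma~\ref{weakspeedup} handles the forward direction via $(t-1)$-independence, and the converse is obtained by constructing $\beta:\Xi(\m{O}')\to\m{O}$ through one round of communication in which each channel uses the symmetry-breaking mechanism together with P1 (and P2 for non-receiving members) to fix witness sets, each process outputs a value in $\bigcap_{e\in E_i}S_{i,e,E_i}$ (non-empty by P0), and edge-checkability assembles the per-channel guarantees into $\beta(\Xi(\Delta'(\sigma)))\subseteq\Delta(\sigma)$, whence the theorem follows via Definition~\ref{Def:speedup} and Lemma~\ref{lem:speedup}. Your explicit handling of the closure in P1 (passing through a simplex of $\Delta(\tau_e)$ having the chosen pair as a face) is a point the paper's proof glosses over, but it does not change the argument.
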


\begin{proof}
    Thanks to Lemma~\ref{weakspeedup} it is  sufficient to show the existence of a simplicial map~$\beta:\Xi(\m{O}')\to \m{O}$ such that, for every closed simplex $\sigma\in \m{I}$, $\beta(\Xi(\Delta'(\sigma)))\subseteq \Delta(\sigma)$. 
    Let $\tau = \{{(i,(x_i,\bb{S}_i)) : i \in I} \}\in\Delta'(\sigma)$ for some closed simplex $\sigma = \{(i,x_i) : i \in I \} \in \m{I}$, note that $\tau$ is a face of a facet of $\m{O}'$, which, by definition, satisfy P1 and P2. Since $\m{T}$ is edge-checkable, it is sufficient to prove that, for every closed simplex $\varphi=\{(i,E_i):i\in I\} \in \m{M}$, for every $i\in I$, process~$i$ can output a solution $(i,y_i) \in \Sk_i(\m{O})$ such that,  for every $e\in E_i$, $\{(j,y_j) : j \in e \} \in \Delta(\pi_e(\sigma))$. After one round of communication according to~$\varphi$, every  process~$i\in I$ receives messages from every process $j \in e$ for every $e\in E_i$. Therefore every process~$i$ has access to the set $\{ (x_k,\bb{S}_k) : k \in J_i \}$ where $J_i=\cup_{e\in E_i}e$. More specifically, for every $e \in E_i$,  process~$i$ has access to $\{ (x_k,\bb{S}_k) : k \in e \}$. If $j\in e$ does not receive from~$i$, property~P2 guarantees that $|\bb{S}_{j,e,E_j}| = 1$. For the other processes $j\in e$, the symmetry-breaking mechanism and property~P1 allows these processes to choose sets $S_{j,e,E_j} \in \bb{S}_{j,e,E_j}$  such that, for every choice of $\{y_j: j \in e\}$ where $(j,y_j)\in S_{j,e,E_j}$ for every $j\in e$, there exists $\tau \in \m{I}$ such that
    \[
    \{(j,x_j) : j \in e\} \subseteq \tau, \; \mbox{and} \;
    \{(i,y_j) : j \in e \} \in \Delta(\tau). 
    \]
    Since $\m{T}$ is edge-checkable, $\{(j,y_j) : j \in e \} \in \Delta(\tau)$ implies that 
    \[
    \{(j,y_j) : j \in e \} \in \Delta(\pi_e(\tau)) = \Delta(\pi_e(\sigma)).
    \]
    By repeating this operation for every channel $e \in E_i$, process~$i$ can output any value $y_i$ such that,
    \[
        (i,y_i) \in \bigcap_{e \in E_i} S_{i,e,E_i}. 
    \]
    Such a value $y_i$ does exist thanks to property~P0. The correctness of this algorithm is straightforward since the task is edge-checkable,  and the set $S_{i,e,E_i}$ are precisely chosen to satisfy~P1.
\end{proof}

\subsection{Applications}
\label{sec:application2hypergraphs}

As mentioned before, several models satisfy the conditions in the statement of Theorem~\ref{strspeedup}, beyond \local. This is for instance the case of dynamic graph models, $\dyn(\m{F})$, under some conditions on~$\m{F}$. This is also the case of hypergraph models, $\hlocal(H)$, under some conditions on~$H$. This is even the case for asynchronous crash-prone models such as \wf \, with 2 processes, this has already been detailed in Section \ref{sec:appwf}. Therefore we focus here on giving a complete proof of the application of Theorem \ref{strspeedup} to \hlocal. 

First, we show that, in \hlocal, edge-checkability  is actually essentially equivalent, up to $\pm 1$ rounds, to  local decidability (see proof in Appendix~\ref{app:lem:locdecvsedgdec}). 

\begin{lemma}\label{lem:locdecvsedgdec}
Let $H$ be a hypergraph on $n$ nodes. For every task $\m{T}=(\m{I},\m{O},\Delta)$ on $n$ processes, if $\m{T}$ is locally checkable in $\hlocal(H)$, then there exists a task $\m{T}'=(\m{I},\m{O}',\Delta')$ that is edge-checkable in $\hlocal(H)$, such that (1)~any solution for~$\m{T}$ can be transformed into a solution for~$\m{T}'$ via a single round of communication in $\hlocal(H)$, and (2)~any solution for~$\m{T}'$ can be transformed into a solution for~$\m{T}$ in zero rounds.
\end{lemma}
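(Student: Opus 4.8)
The plan is to define $\m{T}'$ by a standard ``neighborhood-encoding'' construction: an output value of $\m{T}'$ at process~$i$ should package not only a tentative output value $y_i\in\val(\m{O})$ for the original task, but also enough information about the closed hyperedge-neighborhood of~$i$ so that edge-wise consistency checks in $\hlocal(H)$ suffice to certify global correctness. Concretely, I would let a vertex of~$\m{O}'$ at process~$i$ be a pair $(i,(y_i,N_i))$, where $y_i\in\val(\m{O})$ and $N_i$ records, for each hyperedge $e\in E_H(i)$, the multiset of pairs $\{(j,y_j):j\in e\}$ that $i$ claims to see in that hyperedge (this is well defined thanks to the local identification of channels and processes assumed in Section~\ref{sec:general-framework}). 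A set $\{(i,(y_i,N_i)):i\in I\}$ is a simplex of~$\m{O}'$ iff (i) the $N_i$'s are mutually consistent, i.e. whenever $j\in e$ and $i\in e$ the record of $e$ stored by $i$ agrees with the one stored by $j$, and (ii) for every $e$, the recorded tuple $\pi_e(\{(j,y_j):j\in e\})$ lies in $\Delta(\pi_e(\sigma))$ for \emph{some} input simplex $\sigma$ compatible with the recorded inputs. The specification $\Delta'$ just demands that the $y_i$-part and the recorded inputs agree with the actual input simplex $\sigma$; formally $\tau\in\Delta'(\sigma)$ iff the inputs coded into $\tau$ match $\sigma$, and, using local checkability of~$\m{T}$ in $\hlocal(H)$, the collection of $y_i$'s coded in $\tau$ forms an element of $\Delta(\sigma)$.

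The next step is to verify the three required properties. For edge-checkability of~$\m{T}'$: given $\sigma$ and a candidate $\tau=\{(i,(y_i,N_i)):i\in I\}$, I must show $\tau\in\Delta'(\sigma)$ iff for every $i$ and every $e\in E_H(i)$, $\pi_e(\tau)\in\Delta'(\pi_e(\sigma))$. The forward direction is immediate from the construction. For the converse, the per-edge checks force, for each hyperedge $e$, that the tuple of $y_j$'s restricted to $e$ lies in $\Delta(\pi_e(\sigma))$ and that the local records $N_j$ agree on~$e$; since $\m{T}$ is locally checkable in $\hlocal(H)$, Definition~\ref{def:local-decidability} (with $J_i=\cup_{e\in E_H(i)}e$) then lets us conclude that the full tuple $\{(i,y_i):i\in I\}$ lies in $\Delta(\sigma)$ — but here is the subtlety, so this deserves care. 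Property~(1), that a solution for~$\m{T}$ yields a solution for~$\m{T}'$ in one round: after one round in $\hlocal(H)$, process~$i$ learns $\{(x_j,y_j):j\in e\}$ for every $e\in E_H(i)$, hence can assemble $N_i$; the resulting $\tau$ satisfies the $\m{O}'$-membership conditions because the underlying $\{(i,y_i)\}$ was a genuine solution of~$\m{T}$, so in particular $\Delta$-legal on every hyperedge. Property~(2), that a solution for~$\m{T}'$ yields a solution for~$\m{T}$ in zero rounds: process~$i$ just outputs the $y_i$-component of its $\m{O}'$-output; correctness follows because $\tau\in\Delta'(\sigma)$ was defined to force $\{(i,y_i):i\in I\}\in\Delta(\sigma)$.

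The main obstacle is the tension between the local checkability hypothesis for~$\m{T}$ (which is a distance-$J_i$, i.e. full-closed-hyperedge-neighborhood, condition) and the edge-checkability demand for~$\m{T}'$ (which must be verified one hyperedge at a time). The encoding $N_i$ is precisely what bridges the gap: by carrying the whole hyperedge-view into the output, the ``distance-$1$ in $\hlocal(H)$'' test on $\m{T}'$ recovers the ``one full round of $\hlocal(H)$'' test on $\m{T}$. I would need to be careful that (a) the consistency conditions on the $N_i$'s are themselves edge-local — they are, since each concerns a single hyperedge — and (b) the existential over compatible input simplices $\sigma$ in the definition of $\m{O}'$ does not accidentally admit spurious solutions; this is handled by making $\Delta'$, not $\m{O}'$, responsible for pinning the recorded inputs to the actual $\sigma$, so that on legal inputs the existential is witnessed by $\sigma$ itself. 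A final routine check is that $\m{O}'$ is a pure chromatic $(n-1)$-dimensional complex and $\Delta'$ is name-preserving, which is immediate from the construction. I would also remark that this construction is exactly the hypergraph analogue of the ``edge version'' of MIS mentioned after Definition~\ref{def:edge-checkability}, which is reassuring as a sanity check. Full details are deferred to Appendix~\ref{app:lem:locdecvsedgdec}.
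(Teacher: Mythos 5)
There is a genuine gap, and it sits exactly at the step you flagged as ``the subtlety''. Your record $N_i$ stores hyperedge-by-hyperedge views, and every $\Delta$-membership condition you impose anywhere in the construction --- condition~(ii) of $\m{O}'$-simplexhood, and the per-edge checks $\pi_e(\tau)\in\Delta'(\pi_e(\sigma))$ --- is of the per-hyperedge form $\{(j,y_j):j\in e\}\in\Delta(\pi_e(\cdot))$. But the hypothesis on $\m{T}$ is only local checkability, which is a condition on the \emph{whole} closed neighborhood $J_i=\cup_{e\in E_H(i)}e$: to conclude $\{(i,y_i):i\in I\}\in\Delta(\sigma)$ via Definition~\ref{def:local-decidability} you need $\pi_{J_i}(\tau)\in\Delta(\pi_{J_i}(\sigma))$ for every~$i$, and this does not follow from hyperedge-wise membership --- that implication is precisely edge-checkability of $\m{T}$ itself, which is not assumed. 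Concretely, for a task that is locally checkable but not edge-checkable (the MIS/GMIS-type examples discussed after Definition~\ref{def:edge-checkability}), an assignment can pass all of your per-edge tests while its $y$-part is not in $\Delta(\sigma)$; with your $\Delta'$ (which demands that the coded $y$'s lie in $\Delta(\sigma)$) this breaks the ``if'' direction of edge-checkability of $\m{T}'$, and if you instead weaken $\Delta'$ to drop that demand, then the zero-round decoding of property~(2) breaks. Nothing in your construction ever certifies the $J_i$-wide condition, so the $N_i$'s do not actually bridge the gap as written.

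The repair is to make each output value carry the \emph{entire} closed neighborhood at once, with its correctness built into vertexhood rather than checked edge by edge --- which is what the paper's proof does. There, a vertex $(i,K_i)$ of $\m{O}'$ is a labeled sub-hypergraph induced by $J_i$, with labels $(x_{i,j},y_{i,j})$ for $j\in J_i$, subject to the single $J_i$-wide condition $\{(j,y_{i,j}):j\in J_i\}\in\Delta(\{(j,x_{i,j}):j\in J_i\})$; and $\Delta'(\sigma)$ requires only that $x_{i,i}$ equals the true input of~$i$ and that $\pi_e(K_i)=\pi_e(K_j)$ for every hyperedge $e$ and every $i,j\in e$. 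Both requirements are edge-local, so $\m{T}'$ is edge-checkable essentially by inspection; the agreement condition forces $x_{i,j}=x_{j,j}=x'_j$ and $y_{i,j}=y_{j,j}$ for all $j\in J_i$, so the vertex-level condition yields $\pi_{J_i}(\tau)\in\Delta(\pi_{J_i}(\sigma))$ for every~$i$, and local checkability of $\m{T}$ then gives the zero-round decoding. Your one-round direction (property~(1)) is fine and matches the paper, and your instinct to let $\Delta'$ pin the recorded inputs to the actual input is also the right one; the missing ingredient is replacing your per-hyperedge $\Delta$-conditions by the single $\Delta$-condition over $J_i$ baked into the vertices of $\m{O}'$.
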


The following is a direct consequence of Theorem~\ref{strspeedup}. 

\begin{corollary}\label{hypergraphspeedup}
Let  $H$ be a hypergraph on~$n$ nodes, let $\m{T}=(\m{I},\m{O},\Delta)$ be a task for $n$ processes, and let $t\geq 1$. Let us assume that $\m{T}$ satisfies the $r$-independence property w.r.t.~$\hlocal(H)$ for every $r\in\{0,\dots,t-1\}$, and that $\m{T}$ is edge-checkable in~$\hlocal(H)$. Let us  assume the existence of a symmetry breaking mechanism among the processes in each hyperedge of~$H$.  
$\m{T}$ is solvable in $t$ rounds in~$\hlocal(H)$ if and only if $\m{T}^{(t)}=(\m{I},\m{O}^{(t)},\Delta^{(t)})$ is solvable in zero rounds in~$\hlocal(H)$, where $\m{T}^{(t)}$ is the task defined by iterating $t$~times the construction defined by properties~P0-3.
\end{corollary}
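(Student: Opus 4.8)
The plan is to prove the statement by induction on~$t$, using Theorem~\ref{strspeedup} as the only engine, invoked once per level of iteration. Write $\Phi^\star$ for the operator that sends a task $(\m{I},\m{O},\Delta)$ to the task $(\m{I},\m{O}',\Delta')$ defined by Properties~P0-3, so that $\m{T}^{(0)}=\m{T}$ and $\m{T}^{(j+1)}=\Phi^\star(\m{T}^{(j)})$ for $j\geq 0$, and note that $\Phi^\star$ leaves the communication model $\hlocal(H)$ untouched, so the assumed symmetry-breaking mechanism in each hyperedge of~$H$ is available at every level. For the induction to close, two facts about~$\Phi^\star$ must be in place: (a)~every $\m{T}^{(j)}$ still satisfies the needed independence properties, and (b)~every $\m{T}^{(j)}$ is still edge-checkable in~$\hlocal(H)$.

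Fact~(a) is immediate: as observed after Definition~\ref{def:t-independence}, the $r$-independence property w.r.t.~$\hlocal(H)$ depends only on the pair $(\m{I},\hlocal(H))$, since $\m{I}$ and the model are the only data that govern the protocol complex at time~$r$. Because $\Phi^\star$ never modifies the input complex, every task $\m{T}^{(j)}$ produced by iterating $\Phi^\star$ has input complex exactly~$\m{I}$, hence inherits the full list of $r$-independence properties assumed of~$\m{T}$; in particular each $\m{T}^{(j)}$ is $r$-independent w.r.t.~$\hlocal(H)$ for every $r\in\{0,\dots,t-1\}$.

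Fact~(b) is the substantive point. The output relation $\Delta^{(j+1)}$ is, by Property~P3, the ``preserve every input label'' relation: for $\sigma\in\m{I}$ and $\tau\in\m{O}^{(j+1)}$ with the same name set~$I$, we have $\tau\in\Delta^{(j+1)}(\sigma)$ iff the input component of each vertex of~$\tau$ equals the corresponding input of~$\sigma$. Fix a closed simplex $\varphi=\{(i,E_i):i\in I\}$ of~$\hlocal(H)$; closedness gives $\bigcup_{i\in I}\bigcup_{e\in E_i}e=I$, so the assertion ``every input label of~$\tau$ is preserved'' is precisely the conjunction, over all $i\in I$ and all channels $e\in E_i$, of ``every input label of $\pi_e(\tau)$ is preserved''. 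Thus edge-checkability of $\m{T}^{(j+1)}$ reduces to the purely complex-theoretic claim that, for a vertex-set $\tau$ over~$I$ whose vertices all lie in $\m{O}^{(j+1)}$, one has $\tau\in\m{O}^{(j+1)}$ iff $\pi_e(\tau)\in\m{O}^{(j+1)}$ for every channel $e$ of~$\varphi$. One direction holds since a face of a simplex of $\m{O}^{(j+1)}$ is a simplex of $\m{O}^{(j+1)}$. For the reverse direction, the key is that membership in~$\m{O}^{(j+1)}$ is certified by the channel-local conditions of the construction: P0 is a per-vertex condition, and P1--P2 are conditions attached to individual channels and self-loops of a closed simplex of~$\hlocal(H)$, so a vertex-set all of whose channel projections lie in $\m{O}^{(j+1)}$ is again in $\m{O}^{(j+1)}$. (Making this last step fully precise means checking that a vertex-set satisfying P0--P2 channel-by-channel extends to a facet of~$\m{O}^{(j+1)}$, an extension argument of the same local flavour as the one underlying $(t-1)$-independence.)

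Granting~(a) and~(b), the induction is routine. For $t=1$, Theorem~\ref{strspeedup} applied to $\m{T}=\m{T}^{(0)}$ --- legitimate since $\m{T}$ is $0$-independent, edge-checkable, and $\hlocal(H)$ supports symmetry breaking in its channels --- yields that $\m{T}$ is solvable in $1$ round in~$\hlocal(H)$ iff $\m{T}^{(1)}$ is solvable in $0$ rounds. For the inductive step at parameter~$t$, first apply Theorem~\ref{strspeedup} once to $\m{T}^{(0)}$ (using $(t-1)$-independence and edge-checkability of~$\m{T}$) to get ``$\m{T}$ solvable in $t$ rounds $\iff$ $\m{T}^{(1)}$ solvable in $t-1$ rounds''; then invoke the induction hypothesis at parameter $t-1$ for the task $\m{T}^{(1)}$, which by~(a) and~(b) satisfies $r$-independence for all $r\in\{0,\dots,t-2\}$ and is edge-checkable, to get ``$\m{T}^{(1)}$ solvable in $t-1$ rounds $\iff$ $(\m{T}^{(1)})^{(t-1)}=\m{T}^{(t)}$ solvable in $0$ rounds''; chaining the two equivalences concludes. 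Equivalently, unrolling: at step $j\in\{0,\dots,t-1\}$ one invokes Theorem~\ref{strspeedup} for $\m{T}^{(j)}$ to relate solvability in $t-j$ and $t-j-1$ rounds, which requires $(t-j-1)$-independence of~$\m{T}^{(j)}$; as $j$ runs over $\{0,\dots,t-1\}$ the exponent $t-j-1$ runs over $\{t-1,\dots,0\}$, matching the hypothesis exactly. The main obstacle is establishing fact~(b), i.e.\ that $\Phi^\star$ preserves edge-checkability --- concretely, that membership in $\m{O}^{(j+1)}$ is reconstructible from channel projections so that the reverse direction of the reduced complex-theoretic claim holds; the inheritance of local independence and the chaining of equivalences are then bookkeeping.
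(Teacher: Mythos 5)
Your proposal is correct and follows essentially the same route as the paper: iterate Theorem~\ref{strspeedup}, noting that local independence is inherited because the construction never changes $\m{I}$ (and independence depends only on $\m{I}$ and the model), and that edge-checkability is preserved because the defining conditions of the derived task are channel-local. The one step you flag as incomplete is exactly what the paper also asserts in one line: since the communication pattern in $\hlocal(H)$ is unique (the model is deterministic and P2 is vacuous there), P1 for a channel $e$ depends only on the vertices named in $e$, so a vertex-set whose channel projections all lie in $\m{O}^{(j+1)}$ satisfies every facet-defining condition and is itself a simplex of $\m{O}^{(j+1)}$.
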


\begin{proof}
    Thanks to Theorem~\ref{strspeedup}, it is sufficient to prove that the task $\m{T}'=(\m{I},\m{O}',\Delta')$ obtained by applying the construction defined by properties~P0-3 still satisfies the precondition in the statement of the theorem. First, $\m{T}'$ is edge-checkable. Indeed, the conditions specified by P0-3 are locally checkable, as the communication pattern is unique in $\hlocal(H)$, and Property~P2 is trivially satisfied in $\hlocal(H)$ as the model in undirected (all processes in a hyperedge $e$ play the same role, as far as the channel~$e$ is concerned). $\m{T}'$ satisfies the $(t-2)$-independence property w.r.t $\hlocal(H)$, simply because $\m{T}$ itself satisfies the $(t-2)$-independence property, and $\m{T}$ and $\m{T}'$ have the same input complexes~$\m{I}$. 
\end{proof}

For instance, let us consider $\hlocal(H)$ where  $H$ is a \emph{linear} hypergraph~\cite{KuhnZ2018}, that is, for every two distinct hyperedges $e$ and $f$, $|e \cap f| \leq 1$. Let us also assume that $H$ is locally isomorphic to a (linear) \emph{hypertree}. Formally, $H$ has no small cycle, that is, its girth is greater than $2t+1$. Linear hypertrees can be viewed as graphs where hyperedges are displayed as cliques (cf. Fig.~\ref{fig:hypertree} for an example). Note that any lower-bound for $\hlocal(H)$ holds for $\local(G)$ where $G$ is obtained from $H$ by replacing the hyperedges of~$H$ by cliques in~$G$. 
To show that the $r$-independence property holds for any $r \geq 0$ in such a hypergraph, let us fix an hyperedge $e \in E(H)$, and, for $i \in e$ and $f \in E_i \smallsetminus e$, let $N_r(i,f)$ denote the sub-hypertree induced by all the nodes of~$H$ at distance at most $r$ from node~$i$ via the edge~$f$. The sub-hypertrees $N_r(i,f)$ for $i\in e$ and $f \in E_i \smallsetminus e$ are vertex-disjoint, and thus the $r$-independence property is satisfied in absence of labeling, or with an edge-checkable labeling.   
The class of linear hypergraphs with large girth contains the graphs that are locally isomorphic to a regular tree, which is precisely the class of graphs on which Brandt's speedup theorem~\cite{Brandt19} has been originally proven.  However, linear hypergraphs form a much larger class, which includes graphs with short cycles, like the one depicted on Fig.~\ref{fig:hypertree}. 

\section{Conclusion}

In this paper, we have extended Brandt's speedup theorem from \local\/ to  round-based based communication models supporting full-information protocols, including many standard synchronous communication models in networks, and even asynchronous models such as \wflocal\/, and \wf\/ for 2 processes. 
Extending the speedup theorem to \wf\/ for more than 2 processes remains open. Our approach consisted in decomposing every subcomplex $\delta(\Xi(\St(i,v_i)))$ of the output complex, where $(i,v_i)$ is a vertex of~$\m{P}^{(t-1)}$, into collections of sets $\bb{S}_{i,e,E_i}$, one for each potential channel~$e$, and for each possible local communication pattern~$E_i$ of the underlying communication model~$\m{M}$. This approach is well suited to models like \local, \hlocal, or even \dyn\/ and \wf\/ for two processes, but it does not match the characteristics of $\wf$ for larger systems, essentially because \wf\/ does not satisfy the local independence property whenever $n>2$. Nevertheless, we believe that there is another way to decompose the complexes $\delta(\Xi(\St(i,v_i)))$, for all $(i,v_i)\in\m{P}^{(t-1)}$, that would provide a speedup theorem for model not satisfying local independence (e.g., \wf), but this decomposition still remains to be found.

\bibliographystyle{plainurl}

\begin{thebibliography}{10}

\bibitem{AlcantaraCFR19}
Manuel Alcantara, Armando Casta{\~{n}}eda, David Flores{-}Pe{\~{n}}aloza, and
  Sergio Rajsbaum.
\newblock The topology of look-compute-move robot wait-free algorithms with
  hard termination.
\newblock {\em Distributed Comput.}, 32(3):235--255, 2019.

\bibitem{AttiyaCHP19}
Hagit Attiya, Armando Casta{\~{n}}eda, Maurice Herlihy, and Ami Paz.
\newblock Bounds on the step and namespace complexity of renaming.
\newblock {\em {SIAM} J. Comput.}, 48(1):1--32, 2019.

\bibitem{AttiyaW04}
Hagit Attiya and Jennifer Welch.
\newblock {\em Distributed Computing: Fundamentals, Simulations, and Advanced
  Topics}.
\newblock Series on Parallel and Distributed Computing. Wiley, 2004.

\bibitem{Balliu0CORS19}
Alkida Balliu, Sebastian Brandt, Yi{-}Jun Chang, Dennis Olivetti, Mika{\"{e}}l
  Rabie, and Jukka Suomela.
\newblock The distributed complexity of locally checkable problems on paths is
  decidable.
\newblock In {\em 38th {ACM} Symposium on Principles of Distributed Computing
  (PODC)}, pages 262--271, 2019.

\bibitem{Balliu0EHMOS20}
Alkida Balliu, Sebastian Brandt, Yuval Efron, Juho Hirvonen, Yannic Maus,
  Dennis Olivetti, and Jukka Suomela.
\newblock Classification of distributed binary labeling problems.
\newblock In {\em 34th International Symposium on Distributed Computing
  (DISC)}, volume 179 of {\em LIPIcs}, pages 17:1--17:17. Schloss Dagstuhl -
  Leibniz-Zentrum f{\"{u}}r Informatik, 2020.

\bibitem{Balliu0HORS19}
Alkida Balliu, Sebastian Brandt, Juho Hirvonen, Dennis Olivetti, Mika{\"{e}}l
  Rabie, and Jukka Suomela.
\newblock Lower bounds for maximal matchings and maximal independent sets.
\newblock In {\em 60th {IEEE} Annual Symposium on Foundations of Computer
  Science (FOCS)}, pages 481--497, 2019.

\bibitem{Balliu0OS20}
Alkida Balliu, Sebastian Brandt, Dennis Olivetti, and Jukka Suomela.
\newblock How much does randomness help with locally checkable problems?
\newblock In {\em 39th {ACM} Symposium on Principles of Distributed Computing
  (PODC)}, pages 299--308, 2020.

\bibitem{Brandt19}
Sebastian Brandt.
\newblock An automatic speedup theorem for distributed problems.
\newblock In {\em 38th {ACM} Symposium on Principles of Distributed Computing
  (PODC)}, pages 379--388, 2019.

\bibitem{CastanedaFPRRT21}
Armando Casta{\~{n}}eda, Pierre Fraigniaud, Ami Paz, Sergio Rajsbaum, Matthieu
  Roy, and Corentin Travers.
\newblock A topological perspective on distributed network algorithms.
\newblock {\em Theor. Comput. Sci.}, 849:121--137, 2021.

\bibitem{CastanedaR10}
Armando Casta{\~{n}}eda and Sergio Rajsbaum.
\newblock New combinatorial topology bounds for renaming: the lower bound.
\newblock {\em Distributed Comput.}, 22(5-6):287--301, 2010.

\bibitem{CastanedaR12}
Armando Casta{\~{n}}eda and Sergio Rajsbaum.
\newblock New combinatorial topology bounds for renaming: The upper bound.
\newblock {\em J. {ACM}}, 59(1):3:1--3:49, 2012.

\bibitem{CasteigtsFQS11}
Arnaud Casteigts, Paola Flocchini, Walter Quattrociocchi, and Nicola Santoro.
\newblock Time-varying graphs and dynamic networks.
\newblock In {\em 10th International Conference on Ad-hoc, Mobile, and Wireless
  Networks (ADHOC-NOW)}, volume 6811 of {\em Lecture Notes in Computer
  Science}, pages 346--359. Springer, 2011.

\bibitem{Chang20}
Yi{-}Jun Chang.
\newblock The complexity landscape of distributed locally checkable problems on
  trees.
\newblock In {\em 34th International Symposium on Distributed Computing
  (DISC)}, volume 179 of {\em LIPIcs}, pages 18:1--18:17. Schloss Dagstuhl -
  Leibniz-Zentrum f{\"{u}}r Informatik, 2020.

\bibitem{ChangKP19}
Yi{-}Jun Chang, Tsvi Kopelowitz, and Seth Pettie.
\newblock An exponential separation between randomized and deterministic
  complexity in the {LOCAL} model.
\newblock {\em {SIAM} J. Comput.}, 48(1):122--143, 2019.

\bibitem{FajstrupGHMR16}
Lisbeth Fajstrup, Eric Goubault, Emmanuel Haucourt, Samuel Mimram, and Martin
  Raussen.
\newblock {\em Directed Algebraic Topology and Concurrency}.
\newblock Springer, 2016.

\bibitem{Ferreira04}
Afonso Ferreira.
\newblock Building a reference combinatorial model for manets.
\newblock {\em {IEEE} Netw.}, 18(5):24--29, 2004.

\bibitem{FischerGK17}
Manuela Fischer, Mohsen Ghaffari, and Fabian Kuhn.
\newblock Deterministic distributed edge-coloring via hypergraph maximal
  matching.
\newblock In {\em 58th {IEEE} Annual Symposium on Foundations of Computer
  Science (FOCS)}, pages 180--191. {IEEE} Computer Society, 2017.

\bibitem{FischerLP85}
Michael~J. Fischer, Nancy~A. Lynch, and Mike Paterson.
\newblock Impossibility of distributed consensus with one faulty process.
\newblock {\em J. {ACM}}, 32(2):374--382, 1985.

\bibitem{FraigniaudKP13}
Pierre Fraigniaud, Amos Korman, and David Peleg.
\newblock Towards a complexity theory for local distributed computing.
\newblock {\em J. {ACM}}, 60(5):35:1--35:26, 2013.

\bibitem{FraigniaudP20}
Pierre Fraigniaud and Ami Paz.
\newblock The topology of local computing in networks.
\newblock In {\em 47th International Colloquium on Automata, Languages, and
  Programming (ICALP)}, volume 168 of {\em LIPIcs}, pages 128:1--128:18.
  Schloss Dagstuhl - Leibniz-Zentrum f{\"{u}}r Informatik, 2020.

\bibitem{FraigniaudRT13}
Pierre Fraigniaud, Sergio Rajsbaum, and Corentin Travers.
\newblock Locality and checkability in wait-free computing.
\newblock {\em Distributed Comput.}, 26(4):223--242, 2013.

\bibitem{FraigniaudRT20}
Pierre Fraigniaud, Sergio Rajsbaum, and Corentin Travers.
\newblock A lower bound on the number of opinions needed for fault-tolerant
  decentralized run-time monitoring.
\newblock {\em J. Appl. Comput. Topol.}, 4(1):141--179, 2020.

\bibitem{GodardP16}
Emmanuel Godard and Eloi Perdereau.
\newblock k-set agreement in communication networks with omission faults.
\newblock In {\em 20th International Conference on Principles of Distributed
  Systems (OPODIS)}, volume~70 of {\em LIPIcs}, pages 8:1--8:17. Schloss
  Dagstuhl - Leibniz-Zentrum f{\"{u}}r Informatik, 2016.

\bibitem{GoosS16}
Mika G{\"{o}}{\"{o}}s and Jukka Suomela.
\newblock Locally checkable proofs in distributed computing.
\newblock {\em Theory Comput.}, 12(1):1--33, 2016.

\bibitem{GoubaultMT18}
{\'{E}}ric Goubault, Samuel Mimram, and Christine Tasson.
\newblock Geometric and combinatorial views on asynchronous computability.
\newblock {\em Distributed Comput.}, 31(4):289--316, 2018.

\bibitem{GrahamRS90}
Ronald~L. Graham, Bruce~L. Rothschild, and Joel~H. Spencer.
\newblock {\em Ramsey theory}, volume~20.
\newblock John Wiley and Sons, 1990.

\bibitem{Herlihy2013}
Maurice Herlihy, Dmitry~N. Kozlov, and Sergio Rajsbaum.
\newblock {\em Distributed Computing Through Combinatorial Topology}.
\newblock Morgan Kaufmann, 2013.

\bibitem{HerlihyS99}
Maurice Herlihy and Nir Shavit.
\newblock The topological structure of asynchronous computability.
\newblock {\em J. {ACM}}, 46(6):858--923, 1999.

\bibitem{HirvonenS2020}
Juho Hirvonen and Jukka Suomela.
\newblock Distributed algorithms.
\newblock Aalto University, Finland, 2020.

\bibitem{KormanKP10}
Amos Korman, Shay Kutten, and David Peleg.
\newblock Proof labeling schemes.
\newblock {\em Distributed Comput.}, 22(4):215--233, 2010.

\bibitem{KuhnZ2018}
Fabian Kuhn and Chaodong Zheng.
\newblock Efficient distributed computation of {MIS} and generalized {MIS} in
  linear hypergraphs.
\newblock {\em CoRR}, abs/1805.03357, 2018.

\bibitem{KuttenNP014}
Shay Kutten, Danupon Nanongkai, Gopal Pandurangan, and Peter Robinson.
\newblock Distributed symmetry breaking in hypergraphs.
\newblock In {\em 28th International Symposium on Distributed Computing
  (DISC)}, volume 8784 of {\em Lecture Notes in Computer Science}, pages
  469--483. Springer, 2014.

\bibitem{Linial92}
Nathan Linial.
\newblock Locality in distributed graph algorithms.
\newblock {\em {SIAM} J. Comput.}, 21(1):193--201, 1992.

\bibitem{MendesTH14}
Hammurabi Mendes, Christine Tasson, and Maurice Herlihy.
\newblock Distributed computability in byzantine asynchronous systems.
\newblock In {\em 46th ACM Symposium on Theory of Computing (STOC)}, pages
  704--713, 2014.

\bibitem{NaorS95}
Moni Naor and Larry~J. Stockmeyer.
\newblock What can be computed locally?
\newblock {\em {SIAM} J. Comput.}, 24(6):1259--1277, 1995.

\bibitem{Peleg00}
David Peleg.
\newblock {\em Distributed Computing: A Locality-Sensitive Approach}.
\newblock Discrete Mathematics and Applications. SIAM, 2000.

\bibitem{RozhonG20}
V{\'{a}}clav Rozhon and Mohsen Ghaffari.
\newblock Polylogarithmic-time deterministic network decomposition and
  distributed derandomization.
\newblock In {\em 52nd {ACM} Symposium on Theory of Computing (STOC)}, pages
  350--363, 2020.

\bibitem{SaksZ00}
Michael~E. Saks and Fotios Zaharoglou.
\newblock Wait-free k-set agreement is impossible: The topology of public
  knowledge.
\newblock {\em {SIAM} J. Comput.}, 29(5):1449--1483, 2000.

\bibitem{Suomela20}
Jukka Suomela.
\newblock Using round elimination to understand locality.
\newblock {\em {SIGACT} News}, 51(3):63--81, 2020.

\end{thebibliography}

\bigbreak
\appendix
\centerline{\bf \Large A P P E N D I X}

\section{The Topology of Distributed Computing}
\label{sec:distributed-tasks}

We consider a system of $n\geq 1$ independent autonomous processes labeled from $1$ to~$n$, exchanging information via some communication medium. It is convenient to express our framework using the language of combinatorial  topology, which enables to place various distributed models under the same umbrella of terminologies and concepts. We follow the general approach in~\cite{Herlihy2013}, and this section recalls the main characteristics  of this approach, including the central definition of distributed \emph{tasks}. 

\subsection{Elements of Topology}
\label{app:element-topo}

A  \emph{simplicial complex} is defined by a vertex set~$V$, and a collection $\m{K}$ of non-empty subsets of~$V$, closed by inclusion. That is, if $\sigma\in \m{K}$, then every non-empty set $\sigma'\subseteq \sigma$ belongs to~$\m{K}$. Every set in $\m{K}$ is called a \emph{simplex}. The set~$V$ is often clear from the context, in which case one merely refers to a complex by the collection~$\m{K}$, and to the vertex set of $\m{K}$ as $V(\m{K})$. 

The dimension of a simplex $\sigma$ is $|\sigma|-1$, hence a vertex is a simplex of dimension zero. A \emph{face} of a simplex~$\sigma$ is any simplex $\sigma'\subseteq \sigma$. A \emph{facet} is a simplex that is maximal, i.e., not included in any other simplices. Note that a complex can be described by the list of its facets. A complex is \emph{pure} is all its facets have the same dimension. The dimension of a pure complex is the dimension of any of its facets. 

A sub-complex of a complex~$\m{K}$ is a subset of $\m{K}$ that is a complex. The \emph{star} of a simplex $\sigma$ in a complex~$\m{K}$, denoted by $\St(\sigma)$, is the set of simplices of $\m{K}$ having $\sigma$ as a face. The star of $\sigma$ naturally induces a sub-complex of~$\m{K}$, composed of all simplices of~$\m{K}$ included in at least one simplex of~$\St(\sigma)$. This complex is merely the \emph{closure}, $\Cl(\St(\sigma))$, of the star of $\sigma$ in~$\m{K}$, where the closure of a set $S$ of simplices of~$\m{K}$, denoted by $\Cl(S)$, is the smallest simplicial subcomplex of $\m{K}$ that contains each simplex in~$S$. In fact, in this paper, we will abuse notation, and will refer to $\St(\sigma)$ as a complex, that is, $\St(\sigma)$  actually refers to~$\Cl(\St(\sigma))$. 

All complexes considered in this paper are \emph{chromatic}. That is, each of their vertices has the form $(i,x)$, where $i\in [n]=\{1,\dots,n\}$ is a process index (the ``color'' of the vertex), and $x$ is a value that depends on the context, and no simplices can contain two vertices with the same color. To avoid confusion when considering problems such as graph coloring, we refer to the index~$i\in[n]$ of a process as its \emph{name} (and not its color), and the non-empty index set~$I\subseteq [n]$ of a simplex $\sigma=\{(i,x_i):i\in I \}$ is denoted by $\name(\sigma)$. Given a non-empty set $I\subseteq [n]$, $\Sk_I(\m{K})$ denotes the \emph{skeleton} subcomplex of~$\m{K}$ composed of all simplices $\sigma\in\m{K}$ with $\name(\sigma)\subseteq I$.

A particular class of mappings  between simplicial complexes plays a crucial role in the topological framework applied to distributed computing: those preserving simplices.  Specifically, given two complexes $\m{K}$ and $\m{K}'$, a map $f:V(\m{K})\to V(\m{K}')$ is \emph{simplicial} if, for every $\sigma\in \m{K}$, $f(\sigma)\in\m{K}'$, where $f(\sigma)=\{f(v):v\in \sigma\}$. All maps considered in this paper apply to chromatic complexes, and are name-preserving, i.e., $f(i,x)=(i,y)$ for every $(i,x)\in V(\m{K})$. We say that such maps are chromatic. 

Given a finite set $X$ of values, the $(n-1)$-dimensional chromatic \emph{pseudo-sphere} induced by~$X$ is the complex $\m{S}_n(X)$ whose vertices are all pairs $(i,x)$ with $i\in[n]$ and $x\in X$, and every non-empty set $\{(i,x_i):i\in I\}$ of vertices, where $I\subseteq [n]$, forms a simplex. In short, 
\[
\m{S}_n(X)=\Big \{ \big \{(i,x_i):i\in I \big\} : (\varnothing\neq I\subseteq [n]) \wedge (\forall i\in I, x_i\in X)\Big\}.
\]

\subsection{Distributed States}

This section recalls how all possible states of a distributed system at a given time can be captured by a single combinatorial object, namely the \emph{state complex}. Specific instantiations of state complexes are the input complexes, the output complexes, and the protocol complexes, described further in the text. 

 At any point in time, all possible global states of a distributed system can be represented as an $(n-1)$-dimensional complex. The vertices of this complex are of the form $(i,s)$ where $i\in [n]$ is the name of a process, $s\in X$ is a state of process~$i$, and $X$ is the set of all possible states of a process. A non-empty set $\{(i,s_i):i\in I\}$ of such vertices, $I\subseteq [n]$, forms a simplex if the states~$s_i$, $i\in I$, are mutually compatible. Mutual compatibility is a notion which depends on the context (e.g., input complex or output complex), and on the communication model (e.g., protocol complexes), but it should soon appear clear further in the paper. In general, a \emph{state} complex of an $n$-process system with local states in $X$ is a pure $(n-1)$-dimensional sub-complex of the chromatic pseudo-sphere $\m{S}_n(X)$. 
 
 The fact that a vertex $(i,s)$ belongs to two different simplices $\sigma$ and $\sigma'$ with $\name(\sigma)=\name(\sigma')$ means that process~$i$ in local state~$s$ cannot distinguish the global state~$\sigma$ from the global state~$\sigma'$.  More specifically, even if  process~$i$ is aware that the system is in global state~$\sigma$ or $\sigma'$, it remains uncertain about the local state of every other process~$j$ such that $(j,s')\in \sigma$ and $(j,s'')\in \sigma'$ with~$s'\neq s''$. 

\subparagraph{Example.}

A system maintaining a clock $c_i\in \mathbb{F}_q$ at every process $i\in[n]$, with bounded drift~$d$ between processes, i.e., $(c_i-c_j) \bmod q \leq d$ for every $i,j\in[n]$, has a state complex $\m{K}\subseteq \m{S}_n(\mathbb{F}_q)$ where, for every non-empty $I\subseteq [n]$:
$
\{(i,c_i):i\in I\} \in \m{K} \iff \forall i,j\in I, \; (c_i-c_j) \bmod q \leq d.
$

\subsection{Input and Output Complexes, and Distributed Tasks}
\label{subsec:distributed-tasks}

This section recalls the important notion of \emph{task}, which formalizes the typical ``functions'' to be computed in the distributed setting. A task is defined by three objects: the input complex, the output complex, and an input-output specification.   

\subparagraph{Input Complex.} 

Let  $\mathbb{I}$ be a finite set, whose elements are called input values. An input complex~$\m{I}$ with values in~$\mathbb{I}$  is a pure $(n-1)$-dimensional  sub-complex of the $(n-1)$-dimensional chromatic pseudo-sphere $\m{S}_n(\mathbb{I})$. 
For instance, for binary consensus, $\mathbb{I}=\{0,1\}$, and all sets $\big \{(i,v_i): i\in I \big\}$ with $I\in[n]$ and  $v_i\in \{0,1\}$ for every $i\in I$, are simplices of~$\m{I}$, i.e., $\m{I}=\m{S}_n(\{0,1\})$. Note that it is often the case that $\m{I}=\m{S}_n(\mathbb{I})$, as in consensus. However, other problems  assumes $\m{I}\subset \m{S}_n(\mathbb{I})$. A typical example is $(m,k)$-renaming, for $m\geq k\geq n$, in which the $n$ processes are given as input $n$ distinct integers in the set~$[m]$, and must output $k$ distinct integers in~$[k]$. In this case a non-empty set $\{(i,v_i):i\in I\}$ with $I\subseteq [n]$ is a simplex of $\m{I}$ if $v_i\in [m]$ for every $i\in I$, and $v_i\neq v_j$ for every distinct $i,j\in I$. 

\subparagraph{Output Complex.} 

An output complex $\m{O}$ is a pure $(n-1)$-dimensional sub-complex of the $(n-1)$-dimensional chromatic pseudo-sphere $\m{S}_n(\mathbb{O})$  induced by  a finite set $\mathbb{O}$, whose elements are called output values.
For instance, in the case of binary consensus, $\mathbb{O}=\{0,1\}$, and a non-empty set $\big \{(i,v_i): i\in I \big\}$ with $I\subseteq [n]$,  is a simplex of $\m{O}$ if $v_i\in \mathbb{O}$ for every $i\in I$, and $v_i=v_j$ for every two $i,j\in I$. Note that the output complex has generally more structure than the input complex. 

\subparagraph{Input-Output Relation.} 

The input-output relation specifies, for every input state, the collection of  output states that are legal w.r.t.~this input.  Specifically, the input-output specification is a function $\Delta:\m{I}\to 2^{\m{O}}$ that is returning, for every input simplex $\sigma\in\m{I}$, a non-empty collection $\tau_1,\dots,\tau_k$ of output simplices, $k\geq 1$, such that $\name(\tau_i)=\name(\sigma)$ for every $i=1,\dots,k$. As $\Delta$ is name-preserving, it is called \emph{chromatic}.  
For instance, in the case of binary consensus, $\Delta$~maps every input simplex $\sigma=\{(i,v_i):i\in I\}\in\m{I}$ to the output simplices $\tau_0=\{(i,0):i\in I\}$ and $\tau_1=\{(i,1):i\in I\}$ whenever there are two processes $i,j\in I$ with $v_i\neq v_j$ in~$\sigma$. Instead, for every $x\in\{0,1\}$, $\Delta$~maps the simplex $\sigma=\{(i,x):i\in I\}\in \m{I}$ to the simplex $\tau_x=\{(i,x):i\in I\}\in \m{O}$. 

\bigbreak

We now have all the ingredients to define what is a \emph{task}. Note that this definition is independent of the communication model.

\begin{definition}\label{def:task}
A \emph{task} in a $n$-process system is a triple $(\m{I},\m{O},\Delta)$, where $\m{I}$ is the input complex, $\m{O}$ is the output complex, both of dimension $n-1$, and $\Delta:\m{I}\to 2^{\m{O}}$ is the input-output specification. 
\end{definition}

\subsection{Examples}

For the readers more familiar with distributed graph problems (e.g., coloring, MIS, etc.) than with distributed system problems (e.g., consensus, renaming, etc.), let us define the task $(\m{I},\m{O},\Delta)$ of $k$-coloring the vertices of an $n$-node cycle~$C_n$, with identifiers (IDs) in $[N]=\{1,\dots,N\}$, $N\geq n$.  A vertex of the input complex~$\m{I}$ is a pair of the form $(i,(x,\{y,z\}))$, with $i\in [n]$, $\{x,y,z\}\subseteq [N]$, and $x\neq y\neq z\neq x$. The semantics of such a vertex is that process~$i$ is handling some node of $C_n$ which received $x$ as ID, and the  neighbors of this node in~$C_n$ received IDs $y$ and $z$.  A non-empty set $\{(i,(x_i,\{y_i,z_i\})):i\in I\}$ with $I\subseteq [n]$ is a simplex of~$\m{I}$ if the nodes of~$C_n$ can be assigned distinct IDs in~$[N]$ such that,  for every $i\in I$, a node~$u$ has ID~$x_i$, and $\{y_i,z_i\}$ is the set of IDs of the two neighbors of~$u$.  
A vertex of the output complex~$\m{O}$ is a pair of the form $(i,(x,\{y,z\},c))$, with $c\in [k]=\{1,\dots,k\}$. Any non-empty set $\{(i,(x_i,\{y_i,z_i\},c_i)):i\in I\}$ of vertices forms a simplex of the output complex if the same condition as for the simplices of~$\m{I}$ holds, and, in addition, for every $i\neq j$ in~$I$, $c_i\neq c_j$ whenever $x_j\in \{y_i,z_i\}$ or $x_i\in \{y_j,z_j\}$. Let $\sigma=\{(i,(x_i,\{y_i,z_i\})):i\in I\}$ in~$\m{I}$. 
A simplex $\tau=\{(i,(x'_i,\{y'_i,z'_i\},c_i)):i\in I\}$ of~$\m{O}$ satisfies $\tau\in \Delta(\sigma)$ if, for every $i\in I$, $(x'_i,\{y'_i,z'_i\}=(x_i,\{y_i,z_i\})$. 

Some models, e.g., the \local\/ model~\cite{Linial92,Peleg00}, implicitly encode a network~$G$ in the model itself. Every process~$i$ is actually viewed as located at a node~$i$ of~$G$, which is exchanging information with the neighbors of node~$i$ in~$G$ only. In this case, the relevant task $(\m{I}',\m{O}',\Delta')$  is to $k$-color the vertices of the specific graph~$G$ itself, with the additional constraints that the matching between the processes and the nodes of~$G$ is fixed (but not known to the processes). Therefore, the vertices of~$\m{I}'$ are merely defined as pairs $(i,x)$ with $x\in[N]$, and a non-empty set $\{(i,x_i):i\in I\}$ with $I\subseteq [n]$ is a simplex of~$\m{I}'$ if $x_i\neq x_j$ for every two indices $i,j\in I$. The output complex~$\m{O}'$ has vertices $(i,c)$ with $c\in[k]$, and a non-empty set $\{(i,c_i):i\in I\}$ with $I\subseteq [n]$ is a simplex of~$\m{O}'$ if, for every $i,j\in I$, $c_j\neq c_i$ whenever $j\in N_G(i)$. For every $\sigma\in \m{I}$, $\Delta'(\sigma)=\{\tau\in \m{O}': \name(\tau)=\name(\sigma)\}$.

These two variants of $k$-coloring, i.e., the tasks $(\m{I},\m{O},\Delta)$  and $(\m{I}',\m{O}',\Delta')$ above, are actually independent of the communication model. In particular, one can aim at solving the task $(\m{I},\m{O},\Delta)$ for a graph~$G$ but in the \local\/ model with network~$H$, even if $G$ and $H$ are different. Conversely,  one can aim at solving the task $(\m{I}',\m{O}',\Delta')$ in a communication model different from the \local\/ model with network~$G$. 

Finally, observe that, for the task  $(\m{I}',\m{O}',\Delta')$ to be non-trivial in the \local\/ model with network~$G$, it is required that the processes do not use their names for choosing their colors, as otherwise the processes could simply agree in advance on a specific $k$-coloring of $G$. For instance, if $G$ is the $n$-node path where every node~$i=2,\dots,n-1$ is adjacent to nodes $i+1$ and $i-1$, even $2$-coloring is trivial if processes can use their name: process~$i$ merely outputs $i\bmod 2$. To be relevant, the algorithm must therefore be \emph{name-independent}. Such a constraint appears in various distributed computing settings, e.g., when aiming at solving the renaming task. We came back to the delicate issue \emph{names vs. identifiers} in the section \ref{sec:general-framework}. 

\section{Proof of Theorem~\ref{theo:iff-general}}
\label{app:iff-general}

Let us assume that there exists a $t$-round algorithm $\alg$ solving $(\m{I},\m{O},\Delta)$ in model~$\m{M}$. This algorithm produces an output value $\alg(s)$ for every possible local state~$s$ of any process~$i$ after $t$~rounds. By definition of the local state~$s$,  the pair $(i,s)$ is a vertex of $\m{P}^{(t)}$. Let us define the map $\delta:\m{P}^{(t)}\to \m{O}$ as
\[
\delta(i,s)\triangleq (i,\alg(s)).
\]
By definition, $\delta$ is chromatic, and name-independent. Moreover, $\alg$ guarantees that, starting from any legal global input state, i.e., any simplex~$\sigma\in \m{I}$, a legal global output state is produced, i.e., a simplex~$\tau\in\m{O}$ is produced. The simplex is precisely the set $\tau={\{(i,\alg(s_i)):i\in \name(\sigma)\}}$ where $s_i$ is the local state of process~$i\in\name(\sigma)$ after $t$~rounds. It follows that ${\{\delta(i,s_i) :i\in \name(\sigma)\}\in\m{O}}$, and therefore $\delta$ is simplicial. Let us now consider a closed simplex $\sigma\in\m{I}$, and an execution of~$\alg$ in which all processes of~$\sigma$ communicate solely among themselves. By definition of~$\Xi$, after $t$~rounds, every process~$i\in \name(\sigma)$ ends up in a state~$s_i$ such that ${\{(i,s_i):i\in\name(\sigma)\}}$ is a simplex of~$\Xi^t(\sigma)$. $\alg$ then outputs $\alg(s_i)$ at every process~$i$. Since $\alg$ is correct, the resulting output simplex ${\tau=\{(i,\alg(s_i)):i\in\name(\sigma)\}}$ forms a global output state which is legal w.r.t.~the global input state~$\sigma$. In other words, $\tau\in \Delta(\sigma)$, from which it follows that $\delta (\Xi^t(\sigma))\subseteq \Delta(\sigma)$, as desired. 

Conversely, let us assume that there exists a chromatic name-independent simplicial map $\delta:\m{P}^{(t)}\to \m{O}$ such that, for every closed $\sigma\in \m{I}$, $\delta (\Xi^t(\sigma))\subseteq \Delta(\sigma)$. Let us define the $t$-round algorithm~$\alg$ as
\[
\alg(s)\triangleq \val \big (\delta(i,s)\big), 
\]
where $\val(i,x)=x$ for every vertex $(i,x)$ of any chromatic complex ($x$ is the value of that vertex). Since $\delta$ is name-independent, $\alg$ is well defined. Moreover, since $\delta$ is simplicial, the image $\tau=\{(i,\alg(s_i)):i\in \name(\sigma)\}$ of a global state $\{(i,s_i):i\in \name(\sigma)\}$ of the system after $t$~rounds starting from a global input state~$\sigma$ is a legal global output state. Moreover, since $\delta (\Xi^t(\sigma))\subseteq \Delta(\sigma)$, if the processes in $\name(\sigma)$ have communicated solely among themselves, then $\tau\in\Delta(\sigma)$, that is, $\tau$ is a legal output state w.r.t.~the input state~$\sigma$, as desired.
\qed

\section{Proof of Lemma~\ref{lem:locdecvsedgdec}}
\label{app:lem:locdecvsedgdec}

Let us define the task $\m{T}'=(\m{I},\m{O}',\Delta')$. Note that, for every $i\in [n]$, the channels incoming to~$i$ are all the hyperedges $e\in E_i$, where $E_i=E_H(i)$ is the set of hyperedges of~$H$ containing~$i$. For every $i\in[n]$, a pair $(i,K_i)$ is a vertex of~$\m{O}'$ if $K_i$ is a node-labeled sub-hypergraph of~$H$ induced by the processes in~$J_i=\cup_{e\in E_i}e$ such that, if $\ell_{i,j} = (x_{i,j},y_{i,j}) \in \val(\m{I}) \times \val(\m{O})$ denotes the label of process~$j\in J_i$, then 
\[
\{(j,y_{i,j}):j\in J_i\} \in \Delta(\{(j,x_{i,j}) : j \in J_i \}). 
\]
Let $\sigma = \{(i,x'_i):i\in I\}$ be a closed simplex of $\m{I}$. A set $\{(i,K_i):i\in I\}$ of vertices of~$\m{O}'$ is a simplex of~$\Delta'(\sigma)$ if (1)~$x_{i,i} = x'_i$ for every $i \in I$, and (2)~for every $e\in E(H)$, and for every two processes $i,j\in I\cap e$, $\pi_e(K_i)=\pi_e(K_j)$.  

We first show that $(\m{I},\m{O}',\Delta')$ is edge-checkable. Let $\sigma = \{(i,x'_i):i\in I\}$ be a closed simplex of~$\m{I}$, and let $\tau = \{(i,K_i) : i \in I\}$. 
For one direction, let us assume that $\tau  \in \Delta'(\sigma)$. We show that, for every $e\in E(H)$, $\pi_{e}(\tau)\in \Delta' (\pi_{e}(\sigma))$. 
Let $e\in E(H)$. By definition of $\Delta'$ we have that, for every $e' \in E(H)$, for every $i,j \in e \cup e'$, $\pi_e(K_i) = \pi_e(K_j)$ and $x_{i,i} = x'_i$. It follows that $\pi_e(\tau) \in \Delta'(\pi_e(\sigma))$, as desired. 
For the reciprocal, let us assume that $\pi_{e}(\tau)\in \Delta'(\pi_e(\sigma))$ for every $e\in E(H)$. 
By definition of $\Delta'$, we have that, for every $i,j \in e$, $\pi_e(K_i)=\pi_e(K_j)$, and $x_{i,i} = x'_i$. This being true for all hyperedges in~$H$,  $\tau \in \Delta'(\sigma)$ holds, and thus $\m{T}'$ is indeed edge-checkable.

Second, we show how to construct a solution for~$\m{T}'$ given any solution for~$\m{T}$, in a single round. 
Let $\tau$ be a closed simplex of $\m{O}'$ in $\Delta(\sigma)$ for some closed simplex $\sigma \in \m{I}$. In one round, every process~$i$ can collect the values and inputs of all the processes~$j\in J_i=\cup_{e\in E_i}e$. By construction, the collection of resulting labeled sub-hypergraphs of~$H$ induced by the processes in $J_i$, for $i \in \name(\sigma)$, forms a valid solution in $\Delta'(\sigma)$. 

Finally, we show how to construct a solution for~$\m{T}$ given any solution for~$\m{T}'$, in zero rounds. Let $\sigma = \{(i,x'_i):i\in I\} \in \m{I}$ be a closed simplex, and let $\tau' = \{(i,K_i):i\in I\} \in \Delta'(\sigma)$. 
We show that $\tau = \{(i,y_{i,i}):i\in I\} \in \Delta(\sigma)$. Since $(\m{I},\m{O},\Delta)$ is locally checkable, it is sufficient to prove that $\pi_{J_i}(\tau) \in \Delta(\pi_{J_i}(\sigma))$ for every $i \in I$, with $J_i=\cup_{e\in E_i}e$. Let $i\in[n]$. 
We have $\pi_{J_i}(\tau') \in \Cl(\Delta'(\sigma))$ merely because $\pi_{J_i}(\tau')$ is a face of $\tau'$. It follows that $\{\pi_e(K_j) : j \in e \} = \{\pi_e(K_i)\}$ for every $e \in E_i$. Therefore,  for every process $j\in J_i$, $x_{i,j} = x_{j,j} = x'_j$ and $y_{i,j} = y_{j,j}$. In particular, $y_{i,j} = y_{j,j}$ implies that 
\[
\pi_{J_i}(\tau) = \{(j,y_{i,j}) : j \in J_i\}. 
\]
By definition of the vertices in $\m{O}'$, we have $\{(j,y_{i,j}) : j \in J_i \} \in \Delta(\{(j,x_{i,j}) : j \in J_i\}$. Moreover $x_{i,j} = x'_j$. 
Therefore, $\pi_{J_i}(\tau) \in \Delta(\pi_{J_i}(\sigma))$ as desired, which concludes the proof.

\end{document}